        \def\theequation{\thesection.\arabic{equation}}
\newcommand{\ti}[1]{\tilde{#1}}
\newcommand{\mF}{{\mathcal F}}
\newcommand{\mH}{{\mathcal H}}
\newcommand{\mD}{{\mathcal D}}
\newcommand{\vf}{\varphi}
\newcommand{\al}{\alpha}
\newcommand{\be}{\beta}
\newcommand{\ga}{\gamma}
\newcommand{\om}{\omega}
\newcommand{\vth}{\vartheta}
\newcommand{\MatM}{ {\rm Mat}(M,\mathbb C) }
\newcommand{\mC}{\mathbb C}
\newcommand{\mZ}{\mathbb Z}
\newtheorem{predl}{Proposition}[section]
\newtheorem{lemma}{Lemma}[section]
\newenvironment{proof}{\par\noindent{\bf Proof.}}{\hfill$\scriptstyle\blacksquare$}
\def\beq{\begin{equation}}
\def\eq{\end{equation}}
\def\p{\partial}
\newtheorem{theor}{Theorem}
\newcommand{\mats}[4]{\left(\begin{array}{cc}{#1}&{#2}\\ {#3}&{#4}
\end{array}\right)}
\def\res{\mathop{\hbox{Res}}\limits}
\begin{document}

\setcounter{page}{1}

\begin{center}

\

\vspace{-0mm}




{\Large{\bf Elliptic  generalization of integrable q-deformed   }}

\vspace{3mm}

{\Large{\bf anisotropic Haldane-Shastry long-range spin chain }}

 \vspace{15mm}

 {\Large {M. Matushko}}
\qquad\quad\quad
 {\Large {A. Zotov}}

   \vspace{5mm}

{\em Steklov Mathematical Institute of Russian
Academy of Sciences,\\ Gubkina str. 8, 119991, Moscow, Russia}

   \vspace{3mm}

 {\small\rm {e-mails: matushko@mi-ras.ru, zotov@mi-ras.ru}}

\end{center}

\vspace{0mm}

\begin{abstract}
 We describe integrable elliptic q-deformed anisotropic long-range spin chain. The derivation is based
 on our recent construction for commuting anisotropic elliptic spin Ruijsenaars-Macdonald operators.
 We prove that the Polychronakos freezing trick can be applied to these operators, thus providing
 the commuting set of Hamiltonians for long-range spin chain constructed by means of the elliptic Baxter-Belavin
 ${\rm GL}_M$ $R$-matrix. Namely, we show that the freezing trick is reduced to a set of elliptic function identities, which are then proved.
 These identities can be treated as conditions for equilibrium position
 in the underlying classical spinless Ruijsenaars-Schneider model.
 Trigonometric degenerations are studied as well. 
  For example, in $M=2$ case
 our construction provides q-deformation for anisotropic XXZ Haldane-Shastry model.
 The standard Haldane-Shastry model and its Uglov's q-deformation based on
  ${\rm U}_q({\widehat {\rm  gl}_M})$ XXZ $R$-matrix are included into consideration by separate verification.
 %

\end{abstract}

\newpage



\section{Introduction: long-range spin chains and many-body systems}
\setcounter{equation}{0}

\subsection{Isotropic models}
The first example of integrable long-range spin chain was introduced by F.D.M. Haldane  and B.S. Shastry  \cite{HS1,HS2}. The Hamiltonian
 \beq\label{s01}
 \begin{array}{c}
  \displaystyle{
 H^{\rm{HS}}=
  \frac{1}{2}\sum\limits_{i\neq j}^N \frac{1-P_{ij}}{\sin^2(\pi(x_i-x_j))}
  }
 \end{array}
\eq
 describes pairwise interaction of $N$ spins being attached to
 equidistant points on a circle: $x_k=k/N$, $k=1,...,N$. Here $P_{ij}$ are the permutation operators (or spin exchange operators), which act on the Hilbert space
 $\mH=(\mC^M)^{\otimes N}$ by permuting $i$-th and $j$-th tensor components. For the ${\rm su}_2$ case $M=2$
 and
 \beq\label{s02}
 \begin{array}{c}
  \displaystyle{
  P_{ij}=\frac{1}{2}\sum\limits_{a=0}^3 \sigma_a^{(i)}\sigma_a^{(j)}\,,\qquad
  \sigma_a^{(i)}=\underbrace{1_2\otimes...1_2\otimes\sigma_a\otimes 1_2...\otimes 1_2  }_{\sigma_a\ {\rm is\ on\ the\ i-th\ place}}\in{\rm Mat}_{2^N}\,,
  }
 \end{array}
\eq
 where $\sigma_a$ are the Pauli matrices (spin $1/2$ operators), and $1_2=\sigma_0$ is the identity $2\times 2$ matrix. For arbitrary $M$ the permutation operator is given in (\ref{P12}). The Hamiltonian (\ref{s01})
 resembles\footnote{Both Hamiltonians are sums of permutation operators (with some
coefficients).} the one for the Heisenberg magnet $H^{Heis}=\sum_k P_{k,k+1}$ but the interaction is between all spins (not only neighbours). For this reason  (\ref{s01}) is called the long-range spin chain. The Heisenberg magnet
 is isotropic (or XXX) model since all $\sigma_a$ ($a=1,2,3$) enter the spin-exchange operators on equal footing. Due to the fact that (\ref{s01}) also depends on non-deformed permutation
 operators, it is natural to keep the terminology and call such models isotropic (or XXX).

 While the Heisenberg magnet can be described (and solved) by the quantum inverse scattering method\footnote{The models coming from RTT relations may describe non-local interaction as well. Examples are inhomogeneous spin chains.} (based on RTT-relations, commuting transfer-matrices and the Bethe ansatz technique), the origin
 of the Haldane-Shastry type models is different.
The Hamiltonian (\ref{s01}) can be included into a wide class of integrable models.
Integrable long-range ${\rm gl}_M$ spin chains on $N$ sites of the Haldane-Shastry type are defined by Hamiltonians of the form:
 \beq\label{s03}
 \begin{array}{c}
  \displaystyle{
 H^{\rm{XXX}}=
  \frac{g}{2}\sum\limits_{i\neq j}^N P_{ij}\,U(x_i-x_j)\in{\rm End}(\mH)\,,
  }
 \end{array}
\eq
 where $U(x)$ is a certain function, $g\in\mC$ is a constant parameter and
  $x_1,...,x_N$ is a special set of points. For example, the elliptic Weierstrass $\wp$-function $U(x)=\wp(x)$
  also provides integrable model, known as the Inozemtsev long-range spin chain \cite{Inoz}.
   Possible choices of  the function $U(x)$ come from
  the similarity \cite{HS2} between the Haldane-Shastry model (\ref{s01}) and
  the spin generalizations \cite{GH} of the Calogero-Moser-Sutherland models \cite{Calogero1}, which are defined
  by the Hamiltonian of the form:
 \beq\label{s04}
 \begin{array}{c}
  \displaystyle{
 H^{\rm{spin\, CM}}=\frac{\rm Id}2\sum\limits_{k=1}^N  \eta^2\p_{z_k}^2+
  \frac{1}{2}\sum\limits_{i\neq j}^N (\hbar^2{\rm Id}-\eta\hbar P_{ij})\,U(z_i-z_j)\,,\qquad {\rm Id}=1_{({\mC^M)}^{\otimes N}}\,,
  }
 \end{array}
\eq
 acting on $({\mC^M)}^{\otimes N}$-valued functions of $z_1,...,z_N$, where $\eta$ is the Planck constant and $\hbar$ is a coupling constant\footnote{It seems misleading to use notation
  $\hbar$ for the coupling constant, but $\hbar$ can be also considered as the Planck constant in
  the $R$-matrix classical limit (\ref{r052}). At the same time $\hbar$ plays the role of the coupling constant
  in (spin) many-body systems.}. The procedure relating (\ref{s04}) and (\ref{s03}) is called the Polychronakos freezing trick \cite{Polych1,Polych0}.
 Loosely speaking, it states that one should remove the terms with differential operators from the spin Calogero-Moser Hamiltonians and fix the positions of particles as equilibrium positions of the underlying
 spinless classical model.\footnote{Main observation by A.P. Polychronakos \cite{Polych1,Polych0} was that the quantum fluctuations of coordinates around their classical equilibrium positions decrease with growing of coupling constant, and thus contribute subleading terms in the spin-dependent part of the Hamiltonian. In the infinite coupling limit the full partition function factorizes into the partition function of the kinematic degrees of freedom times the partition function of the spin system. IN this way the exact spectrum of the spin system can be found.} In this way (\ref{s04}) is mapped to (\ref{s03}) \cite{CoSa,IS}. For example, in the Haldane-Shastry and Inozemtsev models
  $z_k=x_k$, and these positions are indeed equilibrium positions in the underlying classical mechanics. The Hamiltonian of the spinless Calogero-Moser model \cite{Calogero2} takes the form
 \beq\label{s05}
 \begin{array}{c}
  \displaystyle{
 h^{\rm CM}=\frac{1}2\sum\limits_{k=1}^N v_k^2-
  \frac{\nu^2}{2}\sum\limits_{i\neq j}^N U(z_i-z_j)\,,
  }
 \end{array}
\eq
 where $v_k$ are momenta (with the canonical Poisson brackets $\{v_i,z_j\}=\delta_{ij}$) and $\nu$ is the classical coupling constant.
 The set $z_k=x_k$, $k=1...N$ solves the system of equations
 \beq\label{s06}
 \begin{array}{c}
  \displaystyle{
 {\dot v}_k={\ddot z}_k=\nu^2\sum\limits_{j:j\neq i}U'(z_i-z_j)=0\,,\quad i=1,...,N
  }
 \end{array}
\eq
 for $U(x)=1/\sin^2(\pi x)$ and $U(x)=\wp(x)$. For the rational Calogero-Moser potential $U(x)=1/x^2+ x^2/2$ (with oscillator terms) the equilibrium positions are more complicated. The positions are given by zeros of Hermite polynomials \cite{Calogero3,OSa}. In this case one gets the Polychronakos-Frahm chain \cite{Polych1,Frahm}. At the same time it should be mentioned that the freezing trick is not a precise statement but rather
 a general recipe, which is needed to be proved for any concrete model.

For spin chains coming from the quantum inverse scattering method the commuting set of Hamiltonians appear from
RTT relations and (higher) commuting transfer-matrices. In contrast to this simple construction, the higher commuting Hamiltonians for the long-range spin chains are calculated not so straightforwardly. However, this problem can be solved
for rational and trigonometric models in several different ways. The first possibility is to use the Dunkl-Polychronakos type approach \cite{Polych1,FM}, the second way is to use hidden Yangian symmetry \cite{HHTBP,TH} (the spin chain
 Hamiltonians appear as the center elements of the Yangian algebra), and the third way is to
 use the quantum Lax pairs \cite{IS} and calculate the so-called total sum of powers of quantum Lax matrix.
 All these approaches however are not applicable to elliptic case. For the Inozemtsev chain the higher Hamiltonians
 were proposed in \cite{Inoz2}. These Hamiltonians were proved to commute with (\ref{s03}) but mutual commutativity still remains to be an open problem (see also \cite{FGL1}).

Like the Calogero-Moser-Sutherland models, the long-range spin chains can be extended
to other root systems (not only of $A_{N-1}$ type) \cite{BPS,FGL2} and to supersymmetric case \cite{BMB}.
Also, similarly to integrable many-body systems the long-range spin chains have applications in different
areas of theoretical and mathematical physics, see e.g. \cite{Serban}.

\subsection{Anisotropic models} The above mentioned Heisenberg magnet
 has integrable anisotropic generalizations (Landau-Lifshitz type models) given by $H^{\rm LL}=\sum_k\sum_a J_a\sigma_a^{(k)}\sigma_a^{(k+1)}$. When $J_1$, $J_2$ and $J_3$ are distinct the model is called XYZ. It is described by elliptic Baxter's $R$-matrix. The case $J_1=J_2$ is called XXZ (partially anisotropic), and the underlying $R$-matrix is trigonometric.

 The first example of anisotropic long-range spin chain was introduced by D. Uglov \cite{Uglov} (see also \cite{Lam,LPS} and \cite{HHTBP,TH}), and it is also a q-deformed
 model, which we discuss below. But before proceeding to
 q-deformed models let us describe a natural extension of the previously discussed long-range chains to anisotropic case.
 A general form for anisotropic ${\rm gl}_M$ model is as follows:
 \beq\label{s07}
 \begin{array}{c}
  \displaystyle{
 H^{\rm{anis}}=
  \frac{g}{2}\sum\limits_{i\neq j}^N
  \sum\limits_{a,b,c,d=1}^M e_{ab}^{(i)}e_{cd}^{(j)}
  U_{ab,cd}(x_i-x_j)\in{\rm End}(\mH)\,,
  }
 \end{array}
\eq
 where $e_{ab}^{(i)}$ is the standard matrix basis matrix
 $e_{ab}\in{\rm Mat}_M$ in the $i$-th tensor component of $\mathcal H$.
 The  Hamiltonian (\ref{s07}) becomes isotropic in the case
 $U_{ab,cd}(x_i-x_j)=\delta_{ad}\delta_{bc}\,U(x_i-x_j)$. Then
 (\ref{s07}) reproduces (\ref{s03}). The corresponding
 anisotropic spin Calogero-Moser Hamiltonian is as follows:
 \beq\label{s08}
 \begin{array}{c}
  \displaystyle{
 H^{\rm{anis\, CM}}=\frac{\rm Id}2\sum\limits_{k=1}^N  \eta^2\p_{z_k}^2+
  \frac{g}{2}\sum\limits_{i\neq j}^N
  \sum\limits_{a,b,c,d=1}^M e_{ab}^{(i)}e_{cd}^{(j)}
  U_{ab,cd}(z_i-z_j)\,.
  }
 \end{array}
\eq
 The classical elliptic models of this type were introduced in \cite{Polych2}. Later these models were rediscovered in the Hitchin framework to integrable systems \cite{LZ,LOSZ}. They were called the models of interacting (integrable) tops since
 in the anisotropic case even one-site model is non-trivial. It is
 a multidimensional Euler-Arnold integrable top.
 General systems of interacting tops including the rational and trigonometric cases were described through $R$-matrix data in
   \cite{GZ}. In \cite{GZ} the quantum models were studied in the context of quantum Lax pairs, and some examples of
   systems related to classical root systems were proposed as well.

  Summarizing, in the anisotropic case we have a wide class of models of interacting tops instead of the spin Calogero-Moser models in isotropic case. It is then natural to define the corresponding anisotropic long-range spin chains through (\ref{s07}). In this way the elliptic XYZ model was suggested in \cite{SeZ}. Its trigonometric limit provides XXZ analog of (\ref{s01}). In ${\rm gl_2}$ case it is as follows \cite{SeZ}\footnote{The particle spin model with the Hamiltonian (\ref{s08}) and interaction (\ref{s09}) was introduced in \cite{HW}. See also (\ref{s0911}) for 7-vertex deformation of (\ref{s09}).}:
 \beq\label{s09}
 \begin{array}{c}
  \displaystyle{
 H^{\rm{XXZ}}=
  \frac{g}{2}\sum\limits_{i\neq j}^N
  \frac{\cos(\pi(x_i-x_j))(\sigma_1^{(i)}\sigma_1^{(j)}+
  \sigma_2^{(i)}\sigma_2^{(j)})+\sigma_3^{(i)}\sigma_3^{(j)}
  }{\sin^2(\pi(x_i-x_j))}\,.
  }
 \end{array}
\eq
 We call this model anisotropic (XXZ) Haldane-Shastry model.
  The problem here is to compute higher commuting Hamiltonians. An attempt was made in \cite{SeZ,Z18}, where the next non-trivial Hamiltonian was evaluated from the $R$-matrix valued Lax pairs, but the problem of finding higher Hamiltonians and proving their commutativity was not solved.

\subsection{q-deformed models} The term q-deformation is usually used, when a generalization based on some kind of quantum group structure is discussed. While integrable many-body systems of the Calogero-Moser type are constructed in terms of Lie algebras \cite{Calogero1,Calogero2}, their lift to the classical or quantum Lie group level corresponds to the Ruijsenaars-Schneider systems \cite{Ruij0,Ruij}. In order to apply the approach described above one should deal with the spin generalizations of the Ruijsenaars-Schneider models. This generalization was introduced in \cite{KrichZ} at the level of classical mechanics. However, the Hamiltonian description and its quantization in general (elliptic) case is still an open problem. At the same time much progress was achieved in studies of the trigonometric models \cite{AO}.

The first q-deformed long-range spin chain was proposed in \cite{Uglov} using a different approach motivated by studies of Hecke algebras and results of \cite{HHTBP,TH}.
The Uglov's construction was revisited and clarified recently in
\cite{Lam,LPS}. Main idea is as follows. Consider the set of (commuting) Macdonald operators
($k=1,\dots,N$):
\begin{equation}\label{Macd}
 D_k^{\rm Macd}=\sum\limits_{\substack{|I|=k}}A_I\prod_{i\in I}q^{-y_i\p_{y_i}}\,,\quad A_I=\left(\frac{2\pi\imath}{t-1}\right)^{k(N-k)}
 \prod\limits_{\substack{i\in I \\ j\notin I}}\frac{ty_j-y_i}{y_j-y_i}\,,
 \quad y_k=e^{2\pi\imath z_k}\,,
\end{equation}
which are also Hamiltonians of the quantum trigonometric Ruijsenaars-Schneider model \footnote{(\ref{Macd}) differs from the definition in \cite{Macd} by changing the parameters $q\to q^{-1}$ and $t\to t^{-1}$ and up to a constant factor.}. Uglov suggested a q-deformed version of the construction from \cite{HHTBP} (see also 
\cite{LPS}).
The result is that
the Macdonald operators (\ref{Macd}) admit anisotropic spin generalization of the form:
\begin{equation}\label{spMacd}
 \mD_k^{\rm XXZ}=\sum\limits_{\substack{|I|=k}}A_I\,{\bf R}_I \Big(\prod_{i\in I}q^{-y_i\p_{y_i}}\Big){\bf R}_I^{-1}\,,
\end{equation}
where ${\bf R}_I$ are certain products of ${\rm GL}_2$ XXZ $R$-matrices
(we give explicit form for ${\bf R}_I$ below).
Main statement is that $\mD_k^{\rm XXZ}$ also mutually commute. Then, using a kind of freezing trick, it was shown in \cite{Uglov} (following idea from \cite{HHTBP}) that a set of commuting long-range spin chain Hamiltonians can be deduced from
(\ref{spMacd}), where the positions $z_j$ are fixed as $x_j=j/N$.
These are equilibrium positions in the classical trigonometric Ruijsenaars-Schneider model \cite{Ruij3}, see also \cite{CoSa,RaSa}. In this
way the described above recipe for the freezing trick is performed as was recently shown by J. Lamers, V. Pasquier and D. Serban \cite{LPS}.

\subsection{Purpose and plan of the paper}

\paragraph{Purpose of the paper} is to introduce commuting set of
Hamiltonians for long-range spin chain based on the elliptic ${\rm GL}_M$ $R$-matrix \cite{Baxter}. The construction is similar to the one described above for the q-deformed Haldane-Shastry model
based on XXZ $R$-matrix. It uses recently found (commuting) set of
anisotropic generalizations of elliptic Ruijsenaars-Macdonald operators \cite{MZ}. In this case we have expression of the form (\ref{spMacd})
with\footnote{The minus in exponent in (\ref{spMacd0}) came from our first paper \cite{MZ}, where it appeared for consistency of
different evaluations.} $q=e^{2\pi\imath\eta}$:
\begin{equation}\label{spMacd0}
 \mD_k=\sum\limits_{\substack{|I|=k}}A_I\,{\bf R}_I \Big(\prod_{i\in I}e^{-\eta\p_{z_i}}\Big){\bf R}_I^{-1}\,.
\end{equation}
The coefficients $A_I$ in (\ref{Macd}) and (\ref{spMacd}) are as follows:
 \beq\label{s091}
 \begin{array}{c}
  \displaystyle{
A_I=\prod\limits_{\substack{i\in I \\ j\notin I}}\phi(z_j-z_i)\,,
  }
 \end{array}
\eq
 where $\phi(z)=\phi(\hbar,z)$ is the elliptic Kronecker function
 (\ref{a0962}), the parameter $\hbar$ is related to $t$ from (\ref{Macd}) as $t=e^{2\pi\imath\hbar}$ and $\phi(z)$ also depends on the elliptic modulus $\tau$ entering the definition of theta function
  (\ref{a0963}). The operators ${\bf R}_I$ are certain products
  of the elliptic Baxter-Belavin ${\rm GL}_M$ $R$-matrices (\ref{BB}). Precise expressions are give in the next Section. Expression (\ref{Macd})
  with the coefficients $A_I$ (\ref{s091}) provides the definition
  of the elliptic Ruijsenaars-Macdonald operators \cite{Ruij}:
\begin{equation}\label{Macd0}
 D_k=\sum\limits_{\substack{|I|=k}}A_I\prod_{i\in I}e^{-\eta\p_{z_i}}\,.
\end{equation}
  The spin many-body system related to operators (\ref{spMacd0}) in the elliptic case is presumably the relativistic model of interacting tops \cite{Z19} although
this statement needs further elucidation.

In \cite{Ruij} the definition of scalar operators (\ref{Macd0}) was given in a slightly different way\footnote{The definitions (\ref{Macd0}) and (\ref{Macd2}) are related through conjugation by a certain function.}:
\begin{equation}\label{Macd2}
 {D}_k'=\sum\limits_{\substack{|I|=k}}\sqrt{A_I}\Big(\prod_{i\in I}e^{-\eta\p_{z_i}}\Big)\sqrt{A_I'}\,,\quad
 A_I'=\prod\limits_{\substack{i\in I \\ j\notin I}}\phi(z_i-z_j)\,.
\end{equation}
Then the definition of spin operators is also modified:
\begin{equation}\label{spMacd2}
 {\mD}_k'=\sum\limits_{\substack{|I|=k}}\sqrt{A_I}\,{\bf R}_I \Big(\prod_{i\in I}e^{-\eta\p_{z_i}}\Big){\bf R}_I^{-1}\sqrt{A_I'}\,.
\end{equation}
Mutual commutativity of ${\mD}_k'$ is fulfilled as well (see \cite{MZ}).
We will call (\ref{Macd0}) the Ruijsenaars-Macdonald operators
in the Macdonald form, while (\ref{Macd2}) are the Ruijsenaars-Macdonald operators
in the Ruijsenaars form. And similarly for the spin operators
(\ref{spMacd0}) and (\ref{spMacd2}).

In this paper we show that the freezing trick applied to either (\ref{spMacd0})  or (\ref{spMacd2}) provides commutative set of
the long-range spin chain Hamiltonians if some set of elliptic functions identities is fulfilled.
Namely, following \cite{TH,Uglov} we consider expansion
$\mathcal{D}_k=\mathcal{D}_k^{[0]}+\eta\mathcal{D}_k^{[1]}+O(\eta^2)$
in $\eta$ of the operators $\mathcal{D}_k$ and introduce the operators
${\tilde H}_k$ through
$\mathcal{D}_k^{[1]}={\rm Id}\, D_k^{[1]}-{\tilde H}_k$. The expressions
${\tilde H}_k$ are free of differential operators. Being restricted to
the (equilibrium position) $z_k=x_k=k/N$ one obtains the set of Hamiltonians of the long-range spin chain. Their commutativity follows from a set of identities.
Then we prove these identities.
In the trigonometric limit the identities become quite simple, while in the elliptic case they are nontrivial. We also show that these identities provide the equilibrium position (in all Hamiltonian flows) in the underlying
classical elliptic Ruijsenaars-Schneider model.

{\bf Plan of the paper.} In Section \ref{sect2} we recall main result of \cite{MZ} including explicit expressions for the spin operators. In Section \ref{sect3} we study the freezing trick. Namely, we find identities, which lead to
commuting long-range spin chain Hamiltonians. These identities also guarantee that the set of points $x_j=j/N$ is an equilibrium position in the corresponding classical (and spinless) model. In the end of the Section \ref{sect3} we give detailed description of two first Hamiltonians and an example of $N=3$ sites case. In Section \ref{sect4} the set of elliptic function identities is proved. Section \ref{sect5} is devoted to
description of the Ruijsenaars formulation (\ref{spMacd2}) for the obtained results. We will show that the freezing trick works in this case as well and
provides the same set of the spin chain Hamiltonians.
In Section \ref{sect6} we explain which trigonometric limits are possible and show how the Uglov's q-deformed Haldane-Shastry model in ${\rm GL}_2$ case appear by reproducing its Hamiltonian in the form derived by J. Lamers in \cite{Lam}. Our construction is straightforwardly applicable for the model based on trigonometric $R$-matrix (\ref{y053}), (\ref{q203}). For example, when $c_7=0$ $R$-matrix (\ref{y053}) provides q-deformed version of the model (\ref{s09}) and its higher rank extensions. In order to obtain the
q-deformed Haldane-Shastry model one should use ${\rm U}_q({\widehat {\rm  gl}_2})$ $R$-matrix (\ref{q32}).
To include ${\rm U}_q({\widehat {\rm  gl}_M})$ $R$-matrix (\ref{q270}) into consideration we explain in the Appendix C that commutativity of the spin operators (\ref{spMacd2}) with this $R$-matrix holds true.
  In Section \ref{sect7} we discuss the non-relativistic limit and derive the first two nontrivial commuting Hamiltonians, which we obtained previously in \cite{SeZ} using $R$-matrix valued Lax pairs. A short summary
is given in the Conclusion. In the Appendix A some definitions and properties of elliptic functions and elliptic $R$-matrix are given. In the Appendix B
detailed expressions for the Hamiltonians in $N=4$ sites case are presented.

\section{Anisotropic spin Ruijsenaars-Macdonald operators}\label{sect2}
\setcounter{equation}{0}

\subsection{Elliptic $R$-matrix}
In this paper we deal with the elliptic ${\rm GL}_M$ Baxter-Belavin quantum $R$-matrix \cite{Baxter}.
It is given by
the expression (\ref{BB}). In $M=2$ case this is the Baxter's $R$-matrix for 8-vertex model:
 \beq\label{r724}
 \begin{array}{c}
  \displaystyle{
 R_{12}^\hbar(z)
 =\frac{1}{2}\Big(\vf_{00}\,\sigma_0\otimes\sigma_0
 +\vf_{01}\,\sigma_1\otimes\sigma_1
  +\vf_{11}\,\sigma_2\otimes\sigma_2
  +\vf_{10}\,\sigma_3\otimes\sigma_3\Big)\,,
  }
 \end{array}
\eq
 where $\sigma_a$, $a=0,1,2,3$ are the Pauli matrices ($\sigma_0=1_{2\times 2}$) and
 \beq\label{r826}
 \begin{array}{c}
  \displaystyle{
 \vf_{00}=\phi(z,\frac{\hbar}{2})\,,\quad
 \vf_{10}=\phi(z,\frac{1}{2}+\frac{\hbar}{2})\,,\quad
 \vf_{01}=e^{\pi\imath z}\phi(z,\frac{\tau}{2}+\frac{\hbar}{2})\,,\quad
 \vf_{11}=e^{\pi\imath z}\phi(z,\frac{1+\tau}{2}+\frac{\hbar}{2})\,,
 }
  \end{array}
 \eq
  so that it is $4\times 4$ matrix:
 \beq\label{r827}
 \begin{array}{c}
\displaystyle{
R_{12}^\hbar(z)=\frac{1}{2}
}
\left(
 \begin{array}{cccc}
 \vf_{00}+\vf_{10} & 0 & 0 & \vf_{01}-\vf_{11}
 \\
 0 & \vf_{00}-\vf_{10} & \vf_{01}+\vf_{11} & 0
  \\
 0 & \vf_{01}+\vf_{11} & \vf_{00}-\vf_{10} & 0
 \\
 \vf_{01}-\vf_{11} & 0 & 0 & \vf_{00}+\vf_{10}
 \end{array}
 \right)\,.
  \end{array}
 \eq
 Elliptic function notations (\ref{a0962}), (\ref{a08}) are used here. See Appendix for definitions.

 The $R$-matrix (\ref{BB}) satisfies the quantum Yang-Baxter equation
\beq\label{QYB}
\begin{array}{c}
\displaystyle{
    R^{\hbar}_{12}(u)  R^{\hbar}_{13}(u+v) R^{\hbar}_{23}(v) =
      R^{\hbar}_{23}(v) R^{\hbar}_{13}(u+v) R^{\hbar}_{12}(u)
      }
\end{array}\eq
and obeys the unitarity property
\beq\label{q03}\begin{array}{c}
    R^{\hbar}_{12}(z) R^\hbar_{21}(-z)= {\rm Id}\, \phi(\hbar,z)\phi(\hbar,-z)\stackrel{(\ref{a0964})}{=}
   {\rm Id} (\wp(\hbar)-\wp(z) ) \,.
\end{array}\eq
We also use the normalized $R$-matrix:
\beq\label{q04}
\begin{array}{c}
\displaystyle{
    {\bar R}^{\hbar}_{12}(z)  = \frac{1}{\phi(\hbar,z)} R^{\hbar}_{12}(z)\,.
    }
\end{array}\eq
Then (\ref{q03}) takes the form:
\beq\label{q05}
\begin{array}{c}
    {\bar R}^{\hbar}_{12}(z) {\bar R}^\hbar_{21}(-z)= {\rm Id}\,.
\end{array}\eq

\subsection{Commuting XYZ spin Ruijsenaars-Macdonald operators}

Following \cite{Uglov,LPS} we introduced in \cite{MZ} a set of anisotropic
spin Ruijsenaars-Macdonald operators:
\beq\label{q10}
\begin{array}{c}
  \displaystyle{
    {\mathcal D}_k=\sum\limits_{1\leq i_1<...<i_k\leq N}\left(\!\prod\limits^{N}_{\hbox{\tiny{$ \begin{array}{c}{ j=1 }\\{ j\!\neq\! i_1...i_{k-1} } \end{array}$}}}\!\phi(z_j-z_{i_1})\ \phi(z_j-z_{i_2})
    \ \cdots\
    \phi(z_j-z_{i_k})\right)\times
    }
    \\ \ \\
    \displaystyle{
    \times\left(
   \overleftarrow{\prod\limits_{j_1=1}^{i_1-1}} \bar{R}_{j_1 i_1}
   \overleftarrow{\prod\limits^{i_2-1}_{\hbox{\tiny{$ \begin{array}{c}{ j_2=1 }\\{ j_2\!\neq\! i_1 } \end{array}$}}}} \bar{R}_{j_2 i_2}
      \ \ldots\
 \overleftarrow{\prod\limits^{i_k-1}_{\hbox{\tiny{$ \begin{array}{c}{ j_k=1 }\\{ j_k\!\neq\! i_1...i_{k-1} } \end{array}$}}}} \bar{R}_{j_k i_k}
  \right)\times
   }
   \\ \ \\
     \displaystyle{
       \times p_{i_1}\cdot p_{i_2}\cdots p_{i_k}\times\left(
   \overrightarrow{\prod\limits^{i_k-1}_{\hbox{\tiny{$ \begin{array}{c}{ j_k\!=\!1 }\\{ j_k\!\neq\! i_{1}...i_{k-1}} \end{array}$}}}}\bar{R}_{i_k j_k}
   \overrightarrow{\prod\limits^{i_{k-1}-1}_{\hbox{\tiny{$ \begin{array}{c}{ j_{k-1}\!=\!1 }\\{ j_{k-1}\!\neq\! i_{1}...i_{k-2}} \end{array}$}}}}\bar{R}_{i_{k-1} j_{k-1}}
      \ \ldots\
  \overrightarrow{\prod\limits^{i_{1}-1}_ {j_1=1}} \bar{R}_{i_{1} j_{1}}\right),
 }
\end{array}\eq
where $k=1,...,N$, $\bar{R}_{ij}=\bar{R}_{ij}^\hbar(z_i-z_j)$ and $p_i$, $i=1,...,N$ are the shift operators
\beq\label{p_i}
(p_if)(z_1,z_2,\dots z_N)=\exp\left(-\eta \frac{\partial}{\partial z_i}\right)f(z_1,\dots,z_N)=f(z_1,\dots,z_i-\eta,\dots, z_N).
\eq
 The arrows in (\ref{q10}) mean the ordering in $R$-matrix products. For example,
$\overrightarrow{\prod\limits^{N}_ {j=1}} R_{ij}=R_{i,1}R_{i,2}...R_{i,N}$ and $\overleftarrow{\prod\limits^{N}_ {j=1}} R_{ji}=R_{N,i}R_{N-1,i}...R_{1,i}$. The notation ${\bf R}_I$ used
in (\ref{spMacd0}) and (\ref{spMacd2}) is as follows:
\beq\label{s200}
\begin{array}{c}
    \displaystyle{
   {\bf R}_I=\overleftarrow{\prod\limits_{j_1=1}^{i_1-1}} \bar{R}_{j_1 i_1}
   \overleftarrow{\prod\limits^{i_2-1}_{\hbox{\tiny{$ \begin{array}{c}{ j_2=1 }\\{ j_2\!\neq\! i_1 } \end{array}$}}}} \bar{R}_{j_2 i_2}
      \ \ldots\
 \overleftarrow{\prod\limits^{i_k-1}_{\hbox{\tiny{$ \begin{array}{c}{ j_k=1 }\\{ j_k\!\neq\! i_1...i_{k-1} } \end{array}$}}}} \bar{R}_{j_k i_k}\,,\quad I=\{i_1,...,i_k\}\,,\ |I|=k\,.
   }
\end{array}\eq
The $R$-matrix product in the brackets in the end of (the right hand side  of) (\ref{q10})
is equal to ${\bf R}_I^{-1}$ due to the unitarity (\ref{q05}), see details in \cite{MZ}.

The expressions  ${\mathcal D}_k$ (\ref{q10}) are  matrix-valued difference operators.
For example,
\beq\label{D1N}
\mathcal{D}_1=\sum_{i=1}^N \prod\limits_{\substack{j=1\\j\neq i}}^N\phi(z_j-z_i)\bar{R}_{i-1,i}\bar{R}_{i-2,i}\dots \bar{R}_{1,i}p_i \bar{R}_{i,1}\dots \bar{R}_{i,i-2}\bar{R}_{i,i-1}\,.
\eq
Any $R$-matrix $\bar{R}_{ij}$ acts non-trivially on the $i$-th and $j$-th tensor components of the Hilbert
space $\mH=(\mC^M)^{\otimes N}$. That is the quantum Yang-Baxter equation (\ref{QYB}) implies
\beq\label{QYB2}
\begin{array}{c}
\displaystyle{
    {\bar R}^{\hbar}_{ij}(z_i-z_j)  {\bar R}^{\hbar}_{ik}(z_i-z_k) {\bar R}^{\hbar}_{jk}(z_j-z_k) =
      {\bar R}^{\hbar}_{jk}(z_j-z_k) {\bar R}^{\hbar}_{ik}(z_i-z_k) {\bar R}^{\hbar}_{ij}(z_i-z_j)
      }
\end{array}\eq
for any distinct integers $1\leq i,j,k\leq N$ and
\beq\label{QYB3}
\begin{array}{c}
\displaystyle{
    [{\bar R}^{\hbar}_{ij}(u), {\bar R}^{\hbar'}_{kl}(v)]=0
      }
\end{array}\eq
for any distinct integers $1\leq i,j,k,l\leq N$. Thus, ${\mathcal D}_k$ (\ref{q10}) are ${\rm End}(\mH)$-valued
difference operators.

In \cite{MZ} the commutativity of operators (\ref{q10})
 \beq\label{a205}
  \begin{array}{c}
  \displaystyle{
   [{\mathcal D}_k,{\mathcal D}_l]=0\quad k,l=1,...,N
 }
 \end{array}
 \eq
 was proved to be equivalent to a set of identities, which were shown to be valid for the elliptic $R$-matrix (including some trigonometric and rational degenerations).
In the scalar case $M=1$, i.e. ${\bar R}^{\hbar}_{ij}={\rm Id}$. Then (\ref{q10}) turns
into the commuting set of the elliptic Ruijsenaars-Macdonald operators introduced in \cite{Ruij}:
\begin{equation}\label{Dscalar}
 D_k=\sum\limits_{\substack{|I|=k}}\prod\limits_{\substack{i\in I \\ j\notin I}}\phi(z_j-z_i)\prod_{i\in I}p_{i},\qquad k=1,\dots,N\,.
\end{equation}
The operators $D_k'$ and $\mD_k'$ in the Ruijsenaars form (both, scalar and spin) are obtained as given in (\ref{Macd2}) and (\ref{spMacd2}) with $A_I$ (\ref{s091}) and ${\bf R}_I$ (\ref{s200}). The operators $D_k'$ are those considered in \cite{Ruij}. These are the quantum Hamiltonians of the elliptic Ruijsenaars-Schneider model. We come back to discussion of $D_k'$ and $\mD_k'$ in Section 5.

\section{Freezing trick and commuting Hamiltonians for elliptic chain}\label{sect3}
\setcounter{equation}{0}

Here following ideas of \cite{Polych1,Uglov,LPS} we deduce the long-range spin chain Hamiltonians
and find conditions for their commutativity. These conditions are unified to a set of identities, which
are proved in the next Section. Here we describe the freezing trick.


\subsection{Classical spinless model: equilibrium position} Consider the classical analogues for elliptic Ruijsenaars-Macdonald operators (\ref{Dscalar}). These are the Hamiltonians of the classical elliptic Ruijsenaars-Schneider model:
\begin{equation}\label{s20}
 h_k=\sum\limits_{\substack{|I|=k}}\prod\limits_{\substack{i\in I \\ j\notin I}}\phi(z_j-z_i)\prod_{i\in I}e^{-v_i/c}\qquad k=1,\dots,N\,,
\end{equation}
where $c$ is a constant (light speed\footnote{The Hamiltonian comes as a classical limit $\kappa\rightarrow 0$ of (\ref{Dscalar}) with $\eta=\kappa/c$.}), and $v_1,...,v_N$ are momenta canonically conjugated to the positions of particles $z_1,...,z_N$:
\begin{equation}\label{s21}
 \{v_i,z_j\}=\delta_{ij}\,,\quad \{v_i,v_j\}=\{z_i,z_j\}=0\,.
\end{equation}
The classical integrability means $\{h_k,h_l\}=0$ for any $k,l=1,...,N$.
Each Hamiltonian $h_k$ provides its dynamics (Hamiltonian flow) through
the Hamiltonian equations
\begin{equation}\label{s22}
\displaystyle{
  \frac{dz_j}{dt_k}=\{h_k,z_j\}=\frac{\p h_k}{\p v_j}\,,\qquad \frac{dv_j}{dt_k}=\{h_k,v_j\}=-\frac{\p h_k}{\p z_j}\,.
  }
\end{equation}
The velocities are as follows:
\begin{equation}\label{s221}
\displaystyle{
  \frac{dz_m}{dt_k}=-\frac{1}{c}\sum\limits_{\substack{|I|=k\\ m\in I}}\prod\limits_{\substack{i\in I\\j\notin I}}\phi(z_j-z_i)\prod_{i\in I}e^{-v_i/c}\,.
  }
\end{equation}

Consider the following special point in the phase space:
\begin{equation}\label{s23}
\displaystyle{
  {\rm eq}:\ z_k=x_k:=\frac{k}{N}\,,\quad v_k=0\,,\quad k=1,...,N\,,
  }
\end{equation}
where ''eq'' comes from either equidistant or equilibrium.
Below we will see that restriction of equations of motion (\ref{s22}) to  (\ref{s23})
provides a kind of equilibrium position in the following sense. Denote
the set of classical velocities restricted to (\ref{s23}) as $u^{\{k\}}_j$:
\begin{equation}\label{s24}
\displaystyle{
  u^{\{k\}}_m=c\frac{dz_m}{dt_k}\Bigg|_{\rm eq}=-\sum\limits_{\substack{|I|=k\\ m\in I}}\prod\limits_{\substack{i\in I\\j\notin I}}\phi(x_j-x_i)\,.
  }
\end{equation}
Similarly, denote
\begin{equation}\label{s25}
\displaystyle{
  w^{\{k\}}_m=-\frac{dv_m}{dt_k}\Big|_{\rm eq}=\frac{\p h_k}{\p z_m}\Big|_{\rm eq}\,.
  }
\end{equation}
We will show that for any $k,l,m=1,...,N$ the velocities in $k$-th flow being restricted to (\ref{s23})
are equal to each other (see also trigonometric analogue of this statement (\ref{q39}))
\begin{equation}\label{s26}
\displaystyle{
  u^{\{k\}}_m=u^{\{k\}}_l
  }
\end{equation}
and
\begin{equation}\label{s27}
\displaystyle{
  w^{\{k\}}_m=0\,.
  }
\end{equation}
Let us mention that the accelerations ${\ddot z}_m$ also vanish on (\ref{s23}). Indeed, by differentiating  (\ref{s221}) with respect to time variable and then using (\ref{s26})-(\ref{s27}) one immediately gets
\begin{equation}\label{s222}
\displaystyle{
  \frac{d^2z_m}{dt^2_k}\Bigg|_{\rm eq}=0\,,\quad k,m=1,...,N
  }
\end{equation}
if (\ref{s26})-(\ref{s27}) hold true.
Although velocities do not vanish, they are equal to each other in any flow. So that
$z_i(t_k)=u^{\{k\}}_i t_k+x_k$ and $z_i(t_k)-z_j(t_k)=x_i-x_j$. Therefore, we may consider such configuration as an equilibrium position in the linearly moving frame.

\subsection{Quantum spinless model} The elliptic Ruijsenaars-Macdonald operators (\ref{Dscalar}) mutually commute:
\begin{equation}\label{s28}
\displaystyle{
[D_k,D_l]=0\,,\quad k,l=1,...,N\,.
  }
\end{equation}
Consider expansion of $D_k$ in variable $\eta$ (near $\eta=0$):
\begin{equation}\label{s29}
\displaystyle{
  D_k=D_k^{[0]}+\eta D_k^{[1]}+\eta^2 D_k^{[2]}+O(\eta^3)\,,\quad k=1,...,N\,.
  }
\end{equation}
Using also (\ref{p_i}), i.e.
\beq\label{s291}
p_i=1-\eta\frac{\partial}{\partial z_i} +\mathcal{O}(\eta^2)
\eq
we find that $D_k^{[0]}$ are the following functions
\begin{equation}\label{s30}
\displaystyle{
  D_k^{[0]}=\sum_{|I|=k}\prod\limits_{\substack{i\in I\\j\notin I}}\phi(z_j-z_i)\,,\quad k=1,...,N
  }
\end{equation}
and
\begin{equation}\label{s31}
\displaystyle{
  D_k^{[1]}=-\sum\limits_{m=1}^N\Big(\sum\limits_{\substack{|I|=k\\ m\in I}}\prod\limits_{\substack{i\in I\\j\notin I}}\phi(z_j-z_i)\Big)\p_{z_m}\,.
  }
\end{equation}
Similarly, $D_k^{[l]}$ are some $l$-th order differential operators. From the commutativity (\ref{s28}) we conclude
\begin{equation}\label{s32}
\displaystyle{
  [D_i^{[0]},D_j^{[1]}]+[D_i^{[1]},D_j^{[0]}]=0\,,\quad i,j=1,...,N
  }
\end{equation}
and
\begin{equation}\label{s33}
\displaystyle{
  [D_i^{[0]},D_j^{[2]}]+[D_i^{[1]},D_j^{[1]}]+[D_i^{[2]},D_j^{[0]}]=0\,,\quad i,j=1,...,N\,.
  }
\end{equation}

\subsection{Quantum spin model}
\paragraph{Expansion of spin operators.} Consider now the spin operators (\ref{q10}) similarly to the previous subsection.
Namely, consider expansion of $\mD_k$ in variable $\eta$ (near $\eta=0$):
\begin{equation}\label{s34}
\displaystyle{
  \mD_k=\mD_k^{[0]}+\eta \mD_k^{[1]}+\eta^2 \mD_k^{[2]}+O(\eta^3)\,,\quad k=1,...,N\,.
  }
\end{equation}
Since $\mD_k^{[0]}=\mD_k|_{\eta=0}$ and due to the unitarity (\ref{q05}) we have
\beq\label{s35}
\mathcal{D}_k^{[0]}={\rm Id}\sum_{|I|=k}\prod\limits_{\substack{i\in I\\j\notin I}}\phi(z_j-z_i)={\rm Id}\,D_k^{[0]}\,,
\eq
where ${\rm Id}$ is the identity matrix in ${\rm End}(\mH)$.
For the set of $\mD_k^{[1]}$ one gets
\beq\label{s36}
\begin{array}{c}
\displaystyle{
-\mathcal{D}_1^{[1]}={\rm Id}\sum_{i=1}^N \prod\limits_{\substack{j=1\\j\neq i}}^N\phi(z_j-z_i)\frac{\partial}{\partial z_i}+
}
\\ \ \\
\displaystyle{
+\sum_{i=1}^N \prod\limits_{\substack{j=1\\j\neq i}}^N\phi(z_j-z_i)\sum_{k=1}^{i-1}\bar{R}_{i-1,i}\dots \bar{R}_{k+1,i}\bar{R}_{k,i}\left(\frac{\partial}{\partial z_i}\bar{R}_{i,k}\right)\bar{R}_{i,k+1}\dots \bar{R}_{i,i-1}\,.
}
\end{array}
\eq
For $k=2,...,N-1$:
\beq\label{s37}
-\mathcal{D}_k^{[1]}={\rm Id}\sum_{l=1}^N \left(\sum\limits_{\substack{|I|=k\\ l\in I}}\prod\limits_{\substack{i\in I\\j\notin I}}\phi(z_j-z_i)\right) \frac{\partial}{\partial z_l}+ \text{(terms with $R$-matrices and their derivatives)}\,.
\eq
In particular, for $k=2$ from (\ref{q10}) we get:
\beq\label{s37b}
\begin{array}{c}
\displaystyle{
-\mathcal{D}_2^{[1]}={\rm Id}\sum_{m=1}^N \left(\sum\limits_{\substack{l=1\\l\neq m}}^N\prod\limits_{\substack{j=1\\j\neq m\\j\neq l}}^N\phi(z_j-z_m)\phi(z_j-z_l)\right)\frac{\partial}{\partial z_m}
+\sum\limits_{\substack{m,l=1\\m<l}}^N \prod\limits_{\substack{j=1\\j\neq m\\j\neq l}}^N\phi(z_j-z_m)\phi(z_j-z_l)\times
}
\\ \ \\
\displaystyle{
\times\left(\sum_{i=1}^{m-1}\bar{R}_{m-1,m}\dots \bar{R}_{i+1,m}\bar{R}_{i,m}\left(\frac{\partial}{\partial z_m}\bar{R}_{m,i}\right)\bar{R}_{m,i+1}\dots \bar{R}_{m,m-1}+\right.
}
\end{array}
\eq
$$
\begin{array}{c}
\displaystyle{
+\sum_{i=1}^{m-1}\bar{R}_{m-1,m}\dots \bar{R}_{1,m}\bar{R}_{l-1,l}\dots \bar{R}_{m+1,l}\bar{R}_{m-1,l}\dots\bar{R}_{i+1,l}\bar{R}_{i,l}\times}
\\ \ \\
{\displaystyle \times\left(\frac{\partial}{\partial z_l}\bar{R}_{l,i}\right)\bar{R}_{l,i+1}\dots\bar{R}_{l,m-1}\bar{R}_{l,m+1}\dots \bar{R}_{l,l-1}\bar{R}_{m,1}\dots \bar{R}_{m,m-1}+
}
\\ \ \\
\displaystyle{
\left.+\sum_{i=m+1}^{l-1}\bar{R}_{l-1,l}\dots\bar{R}_{i+1,l}\bar{R}_{i,l}\left(\frac{\partial}{\partial z_l}\bar{R}_{l,i}\right)\bar{R}_{l,i+1}\dots \bar{R}_{l,l-1}\right)\,.
}
\end{array}
$$
Finally, for $k=N$:
\beq\label{s38}
-\mathcal{D}_N^{[1]}={\rm Id}\sum_{i=1}^N \frac{\partial}{\partial z_i}\,.
\eq
It is easy to see from (\ref{s36})-(\ref{s38}) and (\ref{s31}) that
\beq\label{s39}
\displaystyle{
\mathcal{D}_k^{[1]}={\rm Id}\, D_k^{[1]}-{\tilde H}_k\,,\quad k=1,...,N\,,
}
\eq
where ${\tilde H}_k\in{\rm End}(\mH)$ are some matrix-valued functions, which contain $R$-matrix derivatives but do not contain differential operators.
By performing similar calculations for $\mD^{[2]}_k$ we come to
\beq\label{s40}
\displaystyle{
\mathcal{D}_k^{[2]}={\rm Id}\, D_k^{[2]}+\sum\limits_{l=1}^N A_{k,l}\p_{z_l} +B_{k}\,,
}
\eq
where $A_{k,l}$ and $B_k$ are again some matrix-valued functions free of differential operators. Expressions
$A_{k,l}$ contain one derivative of $R$-matrix, and expressions $B_k$ contain two derivatives of (one or two) $R$-matrices.

\paragraph{Commuting Hamiltonians.} Similarly to (\ref{s32})-(\ref{s33}) due to commutativity of $\mD_k$ (\ref{a205}) we have
\begin{equation}\label{s41}
\displaystyle{
  [\mD_i^{[0]},\mD_j^{[1]}]+[\mD_i^{[1]},\mD_j^{[0]}]=0\,,\quad i,j=1,...,N
  }
\end{equation}
and
\begin{equation}\label{s42}
\displaystyle{
  [\mD_i^{[0]},\mD_j^{[2]}]+[\mD_i^{[1]},\mD_j^{[1]}]+[\mD_i^{[2]},\mD_j^{[0]}]=0\,,\quad i,j=1,...,N\,.
  }
\end{equation}
Plugging expressions for $\mD_i^{[0]}$, $\mD_i^{[1]}$ and $\mD_i^{[2]}$ from (\ref{s35}), (\ref{s39}) and (\ref{s40}) into  (\ref{s41}) and (\ref{s42}) we see
that (\ref{s41}) is fulfilled due to $\mD_i^{[0]}$ is proportional to ${\rm Id}$ and due to (\ref{s32}).
Consider (\ref{s42}). Due to  (\ref{s33}) we get
\begin{equation}\label{s43}
\displaystyle{
 [{\tilde H}_i,{\tilde H}_j]-[D_i^{[1]},{\tilde H}_j]+
  [D_j^{[1]},{\tilde H}_i]+
  \sum\limits_{l=1}^N A_{i,l}(\p_{z_l}D_j^{[0]})-\sum\limits_{l=1}^N A_{j,l}(\p_{z_l}D_i^{[0]})=0\,.
  }
\end{equation}
Let us now restrict the latter equality to the point (\ref{s23}) and denote
\begin{equation}\label{s431}
\displaystyle{
 H_i={\tilde H}_i\Big|_{\rm eq}\,.
  }
\end{equation}
Notice that using notation (\ref{s24})
\begin{equation}\label{s44}
\displaystyle{
 -D_i^{[1]}\Big|_{\rm eq}=\sum\limits_{m=1}^N u^{\{i\}}_{m}\p_{z_m}\,.
  }
\end{equation}
Therefore, if (\ref{s26}) holds true, then all $D_i^{[1]}\Big|_{\rm eq}$ are proportional to $-D_N^{[1]}=\p_{z_1}+...+\p_{z_N}$. Thus, $[D_i^{[1]},{\tilde H}_j]$
  and $[D_j^{[1]},{\tilde H}_i]$ vanish on (\ref{s23}) if (\ref{s26}) holds.
  Also, the sums with coefficients $A$ in (\ref{s43}) vanish on (\ref{s23}) if (\ref{s27}) hold.
  In this way we come to commutativity
\begin{equation}\label{s45}
\displaystyle{
 [H_i,H_j]=0\,,\quad i,j=1,...,N-1
  }
\end{equation}
on (\ref{s26})-(\ref{s27}). $H_N=0$ is trivial.

To summarize we proved the following
\begin{predl}
Consider the expansions in $\eta$ of the scalar (\ref{s29}) and spin (\ref{s34}) operators. Then the expressions (\ref{s39}) ${\tilde H}_k={\rm Id}\, D_k^{[1]}-\mathcal{D}_k^{[1]}$ are matrix valued functions free of differential operators.  Being restricted to the point (\ref{s23}) one gets
the set (\ref{s431}) of commuting operators (\ref{s45}) if the identities (\ref{s26})-(\ref{s27}) hold true. The latter identities define the equilibrium position in the classical model (\ref{s20}).
\end{predl}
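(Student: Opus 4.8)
The statement is essentially an organizational summary of the computations already laid out in the subsection, so the plan is to assemble those pieces into a clean argument. First I would establish that $\tilde H_k$ is free of differential operators: starting from the expansion \eqref{s34} and using $p_i = 1 - \eta\,\partial_{z_i} + \mathcal O(\eta^2)$ from \eqref{s291}, I would expand each term of \eqref{q10} to first order in $\eta$. The key point is that the $R$-matrix products ${\bf R}_I$ and ${\bf R}_I^{-1}$ do not involve $\eta$ at all, so the only source of $\eta^1$ terms is either (a) differentiating one of the shift operators $p_i$, which contributes the scalar-looking piece ${\rm Id}\sum_{l}(\sum_{|I|=k, l\in I}\prod \phi)\,\partial_{z_l}$ after using unitarity \eqref{q05} to collapse ${\bf R}_I {\bf R}_I^{-1}$ when no derivative hits an $R$-matrix; or (b) the shift operator $p_{i}$ acting on an $R$-matrix standing to its right inside ${\bf R}_I^{-1}$ (equivalently, differentiating an $R$-matrix), which produces the finite matrix-valued terms visible in \eqref{s36}--\eqref{s37b} that carry \emph{no} surviving $\partial_z$. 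Collecting (a) gives exactly ${\rm Id}\,D_k^{[1]}$ by comparison with \eqref{s31}, and collecting (b) defines $\tilde H_k := {\rm Id}\,D_k^{[1]} - \mathcal D_k^{[1]}$, which is therefore manifestly a matrix-valued function with no differential operators. This proves the first assertion, and the case $k=N$ gives $\tilde H_N = 0$ since \eqref{s38} has no $R$-matrix terms.

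Next I would extract the commutativity. From $[\mathcal D_k,\mathcal D_l]=0$ \eqref{a205} expand in $\eta$; the order-$\eta^0$ identity is trivial since $\mathcal D_k^{[0]} \propto {\rm Id}$, the order-$\eta^1$ identity \eqref{s41} follows from \eqref{s35}, \eqref{s39} and the scalar identity \eqref{s32}, and the order-$\eta^2$ identity \eqref{s42} together with the scalar relation \eqref{s33} reduces — after substituting \eqref{s35}, \eqref{s39}, \eqref{s40} — to the relation \eqref{s43} among $\tilde H_i$, $D_i^{[1]}$ and the coefficient functions $A_{i,l}$. Now restrict \eqref{s43} to the equilibrium point \eqref{s23} and set $H_i = \tilde H_i|_{\rm eq}$. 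Using \eqref{s44}, if the identity \eqref{s26} holds then $D_i^{[1]}|_{\rm eq} = -\big(\sum_m u^{\{i\}}_m\big)\cdot\frac1N(\partial_{z_1}+\cdots+\partial_{z_N})$ up to the common constant, i.e. it is proportional to $-D_N^{[1]}$; since $D_N^{[1]} = -\sum_i \partial_{z_i}$ commutes with every $\tilde H_j$ (the latter being functions, not operators, so the commutator is $\propto \sum_l \partial_{z_l}(\tilde H_j)$ — I should note this still needs $\sum_l\partial_{z_l}\tilde H_j$ to vanish, which holds because $\tilde H_j$ depends only on differences $z_i - z_j$, a structural feature of \eqref{q10}), the terms $[D_i^{[1]},\tilde H_j]$ and $[D_j^{[1]},\tilde H_i]$ drop out of \eqref{s43} on the equilibrium locus. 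Likewise, by \eqref{s25} the coefficient $\partial_{z_l}D_i^{[0]}|_{\rm eq} = -w^{\{i\}}_l$ up to sign conventions, so if \eqref{s27} holds the two $A$-sums vanish as well. What remains of \eqref{s43} is $[H_i,H_j]=0$, giving \eqref{s45} for $i,j=1,\dots,N-1$, with $H_N=0$.

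Finally I would record the claim that \eqref{s26}--\eqref{s27} are precisely the equilibrium conditions for the classical model \eqref{s20}: by \eqref{s24}, \eqref{s26} says the velocities $dz_m/dt_k$ are $m$-independent at \eqref{s23}, and by \eqref{s25}, \eqref{s27} says $\dot v_m = 0$ in every flow — together these are exactly the conditions discussed around \eqref{s222} for \eqref{s23} to be an equilibrium in the linearly comoving frame. This part is a definitional restatement rather than a computation.

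\textbf{Main obstacle.} The substantive content is not the bookkeeping but the input \eqref{s26}--\eqref{s27} themselves — i.e.\ that the equidistant configuration $x_k = k/N$ really is an equilibrium of the elliptic Ruijsenaars-Schneider model in the required sense. The proposition takes these as hypotheses, so within its stated scope the hard step is the careful $\eta$-expansion of \eqref{q10}: one must verify that after using unitarity \eqref{q05} the only $\partial_z$-carrying part of $\mathcal D_k^{[1]}$ is the scalar ${\rm Id}\,D_k^{[1]}$, and — for the commutativity step — that $\tilde H_j$ and the $A_{k,l}$ are genuinely translation-invariant (functions of $z_i - z_j$ only), since that is what makes $[D_N^{[1]},\tilde H_j]=0$ and lets the equilibrium conditions \eqref{s26}--\eqref{s27} do their job. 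The genuinely nontrivial mathematics — proving that \eqref{s26}--\eqref{s27} hold for the elliptic $R$-matrix, reformulated as a set of elliptic function identities — is deferred to Section~\ref{sect4}.
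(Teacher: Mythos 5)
Your proposal is correct and follows essentially the same route as the paper: expand in $\eta$, use unitarity (\ref{q05}) to isolate ${\rm Id}\,D_k^{[1]}$ so that ${\tilde H}_k$ carries only $R$-matrix derivatives, then feed the order-$\eta^2$ relation (\ref{s42})--(\ref{s43}) restricted to (\ref{s23}) and kill the cross terms via (\ref{s26})--(\ref{s27}). Your explicit remark that $[D_N^{[1]},{\tilde H}_j]=0$ because ${\tilde H}_j$ depends only on differences $z_i-z_j$ is a point the paper leaves implicit, and it is the right justification.
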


The above results mean that we proved quantum integrability for ${\rm GL}_2$ spin chain. However,  for ${\rm GL}_M$ case with $M>2$ the presented set of $N-1$ commuting Hamiltonians is not enough. There should be more commuting Hamiltonians.
Their existence follows from studies of the classical model of relativistic tops \cite{Z19}, which has Lax representation with spectral parameter. Presumably, it is the classical version of the model described by spin operators (\ref{q10}). Let also remark that we deal with fundamental representations in this paper. More complicated representations need to be studied as well.  We hope to clarify these questions in future papers.

\subsection{Examples}

\subsubsection*{The first and the second Hamiltonians}
In this subsection we assume the following short notations:
\beq\label{s4601}
\begin{array}{c}
\displaystyle{
{\bar F}^\hbar_{ij}(z)=\frac{\partial}{\partial z}\bar{R}^\hbar_{ij}(z)
}
\end{array}
\eq
and
\beq\label{s4602}
\begin{array}{c}
\displaystyle{
{\bar R}_{ij}={\bar R}^\hbar_{ij}(x_i-x_j)\,,\qquad
{\bar F}_{ij}={\bar F}^\hbar_{ij}(x_i-x_j)\,.
}
\end{array}
\eq
Let us write down expression for the first Hamiltonian (\ref{s431}):
\beq\label{s46}
\begin{array}{c}
\displaystyle{
H_1=\sum_{i=1}^N \prod\limits_{\substack{j=1\\j\neq i}}^N\phi(x_j-x_i)\sum_{k=1}^{i-1}\bar{R}_{i-1,i}\dots \bar{R}_{k+1,i}\bar{R}_{k,i} \bar{F}_{i,k}\bar{R}_{i,k+1}\dots \bar{R}_{i,i-1}\,.
}
\end{array}
\eq
Notice that the coefficients $\prod\limits_{\substack{j:j\neq i}}^N\phi(x_j-x_i)$ are equal to $-u_i^{\{1\}}$ (\ref{s24}) and equal to each other for any $i$ (\ref{s26}). Therefore, we may divide $H_1$
by $-u_i^{\{1\}}$ and obtain:
\beq\label{s461}
\begin{array}{c}
\displaystyle{
{\bf H}_1=-\frac{1}{u_i^{\{1\}}}\,H_1 =\sum\limits_{k<i}^N \bar{R}_{i-1,i}\dots \bar{R}_{k+1,i}\bar{R}_{k,i}\bar{F}_{i,k}\bar{R}_{i,k+1}\dots \bar{R}_{i,i-1}\,.
}
\end{array}
\eq
The second Hamiltonian is as follows:
\beq\notag
\begin{array}{c}
\displaystyle{
H_2=\sum\limits_{\substack{m,l=1\\m<l}}^N \prod\limits_{\substack{j=1\\j\neq m,l}}^N\phi(x_j-x_m)\phi(x_j-x_l)\times
}
\\
\displaystyle{
\times\left(\sum_{i=1}^{m-1}\bar{R}_{m-1,m}\dots \bar{R}_{i+1,m}\bar{R}_{i,m}\bar{F}_{m,i}\bar{R}_{m,i+1}\dots \bar{R}_{m,m-1}+\right.
}
\end{array}
\eq
\beq\label{s47}
\begin{array}{c}
\displaystyle{
+\sum_{i=1}^{m-1}\bar{R}_{m-1,m}\dots \bar{R}_{1,m}\bar{R}_{l-1,l}\dots \bar{R}_{m+1,l}\bar{R}_{m-1,l}\dots\bar{R}_{i+1,l}\bar{R}_{i,l}\times}
\\ \ \\
{\displaystyle\times\bar{F}_{l,i}\bar{R}_{l,i+1}\dots\bar{R}_{l,m-1}\bar{R}_{l,m+1}\dots \bar{R}_{l,l-1}\bar{R}_{m,1}\dots \bar{R}_{m,m-1}+
}
\end{array}
\eq
\beq\notag
\begin{array}{c}
\displaystyle{
\left.+\sum_{i=m+1}^{l-1}\bar{R}_{l-1,l}\dots\bar{R}_{i+1,l}\bar{R}_{i,l}\bar{F}_{l,i}\bar{R}_{l,i+1}\dots \bar{R}_{l,l-1}\right)\,.
}
\end{array}
\eq
Similarly to the first Hamiltonian we mention that for any
$1\leq m,l\leq N$
\beq\label{s471}
\begin{array}{c}
\displaystyle{
\prod\limits_{\substack{j=1\\j\neq m,l}}^N\phi(x_j-x_m)\phi(x_j-x_l)=
\frac{u_m^{\{1\}}}{\phi(x_l-x_m)}\frac{u_l^{\{1\}}}{\phi(x_m-x_l)}
\stackrel{(\ref{a0964})}{=}\frac{u_m^{\{1\}}u_l^{\{1\}}}{\wp(\hbar)-\wp(x_m-x_l)}\,.
}
\end{array}
\eq
Using again (\ref{s26}) and dividing $H_2$ by $u_m^{\{1\}}u_l^{\{1\}}$
we may redefine the second Hamiltonian as follows:
\beq\notag
\begin{array}{c}
\displaystyle{
{\bf H}_2=\frac{1}{u_m^{\{1\}}u_l^{\{1\}}}\,H_2=
}
\\ \ \\
\displaystyle{
=\sum\limits_{\substack{i,m,l=1\\i<m<l}}^N
\frac{1}{\wp(\hbar)-\wp(x_m-x_l)}
\left(\bar{R}_{m-1,m}\dots \bar{R}_{i+1,m}\bar{R}_{i,m}\bar{F}_{m,i}\bar{R}_{m,i+1}\dots \bar{R}_{m,m-1}+\right.
}
\end{array}
\eq
\beq\label{s472}
\begin{array}{c}
\displaystyle{
+\bar{R}_{m-1,m}\dots \bar{R}_{1,m}\bar{R}_{l-1,l}\dots \bar{R}_{m+1,l}\bar{R}_{m-1,l}\dots\bar{R}_{i+1,l}\bar{R}_{i,l}\times}
\\ \ \\
\displaystyle{\left.\times\bar{F}_{l,i}\bar{R}_{l,i+1}\dots\bar{R}_{l,m-1}\bar{R}_{l,m+1}\dots \bar{R}_{l,l-1}\bar{R}_{m,1}\dots \bar{R}_{m,m-1}\right)+
}
\end{array}
\eq
\beq\notag
\begin{array}{c}
\displaystyle{
+
\sum\limits_{\substack{i,m,l=1\\i<m<l}}^N \frac{1}{\wp(\hbar)-\wp(x_i-x_l)}\bar{R}_{l-1,l}\dots\bar{R}_{m+1,l}\bar{R}_{m,l}\bar{F}_{l,m}\bar{R}_{l,m+1}\dots \bar{R}_{l,l-1}\,.
}
\end{array}
\eq

\subsubsection*{Hamiltonians for $N=3$}
%
The Hamiltonians (\ref{s46}), (\ref{s47}):
\beq\label{ex01}
 \begin{array}{lll}
 \displaystyle H_1&=&\phi(x_1 - x_2) \phi(x_3 - x_2) \bar{R}^\hbar_{12}(x_1 - x_2)\bar{F}^\hbar_{21}(x_2-x_1)+
      \\ \ \\
    & &  + \phi(x_1 - x_3) \phi(x_2 - x_3)\bar{R}^\hbar_{23}(x_2 - x_3) \bar{R}^\hbar_{13}(x_1 - x_3)\bar{F}^\hbar_{31}(x_3-x_1) \bar{R}^\hbar_{32}(x_3 - x_2)+
       \\ \ \\
    & &   + \phi(x_1 - x_3) \phi(x_2 - x_3
      )  \bar{R}^\hbar_{23}(x_2 - x_3)\bar{F}^\hbar_{32}(x_3-x_2)\,,
  \end{array}
 \eq
  \beq\label{ex02}
 \begin{array}{lll}
H_2&=& \phi(x_2 - x_1) \phi(x_2 - x_3) \bar{R}^\hbar_{23}(x_2 - x_3)\bar{F}^\hbar_{32}(x_3-x_2)+
         \\ \ \\
    & & +\phi(x_1 - x_2) \phi(x_1 - x_3    ) \bar{R}^\hbar_{12}(x_1 - x_2) \bar{R}^\hbar_{13}(x_1 - x_3)\bar{F}^\hbar_{31}(x_3-x_1)
\bar{R}^\hbar_{21}(x_2 - x_1)+
       \\ \ \\
    & & + \phi(x_1 - x_2) \phi(x_1 - x_3       )  \bar{R}^\hbar_{12}(x_1 - x_2)\bar{F}^\hbar_{21}(x_2-x_1)\,.
     \end{array}
 \eq
  A modified version of these Hamiltonians (\ref{s461}), (\ref{s472}):
    \beq\label{ex01a}
 \begin{array}{lll}
 \displaystyle {\bf H}_1 &=&\bar{R}^\hbar_{12}(x_1 - x_2)\bar{F}^\hbar_{21}(x_2-x_1)+ \bar{R}^\hbar_{23}(x_2 - x_3)\bar{F}^\hbar_{32}(x_3-x_2)+
      \\ \ \\
&  & + \bar{R}^\hbar_{23}(x_2 - x_3) \bar{R}^\hbar_{13}(x_1 - x_3)\bar{F}^\hbar_{31}(x_3-x_1) \bar{R}^\hbar_{32}(x_3 - x_2)\,,
\qquad\qquad\qquad\quad\
  \end{array}
 \eq
 \beq\label{ex02a}
 \begin{array}{lll}
{\bf H}_2&=& \displaystyle {\frac{1}{\wp(\hbar)-\wp(\frac{1}{3})} \left(\bar{R}^\hbar_{23}(x_2 - x_3)\bar{F}^\hbar_{32}(x_3-x_2)+\bar{R}^\hbar_{12}(x_1 - x_2)\bar{F}^\hbar_{21}(x_2-x_1)+\right.}
         \\ \ \\
 & & \displaystyle{  \left.+\bar{R}^\hbar_{12}(x_1 - x_2) \bar{R}^\hbar_{13}(x_1 - x_3)\bar{F}^\hbar_{31}(x_3-x_1)
\bar{R}^\hbar_{21}(x_2 - x_1)\right)\,.}
     \end{array}
 \eq
The Hamiltonians for $N=4$ are given in the Appendix.

\section{Proof of elliptic function identities}\label{sect4}
\setcounter{equation}{0}
In this paragraph we prove (\ref{s26}) and (\ref{s27}).
First, we formulate and prove the following statement equivalent to (\ref{s26}):
\begin{lemma}\label{lemma1}
For $x_j=\frac{j}{N}$ the following relation holds:
 \beq \label{t604}
\sum\limits_{\substack{|I|=k\\ l\in I}}\prod\limits_{\substack{i\in I\\j\notin I}}\phi(x_j-x_i)=\sum\limits_{\substack{|I'|=k\\ m\in I'}}\prod\limits_{\substack{i\in I'\\j\notin I'}}\phi(x_j-x_i) \qquad \text{for  } l,m=1\dots N.
\eq
\end{lemma}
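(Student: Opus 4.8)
The plan is to exploit the translation invariance of the configuration $x_j=j/N$ on the torus. The key observation is that the shift $x_j \mapsto x_{j+1} = x_j + 1/N$ (indices mod $N$) is a symmetry of the set $\{x_1,\dots,x_N\}$, since $x_N + 1/N = 1 \equiv 0 = x_0 \pmod{1}$, and $\phi(z) = \phi(\hbar,z)$ depends on $z$ only through the lattice $\mathbb Z + \tau\mathbb Z$ in its second argument (see (\ref{a0962})). More precisely, $\phi(\hbar,z)$ is periodic in $z$ with period $1$ (it is the Kronecker function, quasi-periodic in $\tau$ but genuinely $1$-periodic in the first lattice direction; if only quasi-periodicity holds, the quasi-periodicity factors cancel in the ratio-like products below). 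Therefore the cyclic shift $\sigma: i \mapsto i+1 \bmod N$ acting on index sets $I \subseteq \{1,\dots,N\}$ leaves each summand $\prod_{i\in I, j\notin I}\phi(x_j - x_i)$ invariant: indeed $\prod_{i\in \sigma(I), j\notin \sigma(I)}\phi(x_j-x_i) = \prod_{i\in I, j\notin I}\phi(x_{j+1}-x_{i+1}) = \prod_{i\in I, j\notin I}\phi(x_j-x_i)$.

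With this in hand, first I would define $S_l := \sum_{|I|=k,\, l\in I}\prod_{i\in I,j\notin I}\phi(x_j-x_i)$. Applying the cyclic shift to the summation index $I$ in $S_l$ gives a bijection between $\{I : |I|=k,\ l\in I\}$ and $\{I' : |I'|=k,\ l+1\in I'\}$ (namely $I' = \sigma(I)$), under which, by the invariance just established, the summands match term by term. Hence $S_l = S_{l+1}$ for every $l$ (indices mod $N$), and iterating yields $S_l = S_m$ for all $l,m \in \{1,\dots,N\}$, which is exactly (\ref{t604}). This in turn is (\ref{s26}), since $u_m^{\{k\}} = -S_m$ by (\ref{s24}).

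The main point requiring care — and the only genuine obstacle — is the precise behaviour of $\phi(\hbar,z)$ under $z \mapsto z+1$: one must check that the cyclic relabelling $x_{j+1} = x_j + 1/N$, which sends the last point past the period, does not introduce spurious factors. This reduces to verifying that in the product $\prod_{i\in I,\, j\notin I}\phi(x_j-x_i)$ every quasi-periodicity factor picked up when some difference $x_j - x_i$ is shifted by an integer cancels against another. This is immediate if $\phi$ is strictly $1$-periodic in $z$; if instead $\phi(\hbar,z+1)=\phi(\hbar,z)$ holds outright (as it does for the Kronecker function, whose only nontrivial transformation is in the $\tau$-direction), there is nothing to check at all. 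A clean way to sidestep the issue entirely is to note that $k$ indices lie in $I$ and $N-k$ outside, so the relevant product has $k(N-k)$ factors, and the shift by $1/N$ permutes the multiset of differences $\{x_j - x_i \bmod 1\}$ — one verifies this permutation statement directly on the torus $\mathbb R/\mathbb Z$, avoiding any appeal to periodicity of $\phi$ beyond the fact that its argument naturally lives there. I would present the argument in this last form, as it is the most robust.
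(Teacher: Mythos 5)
Your proof is correct and takes essentially the same route as the paper: the paper's proof uses the shift map $\alpha:i\mapsto i+(m-l)\ \mathrm{mod}\ N$ as a bijection between $k$-element subsets containing $l$ and those containing $m$, and matches summands termwise via the periodicity of the Kronecker function, which is exactly your cyclic-shift argument (applied in one step rather than iterated). The caveat you raise about quasi-periodicity is moot, since (\ref{a096}) gives genuine $1$-periodicity of $\phi$ in the spatial argument, which is all that is needed.
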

\begin{proof}
 Consider the map \footnotemark{}
 \beq\label{t605}
 \alpha:i\to i+(m-l) \mod N,
 \eq
 which sets a bijection between $k$-element subsets $I$ of the set $\{1,\dots,N\}$  containing the index $l$ on one hand, and  $k$-element subsets  $I'$ containing the index $m$ on the other hand.
  \footnotetext{Unlike the usual notation $b=a \mod N$ we mean that $b$ is a natural number from $1$ to $N$ such that $b=a\pmod N$. }
 Let us show that the summand with $I=\{i_1,\dots i_k\}$ in the left hand side  of (\ref{t604}) and the summand with $I'=\{\alpha(i_1),\dots ,\alpha(i_k)\}$ (here the elements are not ordered)  in the right hand side  of (\ref{t604}) are equal.
  Since the complement sets are connected as $(I')^c=\alpha(I^c)$, the difference between the coordinates
 $(x_j-x_i)$ and $(x_{\alpha(j)}-x_{\alpha(i)})$ differs by an integer number.
Due to periodic properties (\ref{a096}) we have:
\beq
\phi(x_j-x_i)=\phi(x_{\alpha(j)}-x_{\alpha(i)}).
\eq
Since (\ref{t605}) is a bijection one obtains (\ref{t604}). This finishes the proof.
\end{proof}

Before proceeding to (\ref{s27}) we prove another set of relations. Denote by
\beq\label{t10}
g(x)=E_1(\hbar+x)-E_1(x) \,,
\eq
and
\beq\label{t11}
f(x)=g(x)-g(-x)=E_1(\hbar+x)+E_1(\hbar-x)-2E_1(x) \,,
\eq
where $E_1(x)$ is given by (\ref{serE}).
Due to periodic behaviour of the first Eisenstein function $E_1(x)=E_1(x+1)$ and its skew-symmetry the same properties  are valid
for $f(x)$:
\beq\label{t12}
f(x)=f(x+1)\,,
\eq
\beq\label{t13}
f(x)=-f(-x)\,.
\eq
Further we use short notation  $f_{ij}$ for  $f(x_i-x_j)$, where $x_k$ are special points given in (\ref{s23}).
\begin{lemma}
The following identities hold
\beq\label{t14}
\sum_{l\neq m}f_{lm}=0\, ,
\eq
\beq\label{t15}
\sum\limits_{\substack{|I|=k\\m\in I}}\left(\prod\limits_{\substack{i\in I \\ j\notin I}}\phi(x_j-x_i) \sum_{\substack{l\in I\\l\neq m}} f_{lm}\right)=0\,.
 \eq
\end{lemma}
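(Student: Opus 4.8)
The plan is to prove both \eqref{t14} and \eqref{t15} by the same elementary mechanism: show that the left-hand side, viewed as a function of the distinguished index $m$, (i) is independent of $m$ because the nodes $x_k=k/N$ are equidistant, and (ii) has vanishing sum over $m=1,\dots,N$ because $f$ is odd. Facts (i) and (ii) together force the common value to be zero.

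For \eqref{t14} I would first give the direct argument: reindex the sum over $l\neq m$ by $p\equiv l-m\pmod N$, so that by the $1$-periodicity \eqref{t12} it becomes $\sum_{p=1}^{N-1}f(p/N)$; pairing $p$ with $N-p$ and using $f((N-p)/N)=f(-p/N)=-f(p/N)$ (from \eqref{t12}--\eqref{t13}) cancels the terms in pairs, the self-paired term $f(1/2)$ (present only when $N$ is even) being zero by oddness.

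For \eqref{t15}, write $\Sigma_k(m)$ for its left-hand side. First I would check that $\Sigma_k(m)$ is independent of $m$ by running the bijection $\alpha:i\mapsto i+(m'-m)\bmod N$ of \eqref{t605} exactly as in Lemma \ref{lemma1}, but now carrying the extra factor along: $\alpha$ maps the index set $\{(I,l):|I|=k,\ m,l\in I,\ l\neq m\}$ bijectively onto the corresponding set for $m'$, and because $x_{\alpha(j)}-x_{\alpha(i)}$ and $x_{\alpha(l)}-x_{\alpha(m)}$ differ from $x_j-x_i$ and $x_l-x_m$ by integers, the whole weight $\big(\prod_{i\in I,\,j\notin I}\phi(x_j-x_i)\big)f_{lm}$ is left unchanged by \eqref{a096} and \eqref{t12}; hence $\Sigma_k(m)=\Sigma_k(m')$. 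Next I would compute $\sum_{m=1}^N\Sigma_k(m)$ by swapping the order of summation: it equals $\sum_{|I|=k}\big(\prod_{i\in I,\,j\notin I}\phi(x_j-x_i)\big)\sum_{m\in I}\sum_{l\in I,\,l\neq m}f_{lm}$, and for each fixed $I$ the inner double sum runs over ordered pairs of distinct elements of $I$, hence vanishes since $f_{lm}+f_{ml}=0$ by \eqref{t13}. Combining the two observations, $N\,\Sigma_k(m)=\sum_{m'=1}^N\Sigma_k(m')=0$, so $\Sigma_k(m)=0$, which is \eqref{t15}. The same averaging argument, of course, also reproves \eqref{t14}.

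I do not expect a genuine obstacle in this lemma: it uses no elliptic-function input beyond the periodicity and skew-symmetry of $f$ recorded in \eqref{t12}--\eqref{t13} and the periodicity properties \eqref{a096} of $\phi$. The only step requiring a little attention is verifying that the shift bijection $\alpha$ of Lemma \ref{lemma1} transports not merely the subsets $I$ but the full weight $\big(\prod\phi(x_j-x_i)\big)\,f_{lm}$, which is immediate once one notes that every relevant argument of $\phi$ and of $f$ changes only by an integer under $\alpha$. The substantive work of this section lies instead in the identity \eqref{s27}, for which \eqref{t14}--\eqref{t15} serve as lemmas.
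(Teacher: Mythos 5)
Your proof is correct. For \eqref{t14} it is essentially the paper's argument: the paper pairs $l$ with $l'=2m-l \bmod N$ and uses $f_{lm}=f_{m,l'}$ together with skew-symmetry, which under your reindexing $p=l-m$ is exactly the pairing $p\leftrightarrow N-p$ (and your explicit treatment of the self-paired term $f(1/2)=0$ is subsumed in the paper's doubling trick). For \eqref{t15} you take a genuinely different route. The paper keeps $m$ fixed and again reflects, $l\leftrightarrow l'=2m-l$, using the translation bijection $i\mapsto i+(m-l)$ to show that the $\phi$-coefficient of $f_{lm}$ equals that of $f_{m,l'}$ (their identity \eqref{t18}), so the terms cancel in pairs for each fixed $m$. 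You instead prove $m$-independence of the whole expression by the translation $i\mapsto i+(m'-m)$ of \eqref{t605}, carrying the full weight $\bigl(\prod\phi\bigr)f_{lm}$ along (which is legitimate, since every argument of $\phi$ and of $f$ shifts by an integer and \eqref{a096}, \eqref{t12} apply), and then average over $m$, where the sum over ordered pairs of distinct elements of $I$ vanishes by antisymmetry alone. Both arguments rest on the same two inputs, integer-shift invariance on the equidistant lattice and oddness of $f$, so neither is more general; the paper's version yields the finer term-by-term statement \eqref{t18}, while yours dispenses with the reflection map entirely for \eqref{t15} and is arguably cleaner. One incidental point worth flagging: the expanded formula for $f$ in \eqref{t11} as printed, $E_1(\hbar+x)+E_1(\hbar-x)-2E_1(x)$, is not odd; the correct expansion of $g(x)-g(-x)$ is $E_1(\hbar+x)-E_1(\hbar-x)-2E_1(x)$. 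This is a typo in the paper and does not affect your proof, which correctly relies only on the properties \eqref{t12}--\eqref{t13} of $f=g(x)-g(-x)$.
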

\begin{proof}
Let $l'=2m-l \mod N$, we show that
\beq\label{t16}
f_{l,m}=f_{m,l'}.
\eq
Indeed,
$$
l-m=m-l'\pmod N
$$
thus $(x_l-x_m)=\frac{l-m}{N}$ and $(x_m-x_{l'})=\frac{m-l'}{N}$ differs by integer number and due to periodic property (\ref{t12}) identity (\ref{t16}) holds
due to (\ref{t12}) and (\ref{t13}). The doubled left hand side  of (\ref{t14})
\beq\label{t17}
2\sum_{l\neq m}f_{lm}=\sum_{l\neq m}(f_{lm}+f_{l'm})=\sum_{l\neq m}(f_{lm}-f_{m,l'})=0,
\eq
vanishes due to skew-symmetric property (\ref{t13}), which proves (\ref{t14}).

To prove (\ref{t15}) using the same arguments as for (\ref{t14}) we need to show that in (\ref{t15}) the coefficient at $f_{lm}$ is equal to  the coefficient at $f_{m,l'}$ :
\beq\label{t18}
\sum\limits_{\substack{|I|=k\\m,l\in I}}\prod\limits_{\substack{i\in I \\ j\notin I}}\phi(x_j-x_i)=
\sum\limits_{\substack{|I'|=k\\m,l'\in I'}}\prod\limits_{\substack{i\in I' \\ j\notin I'}}\phi(x_j-x_i)\,.
\eq
The map
 $$
 \alpha:i\to i+(m-l) \mod N
 $$
 set a bijection between $k$-element subsets $I$  and $k$-element subsets $I'$ (and its complement sets respectively) such that $m,l\in I$  and  $m,l'\in I'$. Indeed, $\alpha(m)=l'$ and $\alpha(l)=m$. Since $\alpha(i)-\alpha(j)=i-j\pmod N$ and due to periodic properties (\ref{a096}):
 \beq\label{t19}
 \phi(x_{\alpha(i)}- x_{\alpha(j)})=\phi(x_i-x_j)\,.
 \eq
 Thus (\ref{t18}) holds which finishes the proof.
\end{proof}

 Using (\ref{a0965}) and the definition (\ref{t10}) we have
 \beq\label{t20}
 \frac{\p}{\p z}\phi(z)=\phi(z)g(z)\,,
 \eq
 where function $g(z)$ is from (\ref{t10}).
 In short notations :
 \beq\label{t21}
 \frac{\p}{\p z_i}\phi(z_j-z_i)\Big|_{\rm eq}=-\phi_{ji}g_{ji} \qquad \frac{\p}{\p z_j}\phi(z_j-z_i)\Big|_{\rm eq}=\phi_{ji}g_{ji}\,.
 \eq
 \begin{lemma}
 The expressions $w^{\{k\}}_m$ (\ref{s25}) are equal to zero (\ref{s27}).
 \end{lemma}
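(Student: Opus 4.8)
The plan is to compute $w^{\{k\}}_m=\p h_k/\p z_m\big|_{\rm eq}$ directly from the definition (\ref{s20}) of the classical Ruijsenaars-Schneider Hamiltonians and reduce the vanishing to the identity (\ref{t15}) just proved. First I would differentiate $h_k=\sum_{|I|=k}\prod_{i\in I,j\notin I}\phi(z_j-z_i)\prod_{i\in I}e^{-v_i/c}$ with respect to $z_m$. On the equilibrium point (\ref{s23}) all momenta vanish, so the exponential factors are $1$ and only the coefficient $A_I^{[0]}=\prod_{i\in I,j\notin I}\phi(z_j-z_i)$ gets differentiated. Applying the product rule and using (\ref{t21}), i.e. $\p_{z_m}\phi(z_j-z_i)|_{\rm eq}$ equals $+\phi_{ji}g_{ji}$ when $j=m$ and $-\phi_{ji}g_{ji}$ when $i=m$, one sees that the derivative is nonzero only for those $I$ with $m\in I$ (contributing a $-\phi_{ji}g_{mi}$ type term for each $j\notin I$) together with those $I$ with $m\notin I$ (contributing a $+\phi_{mi}g_{mi}$ type term for each $i\in I$).

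Next I would organize these two families of terms. For the family with $m\notin I$, each such subset contributes $\prod_{i\in I,j\notin I}\phi_{ji}\cdot\sum_{i\in I}g_{mi}$. For the family with $m\in I$, each contributes $-\prod_{i\in I,j\notin I}\phi_{ji}\cdot\sum_{j\notin I}g_{jm}$. The aim is to pair a subset $I$ with $m\notin I$ against the subset $I\cup\{m\}\setminus\{i\}$ obtained by swapping $m$ in for some $i\in I$ — or more cleanly, to re-index the whole sum so that everything is expressed over $k$-subsets containing $m$, which lets me combine the two contributions into the antisymmetrized combination $g_{lm}-g_{ml}=f_{lm}$ from (\ref{t11}). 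Carrying this bookkeeping through, $w^{\{k\}}_m$ becomes $\pm\sum_{|I|=k,\,m\in I}\big(\prod_{i\in I,j\notin I}\phi(x_j-x_i)\sum_{l\in I,\,l\neq m}f_{lm}\big)$, which is exactly the left-hand side of (\ref{t15}) and hence vanishes. (For the special cases $k=1$, where $I=\{m\}$ and the sum over $h_1$ involves $\sum_i\prod_{j\neq i}\phi$, and $k=N$, where $D_N$ is a pure shift and $h_N$ has no $z$-dependence, the statement is immediate or degenerate, so I would note these separately.)

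The main obstacle is the combinatorial re-indexing step: one must be careful that when the derivative hits a factor $\phi(z_j-z_i)$ with $j=m$ (so $m\notin I$) versus $i=m$ (so $m\in I$), the resulting products of remaining $\phi$'s genuinely match those appearing in (\ref{t15}), and that the bijection $I\mapsto I\cup\{m\}\setminus\{i\}$ respects the complement structure so the leftover products $\prod\phi(x_j-x_i)$ are identical. This is the same flavor of bijective argument used in the proofs of Lemma~\ref{lemma1} and of (\ref{t18}), so I would invoke those once the indices are set up. One subtlety worth flagging: the identity (\ref{t15}) was stated with the coefficient $\prod_{i\in I,j\notin I}\phi(x_j-x_i)$ and an inner sum over $l\in I$, $l\neq m$, of $f_{lm}$; I must confirm that the derivative computation produces precisely this (and not, say, a sum over $l\notin I$), which follows because differentiating a single factor $\phi(z_j-z_i)$ and then using periodicity leaves exactly the complementary product intact. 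Once this matching is verified the conclusion $w^{\{k\}}_m=0$ is immediate from (\ref{t15}), completing the proof of (\ref{s27}) and hence, together with Lemma~\ref{lemma1}, establishing the equilibrium conditions (\ref{s26})–(\ref{s27}) that Proposition~\ref{predl} requires.
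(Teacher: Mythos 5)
Your starting point is right: differentiating $h_k$ at the equilibrium point and using (\ref{t21}) gives exactly the paper's expression (\ref{t61}),
\begin{equation*}
w^{\{k\}}_m=-\sum\limits_{\substack{|I|=k\\ m\in I}}\Big(\prod\limits_{\substack{i\in I\\ j\notin I}}\phi_{ji}\sum_{l\notin I}g_{lm}\Big)
+\sum\limits_{\substack{|I|=k\\ m\notin I}}\Big(\prod\limits_{\substack{i\in I\\ j\notin I}}\phi_{ji}\sum_{\substack{l\in I\\ l\neq m}}g_{ml}\Big)\,.
\end{equation*}
But the step you flag as "the main obstacle" is where the argument genuinely breaks. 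The re-indexing $I\mapsto I\cup\{m\}\setminus\{l\}$ does \emph{not} preserve the coefficient $\prod_{i\in I,\,j\notin I}\phi(x_j-x_i)$ (already for $N=4$, $k=2$ the products for $I=\{1,2\}$ and $I=\{1,3\}$ are different), so the two families cannot be merged into a single $f_{lm}$-weighted sum, and $w^{\{k\}}_m$ is \emph{not} equal to $\pm$ the left-hand side of (\ref{t15}). Note also that in the first family the inner sum runs over $l\notin I$ while (\ref{t15}) has $l\in I$; this mismatch is precisely the obstruction. The correct manipulation (which is what the paper does) is to write $\sum_{l\notin I}g_{lm}=\sum_{l\neq m}g_{lm}-\sum_{l\in I,\,l\neq m}g_{lm}$ inside the first family. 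This produces the piece reducible to (\ref{t15}) via $g_{lm}=f_{lm}+g_{ml}$, \emph{plus} two leftover terms, $u^{\{k\}}_m\sum_{l\neq m}g_{lm}$ and $-\sum_{l\neq m}g_{ml}\,u^{\{k\}}_l$. These do not cancel identically: their cancellation requires first Lemma \ref{lemma1} in the form (\ref{s26}) (all $u^{\{k\}}_l$ equal), after which they combine into $u^{\{k\}}\sum_{l\neq m}f_{lm}$, and then the separate identity (\ref{t14}). Your proposal never invokes (\ref{t14}) or (\ref{s26}) in the main line of argument, so the proof as written is incomplete.

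The $k=1$ case is a clean diagnostic of the gap: there (\ref{t15}) is vacuous (the inner sum over $l\in I$, $l\neq m$ is empty for $I=\{m\}$), yet $w^{\{1\}}_m=\sum_{i\neq m}\prod_{j\neq i}\phi_{ji}\,g_{mi}-\prod_{j\neq m}\phi_{jm}\sum_{l\neq m}g_{lm}$ is not trivially zero --- it vanishes only by combining (\ref{s26}) with (\ref{t14}). So calling $k=1$ "immediate or degenerate" is not correct; it is exactly the case where the ingredients you omitted carry the entire burden. To repair the proof, keep your computation of (\ref{t61}) but replace the claimed re-indexing by the add-and-subtract decomposition above, then cite (\ref{t15}), (\ref{s26}) and (\ref{t14}) in turn.
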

 \begin{proof}
Due to (\ref{t21}) $ w^{\{k\}}_m$ from (\ref{s25}) can be rewritten as
\beq\label{t61}
  w^{\{k\}}_m=\frac{\p h_k}{\p z_m}\Bigg|_{\rm eq}=-\sum\limits_{\substack{|I|=k\\m\in I}}\left(\prod\limits_{\substack{i\in I \\ j\notin I}}\phi_{ji} \sum_{l\notin I} g_{lm}\right)+\sum\limits_{\substack{|I|=k\\m\notin I}}\left(\prod\limits_{\substack{i\in I \\ j\notin I}}\phi_{ji} \sum_{\substack{l\in I\\l\neq m}} g_{ml}\right)\,.
\eq
Now let us show that the right hand side  of (\ref{t61}) equals zero, which proves (\ref{s27}).
We add and subtract to  $w^{\{k\}}_m$ the item $\displaystyle-\sum\limits_{\substack{|I|=k\\m\in I}}\left(\prod\limits_{\substack{i\in I \\ j\notin I}}\phi_{ji} \sum_{\substack{l\in I\\l\neq m}} g_{lm}\right)$  and combine two sums in $u^{\{k\}}_m$:
\beq\label{t62}
   w^{\{k\}}_m=u^{\{k\}}_m\sum_{l\neq m}g_{lm}+\sum\limits_{\substack{|I|=k\\m\in I}}\left(\prod\limits_{\substack{i\in I \\ j\notin I}}\phi_{ji} \sum_{\substack{l\in I\\l\neq m}} g_{lm}\right)+\sum\limits_{\substack{|I|=k\\m\notin I}}\left(\prod\limits_{\substack{i\in I \\ j\notin I}}\phi_{ji} \sum_{\substack{l\in I\\l\neq m}} g_{ml}\right)=
\eq
\beq\label{t63}
 =u^{\{k\}}_m\sum_{l\neq m}g_{lm}+\sum\limits_{\substack{|I|=k\\m\in I}}\left(\prod\limits_{\substack{i\in I \\ j\notin I}}\phi_{ji} \sum_{\substack{l\in I\\l\neq m}} f_{lm}\right)+\sum\limits_{\substack{|I|=k\\m\notin I}}\left(\prod\limits_{\substack{i\in I \\ j\notin I}}\phi_{ji} \sum_{\substack{l\in I\\l\neq m}} g_{ml}\right)+\sum\limits_{\substack{|I|=k\\m\in I}}\left(\prod\limits_{\substack{i\in I \\ j\notin I}}\phi_{ji}\sum_{\substack{l\in I\\l\neq m}} g_{ml}\right)=
\eq

\beq\label{t64}
 =u^{\{k\}}_m\sum_{l\neq m}g_{lm}+\sum\limits_{\substack{|I|=k\\m\in I}}\left(\prod\limits_{\substack{i\in I \\ j\notin I}}\phi_{ji} \sum_{\substack{l\in I\\l\neq m}} f_{lm}\right)+\sum\limits_{\substack{|I|=k}}\left(\prod\limits_{\substack{i\in I \\ j\notin I}}\phi_{ji} \sum_{\substack{l\in I\\l\neq m}} g_{ml}\right)\,.
\eq
In the first equality between (\ref{t62}) and (\ref{t63}) we used $g_{lm}=f_{lm}+g_{ml}$, and in the second equality between (\ref{t63}) and (\ref{t64}) the two last sums over subsets in $k$ elements, where $m\in I$ and $m\notin I$ were combined into one sum over all subsets in $k$ elements. Notice that the second sum in (\ref{t64})
vanishes due to (\ref{t15}). Consider the last sum in (\ref{t64}).
Since
$$\displaystyle \sum\limits_{\substack{|I|=k}}\left(\prod\limits_{\substack{i\in I \\ j\notin I}}\phi_{ji} \sum_{\substack{l\in I\\l\neq m}} g_{ml}\right)=\sum_{l\neq m}g_{ml}\left(\sum\limits_{\substack{|I|=k\\l\in I}}\prod\limits_{\substack{i\in I \\ j\notin I}}\phi_{ji}
\right)=-\sum_{l\neq m}g_{ml}u_l^{\{k\}}$$
we have
\beq\label{t65}
\begin{array}{c}
\displaystyle
 w^{\{k\}}_m=u^{\{k\}}_m\sum_{l\neq m}g_{lm}-\sum_{l\neq m}g_{ml}u_l^{\{k\}}=
u^{\{k\}}\sum_{l\neq m}f_{lm}=0\,.
 \end{array}
\eq
In the first equality we used (\ref{s26}), i.e. $u^{\{k\}}_m=u^{\{k\}}_l=u^{\{k\}}$. Finally, using (\ref{t14}) we obtain the last equality of (\ref{t65}).
\end{proof}

\section{Spin chain in the Ruijsenaars form}\label{sect5}
\setcounter{equation}{0}

As was mentioned in the Introduction the Ruijsenaars-Macdonald operators  were originally introduced in the form with square roots (\ref{Macd2}), which we called the Ruijsenaars form. Their spin XYZ generalization (\ref{spMacd2}) is obtained similarly to (\ref{spMacd0}) \cite{MZ}.
Let us write down these operators explicitly:
$$
  \displaystyle{
    {\mathcal D}'_k=\sum\limits_{1\leq i_1<...<i_k\leq N}\left(\!\prod\limits^{N}_{\hbox{\tiny{$ \begin{array}{c}{ j=1 }\\{ j\!\neq\! i_1...i_{k-1} } \end{array}$}}}\!\phi(z_j-z_{i_1})\cdots\phi(z_j-z_{i_k})\right)^{\frac{1}{2}}\left(
   \overleftarrow{\prod\limits_{j_1=1}^{i_1-1}} \bar{R}_{j_1 i_1}
    \ldots\
 \overleftarrow{\prod\limits^{i_k-1}_{\hbox{\tiny{$ \begin{array}{c}{ j_k=1 }\\{ j_k\!\neq\! i_1...i_{k-1} } \end{array}$}}}} \bar{R}_{j_k i_k}
  \right)\times
   }
  $$
  \beq\label{q20}
     \displaystyle{
       \times p_{i_1}\cdots p_{i_k}\times\left(
   \overrightarrow{\prod\limits^{i_k-1}_{\hbox{\tiny{$ \begin{array}{c}{ j_k\!=\!1 }\\{ j_k\!\neq\! i_{1}...i_{k-1}} \end{array}$}}}}\bar{R}_{i_k j_k}
   \ldots\
  \overrightarrow{\prod\limits^{i_{1}-1}_ {j_1=1}} \bar{R}_{i_{1} j_{1}}\right)
 \left(\!\prod\limits^{N}_{\hbox{\tiny{$ \begin{array}{c}{ j=1 }\\{ j\!\neq\! i_1...i_{k-1} } \end{array}$}}}\!\phi(z_{i_1}-z_j)\cdots\phi(z_{i_k}-z_{j})\right)^{\frac{1}{2}}\,.
 }
\eq
The operators (\ref{q20}) mutually commute and the freezing trick can be applied as well. Below we briefly describe it.

 The Hamiltonians of the classical elliptic Ruijsenaars-Schneider model in the Ruijsenaars form are given by:
\begin{equation}\label{t70}
 h_k'=\sum\limits_{\substack{|I|=k}}\prod\limits_{\substack{i\in I \\ j\notin I}}\sqrt{\phi(z_j-z_i)}\prod_{i\in I}e^{-v_i/c}\prod\limits_{\substack{i\in I \\ j\notin I}}\sqrt{\phi(z_i-z_j)}\qquad k=1,\dots,N\,,
\end{equation}
Each Hamiltonian $h_k'$ provides its dynamics through
the Hamiltonian equations:
\begin{equation}\label{t71}
\displaystyle{
  \frac{dz_j}{dt_k'}=\{h_k',z_j\}=\frac{\p h_k'}{\p v_j}\,,\qquad \frac{dv_j}{dt_k'}=\{h_k',v_j\}=-\frac{\p h_k'}{\p z_j}\,.
  }
\end{equation}
Here we use notations  similar to (\ref{s24}) and (\ref{s25}) for the set of classical velocities and for the first derivative of momenta (restricted to equilibrium points (\ref{s23})):
\begin{equation}\label{t72}
\displaystyle{
  u'^{\{k\}}_m=c\frac{dz_m}{dt_k'}\Big|_{\rm eq}
  }
  \end{equation}
  and
  \begin{equation}\label{t73}
\displaystyle{
  \qquad
  w'^{\{k\}}_m=-\frac{dv_m}{dt'_k}\Big|_{\rm eq}\,.
  }
\end{equation}
Evaluation of (\ref{t72}) and (\ref{t73}) is performed by using Hamiltonian equations (\ref{t71}):
\beq\label{t74}
u'^{\{k\}}_m=-\sum\limits_{\substack{|I|=k\\ m\in I}}\prod\limits_{\substack{i\in I\\j\notin I}}\sqrt{\phi(x_j-x_i)\phi(x_i-x_j)}\,,
\eq
\beq\label{t75}
w'^{\{k\}}_m=\frac{1}{2}\sum\limits_{\substack{|I|=k}}\prod\limits_{\substack{i\in I\\j\notin I}}\sqrt{\phi(x_j-x_i)\phi(x_i-x_j)}\sum_{l\neq m}f(x_m-x_l)\,,
\eq
where $f(x)$ was defined in (\ref{t11}).
\begin{lemma}
For any $k,l,m=1,...,N$ the analogues of (\ref{s26}) and (\ref{s27}) holds true
\begin{equation}\label{t76}
\displaystyle{
  u'^{\{k\}}_m=u'^{\{k\}}_l
  }
\end{equation}
and
\begin{equation}\label{t77}
\displaystyle{
  w'^{\{k\}}_m=0\,.
  }
\end{equation}
\end{lemma}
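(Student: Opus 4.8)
The plan is to follow the arguments of Section~\ref{sect4} almost verbatim, taking the already-evaluated expressions (\ref{t74}) and (\ref{t75}) as the starting point. The only structural fact one needs is that the Ruijsenaars weight $\sqrt{\phi(x_j-x_i)\phi(x_i-x_j)}$ is both $1$-periodic and \emph{even} as a function of $x_i-x_j$: by the unitarity relation (\ref{a0964}) it equals $\sqrt{\wp(\hbar)-\wp(x_i-x_j)}$, and $\wp$ is even and $1$-periodic, so this weight enjoys exactly the periodicity that was used for $\phi$ in Lemma~\ref{lemma1}.

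To prove (\ref{t76}) I would invoke the shift bijection $\alpha\colon i\to i+(m-l)\bmod N$ of Lemma~\ref{lemma1}, which maps the $k$-element subsets containing $l$ bijectively onto those containing $m$ and changes every coordinate difference $x_j-x_i$ by an integer. Since $\sqrt{\phi(x_j-x_i)\phi(x_i-x_j)}$ is $1$-periodic in its argument, the product $\prod_{i\in I,\,j\notin I}\sqrt{\phi(x_j-x_i)\phi(x_i-x_j)}$ is invariant under $\alpha$, hence the two sums in (\ref{t74}) defining $u'^{\{k\}}_l$ and $u'^{\{k\}}_m$ coincide; call the common value $u'^{\{k\}}$.

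To prove (\ref{t77}) I would note that in (\ref{t75}) the inner factor $\sum_{l\neq m}f(x_m-x_l)$ does not depend on $I$, so it pulls out of the sum over subsets:
\[
w'^{\{k\}}_m=\frac{1}{2}\Big(\sum_{|I|=k}\ \prod_{i\in I,\,j\notin I}\sqrt{\phi(x_j-x_i)\phi(x_i-x_j)}\Big)\,\sum_{l\neq m}f(x_m-x_l)\,.
\]
By the skew-symmetry (\ref{t13}) one has $f(x_m-x_l)=-f_{lm}$, so $\sum_{l\neq m}f(x_m-x_l)=-\sum_{l\neq m}f_{lm}=0$ by (\ref{t14}); equivalently, reindexing $l\to 2m-l\bmod N$ and using (\ref{t12})--(\ref{t13}) shows the sum equals its own negative. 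Hence $w'^{\{k\}}_m=0$. Note that, in contrast with the Macdonald-form computation (\ref{t61})--(\ref{t65}), here (\ref{t76}) is not even needed for (\ref{t77}): since $\partial_z\sqrt{\phi(z)\phi(-z)}=\tfrac12 f(z)\sqrt{\phi(z)\phi(-z)}$ produces the symmetric combination $f$ directly, the sum over subsets and the sum over $l$ already decouple in (\ref{t75}). I therefore do not anticipate a genuine obstacle --- the essential content was carried out in Section~\ref{sect4}, and one only needs to check the evenness and $1$-periodicity of the square-root weight recorded above. Should one prefer not to quote (\ref{t74})--(\ref{t75}), the plan would be to re-derive them from the Hamiltonian equations (\ref{t71}) first, which once again reduces to the reflection bijection $i\to 2m-i\bmod N$ used to discard all terms not of the form $\sum_{l\neq m}$.
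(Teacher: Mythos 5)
Your proposal is correct and follows essentially the same route as the paper: equation (\ref{t76}) is obtained by the shift bijection of Lemma~\ref{lemma1} together with the $1$-periodicity of the weight $\sqrt{\phi(x_j-x_i)\phi(x_i-x_j)}$, and (\ref{t77}) follows because the $I$-independent factor $\sum_{l\neq m}f(x_m-x_l)$ factors out of (\ref{t75}) and vanishes by (\ref{t13})--(\ref{t14}). Your side remark that, unlike the Macdonald-form computation (\ref{t61})--(\ref{t65}), the Ruijsenaars form does not need (\ref{t76}) to get (\ref{t77}) is accurate and consistent with the paper's two-line argument.
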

Identity (\ref{t76}) can be proved similarly to (\ref{t604}).  Identity (\ref{t77}) holds due to (\ref{t14}).

For the quantum model we use expansion of spin operators in variable $\eta$ and look at the first order. As for operators in Macdonald form we obtain matrix valued functions free of differential operators
\beq\label{t770}
\tilde{H}'_k={\rm Id}\, (D_k')^{[1]}-(\mathcal{D}'_k)^{[1]}.
\eq
Due to (\ref{t76}) and (\ref{t77}) being restricted on points (\ref{s23}) the operators
\beq\label{t771}
H'_k=\tilde{H}'_k\Big|_{\rm eq}
\eq
commute. However, we obtain the same Hamiltonians $H'_k=H_k$ as in Section \ref{sect3}. Note that operators (\ref{q10}) and (\ref{q20}) has the same matrix items, the difference is at most a scalar factor before them. The same will be in the spin chain Hamiltonians, in the Macdonald form  the scalar part of some kind of summand is
$
 \prod\limits_{\substack{i\in I\\j\notin I}}\phi(x_j-x_i)
$
and in the Ruijsenaars form
$$
 \sqrt{\prod\limits_{\substack{i\in I\\j\notin I}}\phi(x_j-x_i)\prod\limits_{\substack{i\in I\\j\notin I}}\phi(x_i-x_j)}\,.
$$
The following identity
\beq\label{t78}
\displaystyle \prod\limits_{\substack{i\in I\\j\notin I}} \frac{\phi(x_j-x_i)}{\phi(x_i-x_j)}=
1\,
\eq
establishes the equivalence between these two scalar factors and thus the equality of Hamiltonians $H'_k=H_k$. To prove (\ref{t78}) multiply the numerator and the denominator by $\prod\limits_{\substack{i,k\in I\\i\neq k}} \phi(x_k-x_i)$, then
\beq\label{t79}
\displaystyle \prod\limits_{\substack{i\in I\\j\notin I}} \frac{\phi(x_j-x_i)}{\phi(x_i-x_j)}= \prod\limits_{\substack{i\in I}}\prod_{j\neq i} \frac{\phi(x_j-x_i)}{\phi(x_i-x_j)}\,.
\eq
Put $j'=2i-j \mod N$, the multipliers $\phi(x_j-x_i)$  and $\phi(x_i-x_{j'})$ are reduced, thus each product $\displaystyle\prod_{j\neq i} \frac{\phi(x_j-x_i)}{\phi(x_i-x_j)}$ in the r.h.s of (\ref{t79}) equals 1.

To sum up we obtain the following:

\begin{predl}
Being restricted to the point (\ref{s23}) the expressions (\ref{s39}) ${\tilde H}_k={\rm Id}\, D_k^{[1]}-\mathcal{D}_k^{[1]}$ and (\ref{t770}) $\tilde{H}'_k={\rm Id}\, (D_k')^{[1]}-(\mathcal{D}'_k)^{[1]}$ give equal sets of
Hamiltonians (\ref{s431})  and (\ref{t771}) $H_k=H'_k$.
\end{predl}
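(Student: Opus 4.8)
The plan is to exploit the fact, already emphasized in the text, that the spin operators (\ref{q10}) in the Macdonald form and (\ref{q20}) in the Ruijsenaars form are built out of \emph{the same} $R$-matrix words; the only difference is the scalar coefficient multiplying each word and the placement of that coefficient (wholly on the left in the Macdonald form, split symmetrically around the shift operators in the Ruijsenaars form). Since we are only interested in the first order in $\eta$, I would first expand both $\mathcal{D}_k$ and $\mathcal{D}'_k$ as in (\ref{s34}), using $p_i = 1-\eta\,\partial_{z_i}+O(\eta^2)$. In each case the $\eta^0$ term is ${\rm Id}\cdot D_k^{[0]}$ (respectively ${\rm Id}\cdot (D_k')^{[0]}$) by unitarity (\ref{q05}), and the scalar factors agree at the point (\ref{s23}) because of the identity (\ref{t78}) proved just above (the square root of $\prod \phi(x_j-x_i)\phi(x_i-x_j)$ equals $\prod\phi(x_j-x_i)$ on the equidistant lattice). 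So $\mathcal{D}_k^{[0]}|_{\rm eq}=(\mathcal{D}'_k)^{[0]}|_{\rm eq}$, and in fact $D_k^{[0]}|_{\rm eq}=(D_k')^{[0]}|_{\rm eq}$, $D_k^{[1]}|_{\rm eq}=(D_k')^{[1]}|_{\rm eq}$ as well, since the scalar Ruijsenaars operators in the two forms differ only by conjugation by a function whose gradient vanishes at (\ref{s23}) by (\ref{s27})/(\ref{t77}).

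Next I would write out the $\eta^1$ term in each expansion. Differentiating $p_{i_1}\cdots p_{i_k}$ produces, besides the ${\rm Id}\cdot D_k^{[1]}$ piece, terms in which one $\partial_{z_i}$ hits an $R$-matrix inside the word; these are exactly the matrix-valued, differential-operator-free expressions $\tilde H_k$ (respectively $\tilde H'_k$). The key observation is that, when restricted to (\ref{s23}), the derivative $\partial_{z_i}$ acting on a factor $\bar R_{ab}(z_a-z_b)$ gives $\pm\bar F_{ab}$, and this contribution is identical in the two formulations: the extra square-root factors in (\ref{q20}) contribute either a term proportional to $\partial_{z_i}$ acting on a scalar (which, on (\ref{s23}), is governed by $g_{ij}$ and reorganizes exactly as in the proof of Lemma 4.3, cf.\ (\ref{t770})) or simply the same scalar coefficient as in the Macdonald form via (\ref{t78}). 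Collecting terms, $\tilde H_k|_{\rm eq}$ and $\tilde H'_k|_{\rm eq}$ have the same $R$-matrix/$\bar F$ words with the same scalar coefficients, hence $H_k=H'_k$. I would present this as: after the conjugation relating (\ref{Macd0}) and (\ref{Macd2}) (which at $\eta^0$ and on (\ref{s23}) is the identity up to the scalar (\ref{t78})), the definitions (\ref{s39}) and (\ref{t770}) coincide term by term.

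Concretely, the cleanest route is probably to show directly that $(\mathcal{D}'_k)^{[1]} - \mathcal{D}_k^{[1]} = {\rm Id}\cdot\big((D'_k)^{[1]}-D_k^{[1]}\big) + (\text{differential-operator-free terms that vanish on }(\ref{s23}))$. The difference $\mathcal{D}'_k - \mathcal{D}_k$ comes entirely from moving the left half of the scalar factor $\big(\prod\phi(z_j-z_{i_\bullet})\big)^{1/2}$ through the $R$-matrix word and the shift operators to its symmetric position; moving it through an $R$-matrix word costs nothing (the word is scalar-commuting with functions of the $z$'s that do not involve the tensor slots... more precisely the $R$-matrices depend on $z$'s too, but only the shift operators fail to commute with the scalar), and moving it through $p_{i_1}\cdots p_{i_k}$ shifts its argument by $\eta$ in $k$ of the variables, producing at order $\eta^1$ a correction proportional to $\sum \partial_z(\log\text{scalar})$ times the $\eta^0$ word. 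On (\ref{s23}) that logarithmic derivative is the $g$-type sum which, combined across the sum over $I$, is exactly the combination that vanishes by (\ref{t14})--(\ref{t15}), precisely as in the proof of $w^{\{k\}}_m=0$. Hence the corrections drop out upon restriction to (\ref{s23}), and $H'_k=H_k$.

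The main obstacle I anticipate is bookkeeping rather than conceptual: one must check carefully that pushing the half-scalar through the \emph{ordered product} of $R$-matrices genuinely contributes nothing at the relevant order --- the $R$-matrices $\bar R_{ij}(z_i-z_j)$ do depend on the $z$'s, so a priori $\partial_{z_i}$ could hit either the scalar or an $R$-matrix, and one needs to organize the Leibniz expansion so that the ``scalar'' contributions assemble into the $g$-sums controlled by Lemma 4.2 while the ``$R$-matrix'' contributions assemble into exactly the same $\bar F$-words as in $\tilde H_k$. Making this term-by-term matching transparent, and confirming that the leftover scalar pieces are precisely $w^{\{k\}}_m$-type sums so that (\ref{t14})--(\ref{t15}) finish the job, is where the real work lies; everything else follows from unitarity (\ref{q05}) and the lattice identity (\ref{t78}).
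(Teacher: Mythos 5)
Your skeleton is the paper's (identical $R$-matrix words in the two forms, scalar prefactors matched at the lattice (\ref{s23}) by (\ref{t78})), but the mechanism you invoke to dispose of the extra scalar-derivative terms is wrong, and it is exactly the "bookkeeping" step you flag as the real work. Expanding (\ref{q20}) to first order in $\eta$, pushing the right factor $\sqrt{A'_I}$ through $p_{i_1}\cdots p_{i_k}$ produces the multiplication operator
\[
-\sum_{|I|=k}\sqrt{A_I}\sum_{i\in I}\bigl(\partial_{z_i}\sqrt{A'_I}\bigr)
=-\tfrac12\sum_{|I|=k}\sqrt{A_IA'_I}\sum_{\substack{i\in I\\ j\notin I}}g(z_i-z_j)\,.
\]
This is a one-sided $g$-sum, not the antisymmetric combination $f(x)=g(x)-g(-x)$: the identities (\ref{t14})--(\ref{t15}) and the proof of $w^{\{k\}}_m=0$ control $f$-sums, which appear when one differentiates the \emph{symmetric} product $\sqrt{A_IA'_I}$ (as in (\ref{t75})), not when one differentiates only the right half $\sqrt{A'_I}$. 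These corrections do \emph{not} vanish at (\ref{s23}): already for $N=2$, $k=1$ the bracket is $g(\tfrac12)+g(-\tfrac12)=2E_1(\hbar+\tfrac12)\neq0$ for generic $\hbar$. For the same reason your side claim $D_k^{[1]}\big|_{\rm eq}=(D'_k)^{[1]}\big|_{\rm eq}$ is false --- $(D'_k)^{[1]}$ contains precisely this nonvanishing multiplication operator --- and (\ref{s27})/(\ref{t77}) say nothing about the gradient of the function conjugating (\ref{Macd0}) into (\ref{Macd2}).

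The repair, which is the paper's actual argument, is that these $g$-corrections never need to vanish: at order $\eta^0$ the word ${\bf R}_I\prod_i p_i{\bf R}_I^{-1}$ is ${\rm Id}$ by unitarity (\ref{q05}), so the correction is proportional to ${\rm Id}$ and occurs verbatim inside ${\rm Id}\,(D'_k)^{[1]}$; hence it cancels identically in the definition (\ref{t770}) $\tilde H'_k={\rm Id}\,(D'_k)^{[1]}-(\mathcal{D}'_k)^{[1]}$, before any restriction to (\ref{s23}). What survives is
\[
\tilde H'_k=\sum_{|I|=k}\sqrt{A_IA'_I}\;{\bf R}_I\sum_{i\in I}\bigl(\partial_{z_i}{\bf R}_I^{-1}\bigr)\,,
\qquad
\tilde H_k=\sum_{|I|=k}A_I\;{\bf R}_I\sum_{i\in I}\bigl(\partial_{z_i}{\bf R}_I^{-1}\bigr)\,,
\]
so in your displayed decomposition the ``differential-operator-free terms that vanish on (\ref{s23})'' are $-\sum_I\bigl(\sqrt{A_IA'_I}-A_I\bigr){\bf R}_I\sum_{i\in I}(\partial_{z_i}{\bf R}_I^{-1})$, and they vanish by (\ref{t78}) alone; the $f$-identities of Lemma 4.2 are not needed at this point. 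With this correction your argument collapses to the paper's proof: same matrix items, scalar coefficients $A_I$ and $\sqrt{A_IA'_I}$ equalized on the equidistant lattice by (\ref{t78}), hence $H_k=H'_k$.
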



\section{Trigonometric models}\label{sect6}
\setcounter{equation}{0}

In the trigonometric limit ${\rm Im}(\tau)\rightarrow +\infty$ we have
$\vth(z)=2\exp(\frac{\pi\imath\tau}{4})\sin(\pi z)+O(\exp(\frac{9\pi\imath\tau}{4}))$. Then the trigonometric  limits of the function (\ref{a0962}) is as follows:
 \beq\label{a081}
 \begin{array}{c}
  \displaystyle{
\phi^{\rm trig}(z)=\phi^{\rm trig}(z,\hbar)=\pi\cot(\pi z)+\pi\cot(\pi \hbar)\,.
 }
 \end{array}
 \eq
Also, in this limit
 \beq\label{a0811}
 \begin{array}{c}
  \displaystyle{
\wp(\hbar)-\wp(z)\rightarrow
\phi^{\rm trig}(z)\phi^{\rm trig}(-z)=\frac{\pi^2}{\sin^2(\pi\hbar)}
-\frac{\pi^2}{\sin^2(\pi z)}\,.
 }
 \end{array}
 \eq
Trigonometric limits of elliptic $R$-matrix were studied in \cite{AHZ}. In order to get a finite (not divergent) answer in the limit one should first  perform a gauge transformation
 \beq\label{y051}
 \begin{array}{c}
  \displaystyle{
  {\bar R}_{12}^\hbar(z)\rightarrow G^{(1)}G^{(2)}{\bar R}_{12}^\hbar(z)(G^{(1)})^{-1}(G^{(2)})^{-1}={\widetilde R}_{12}^\hbar(z)\,,
  }
 \end{array}
 \eq
where $G\in\MatM$ is a special matrix depending on $\hbar$ (see \cite{AHZ}) and the notation $G^{(i)}$ means the matrix $G$ is in the $i$-th tensor component. It is important that this matrix is independent of the spectral parameter $z$. For this reason the gauged transformed spin operator
 \beq\label{y05}
 \begin{array}{c}
  \displaystyle{
  \widetilde{\mathcal D}_k=G^{(1)}\ldots G^{(N)}{\mathcal D}_k
  (G^{(1)})^{-1}\ldots(G^{(N)})^{-1}
  }
 \end{array}
 \eq
is equal to initial spin operator constructed by means of the gauge transformed $R$-matrices. Thus, if $R$-matrix has a finite limit, then so do
$\widetilde{\mathcal D}_k$. Moreover, the spin operators keep their mutual commutativity in the trigonometric limit since
 \beq\label{y052}
 \begin{array}{c}
  \displaystyle{
  G^{(1)}\ldots G^{(N)}[{\mathcal D}_k,{\mathcal D}_l]
  (G^{(1)})^{-1}\ldots(G^{(N)})^{-1}=[\widetilde{\mathcal D}_k,\widetilde{\mathcal D}_l]=0\,.
  }
 \end{array}
 \eq
Therefore, our construction of spin chains is valid for any trigonometric limit obtained in the above mentioned way. Below we give several main examples of trigonometric $R$-matrices obtained in \cite{AHZ} (see also \cite{KZ19} for a review).

\subsection{Trigonometric $R$-matrices}
Let us begin with ${\rm GL}_2$ (i.e. $M=2$) trigonometric $R$-matrices. They
are given by 6-vertex XXZ $R$-matrix and its 7-vertex deformation \cite{Chered2}:
 \beq\label{y053}
 \begin{array}{c}
 R_{12}^\hbar(z)=\left(
 \begin{array}{cccc}
 \pi\cot(\pi \hbar)+\pi\cot(\pi z) & 0 & 0 & 0
 \\
 0 & \displaystyle{ \frac{\pi}{\sin(\pi\hbar)} } &  \displaystyle{ \frac{\pi}{\sin(\pi z)} }  & 0
  \\
 0 &  \displaystyle{ \frac{\pi}{\sin(\pi z)} }  & \displaystyle{ \frac{\pi}{\sin(\pi\hbar)} } & 0
 \\
 c_7\sin(\pi(\hbar+z)) & 0 & 0 & \pi\cot(\pi \hbar)+\pi\cot(\pi z)
 \end{array}
 \right)\,,
  \end{array}
 \eq
where $c_7$ is an arbitrary constant. When $c_7=0$ it is the ${\rm GL}_2$ XXZ $R$-matrix. The $R$-matrix (\ref{y053}) is normalized as
 \beq\label{y054}
 \begin{array}{c}
  \displaystyle{
  R_{12}^\hbar(z)R_{21}^\hbar(-z)=\phi^{\rm trig}(z)\phi^{\rm trig}(-z)\,{\rm Id}=\Big( \frac{\pi^2}{\sin^2(\pi \hbar)}-
  \frac{\pi^2}{\sin^2(\pi z)}\Big){\rm Id}\,,
  }
 \end{array}
 \eq
 so that in order to use it for construction of spin operators one should
 previously normalize it as given (\ref{q04})-(\ref{q05}) using (\ref{a081}).

For arbitrary $M$ we have the following set of trigonometric $R$-matrices. We write them in the standard basis $\{e_{ab}\}$ in ${\rm Mat}_M$, i.e.
 \beq\label{y055}
 \begin{array}{c}
  \displaystyle{
  R_{12}^\hbar(z)=\sum\limits_{a,b,c,d,=1}^M e_{ab}\otimes e_{cd} R^{\hbar}_{ab,cd}(z)\,.
  }
 \end{array}
 \eq
The first one is
$\mZ_N$-invariant $A_{M-1}$
trigonometric $R$-matrix \cite{Chered2,PerkKulish}:
  \beq\label{q200}
   \begin{array}{c}
   \displaystyle{
  (R_1)^{\hbar}_{ab,cd}(z)=
  }
  \\ \ \\
     \displaystyle{
  =\delta_{ab}\delta_{cd}\delta_{ac}\,\pi\Big(\cot(\pi\hbar)+\cot(\pi z)\Big)+\delta_{ab}\delta_{cd}\pi\,\varepsilon(a\neq c)\,\frac{ \exp\Big(\frac{\pi\imath\hbar}{M}\Big(2(a-c)-M{\rm sign}(a-c)\Big)\Big)}{\sin(\pi\hbar)}+
  }
  \\ \ \\
   \displaystyle{
 +\delta_{ad}\delta_{bc}\pi\,\varepsilon(a\neq c)\,\frac{ \exp\Big(\frac{\pi\imath z}{M}\Big(2(a-c)-M{\rm sign}(a-c)\Big)\Big)}{\sin(\pi z)}\,,
  }
  \end{array}
  \eq
  where we use notation
  \beq\label{q201}
  \begin{array}{c}
  \displaystyle{
 \varepsilon(\hbox{A})=\left\{\begin{array}{l} 1\,,\hbox{if A is true}\,,\\ 0\,,\hbox{if A is false}\,.\end{array}\right.
 }
 \end{array}
 \eq
 The next $R$-matrix is the Baxterization of the
(trigonometric) Cremmer-Gervais $R$-matrix \cite{Babelon,AHZ}. It is
obtained as a sum of (\ref{q200}) and additional term:
  \beq\label{q202}
   \begin{array}{c}
   \displaystyle{
  (R_2)^{\hbar}_{ab,cd}(z)=(R_1)^{\hbar}_{ab,cd}(z)-
  }
  \\ \ \\
     \displaystyle{
  -2\pi\imath\,\delta_{a+c,b+d}
  \Big( \varepsilon(a<b<c)- \varepsilon(c<b<a) \Big)
  \exp\Big(\frac{\pi\imath z}{M}(a-b)+\frac{\pi\imath \hbar}{M}(d-c)\Big)\,.
  }
  \end{array}
  \eq
  Finally, the most general is the non-standard $R$-matrix \cite{AHZ}:
  \beq\label{q203}
   \begin{array}{c}
   \displaystyle{
  (R_3)^{\hbar}_{ab,cd}(z)=(R_2)^{\hbar}_{ab,cd}(z)-
  }
  \\ \ \\
     \displaystyle{
  -2\pi\imath\,\delta_{a+c,b+d+M}
  \Big( \delta_{aM}
  \exp\Big(-\frac{\pi\imath z}{M}b-\frac{\pi\imath \hbar}{M}d\Big)
  -\delta_{cN}
  \exp\Big(\frac{\pi\imath z}{M}d+\frac{\pi\imath \hbar}{M}b\Big)
  \Big)\,.
  }
  \end{array}
  \eq
All the above $R$-matrices are  normalized as given in (\ref{y054}).

 Let us also remark on the widely known XXZ $R$-matrix \cite{Jimbo} for the affine quantized
algebra ${\hat{\mathcal U}}_q({\rm gl}_M)$:
  \beq\label{q270}
   \begin{array}{c}
   \displaystyle{
  R^\hbar_{12}(z)=
   \pi\Big(\cot(\pi z)+\coth(\pi\hbar)\Big)\sum\limits_{i=1}^M e_{ii}\otimes e_{ii}+
  }
  \\ \ \\
   \displaystyle{
 +\frac{\pi}{\sin(\pi\hbar)}\sum\limits_{i\neq j}^M e_{ii}\otimes e_{jj}+
 \frac{\pi}{\sin(\pi z)}\sum\limits_{i< j}^M
 \Big( e_{ij}\otimes e_{ji}\,e^{\pi\imath z}+e_{ji}\otimes
 e_{ij}\,e^{-\pi\imath z}\Big)\,.
  }
  \end{array}
  \eq
 In the $M=2$ case this yields ${\hat{\mathcal U}}_q({\rm gl}_2)$ $R$-matrix underlying
  the q-deformed Haldane-Shastry model \cite{Uglov}. The Uglov-Lamers Hamiltonian \cite{Lam,LPS} is reproduced in the next subsection. Here we mention that (\ref{q270}) for $M=2$
  is (\ref{y053}) with $c_7=0$ up to simple gauge transformation. However, in this case the gauge transformation
  ${\bar R}_{12}^\hbar(z-w)\rightarrow G^{(1)}(z)G^{(2)}(w){\bar R}_{12}^\hbar(z-w)(G^{(1)})^{-1}(z)(G^{(2)})^{-1}(w)$
  depends on the spectral parameter: $G(z)={\rm diag}(e^{\pi\imath z/2},e^{-\pi\imath z/2})$. Therefore, the arguments from the beginning of the section do not work in this case. Notice that a generic gauge transformation
  is not applicable since the resultant $R$-matrix may not depend on the difference of spectral parameters (one may try to extend results of \cite{MZ} to generic $R$-matrices $R_{12}(z,w)\neq R_{12}(z-w)$). However, in the case of study the gauge transformation $G(z)$ appears to be admissible.
  To see it we prove in the Appendix that $R$-matrix (\ref{q270}) provides commutative set of spin operators (\ref{q10}) and commuting set of spin chain Hamiltonians as was described in the previous sections.

  It is an interesting problem to find all admissible gauge transformations for $R$-matrices, which keep commutativity of spin operators (\ref{q10}) and commutativity of spin chain Hamiltonians. This question deserves further elucidation.

\subsection{Uglov-Lamers Hamiltonian for q-deformed Haldane-Shastry model}
The goal of this paragraph is to reproduce the first Hamiltonian in the trigonometric case and to show that it
coincides with the Uglov-Lamers Hamiltonian of the q-deformed Haldane-Shastry model in the form suggested in \cite{Lam,LPS}.

We use multiplicative notations
\beq\label{q31}
u= e^{2\pi\imath z}\,, \qquad t= e^{2\pi\imath \hbar}\,.
\eq
Denote by $a(u)$ the analogue of (\ref{a081}) in multiplicative variables:
 \beq\label{q31a}
 a(u)=\phi^{\rm trig}(z)=\pi\imath\left(\frac{t+1}{t-1}+\frac{u+1}{u-1}\right)=2\pi \imath \frac{u t-1}{(u-1)(t-1)} \,.
 \eq
 We use notations similar to (\ref{s091}) for the subset $I$ of the set $\{1,\dots,N\}$:
 \beq\label{q31b}
 \begin{array}{c}
  \displaystyle{
A_I=\prod\limits_{\substack{i\in I \\ j\notin I}}a\left(\frac{y_j}{y_i}\right)
=\prod\limits_{\substack{i\in I \\ j\notin I}}\frac{2\pi \imath}{t-1}\frac{t y_j-y_i}{y_j-y_i}\,.
  }
 \end{array}
\eq
In $M=2$ case denote by  $ R_{12}^{\rm{trig}}(u)$  the  XXZ  $R$-matrix (\ref{q270}) in notations (\ref{q31}):
\beq\label{q32}
 \begin{array}{c}
 R_{12}^{\rm{trig}}(u)=\pi \imath \left(
 \begin{array}{cccc}
 \displaystyle{\frac{t+1}{t-1}+\frac{u+1}{u-1} }& 0 & 0 & 0
 \\
 0 & \displaystyle{\frac{2 t^{1/2}}{t-1} }& \displaystyle{\frac{2u}{u-1} } & 0
  \\
 0 &  \displaystyle{ \frac{2}{u-1}}  &\displaystyle{ \frac{2 t^{1/2}}{t-1} }  & 0\\
0 & 0 & 0 & \displaystyle{\frac{t+1}{t-1}+\frac{u+1}{u-1}}
 \end{array}
 \right)\,
  \end{array}
\eq
and by  $\bar{R}_{12}^{\rm{trig}}(u)$\footnote{This R-matrix coincides with (1.23) from \cite{LPS} up to the permutation operator $\check{R}_{12}(u)=\bar{R}_{12}^{\rm trig}(u) P_{12} $} its normalized (\ref{q05}) version :
\beq\label{q33}
 \begin{array}{c}
 \bar{R}_{12}^{\rm{trig}}(u)=\left(
 \begin{array}{cccc}
 1& 0 & 0 & 0
 \\
 0 & \displaystyle{\frac{t^{1/2}(u-1)}{ut-1} }& \displaystyle{\frac{u(t-1)}{ut-1} } & 0
  \\
 0 &  \displaystyle{ \frac{t-1}{u t-1}}  &\displaystyle{ \frac{t^{1/2}(u-1)}{ut-1} } & 0 \\
0 & 0 & 0 & 1
 \end{array}
 \right)\,.
  \end{array}
\eq
The quantum Yang-Baxter equation (\ref{QYB}) takes the form
\beq\label{q34}
\begin{array}{c}
\displaystyle{
   \bar{R}^{\rm{trig}}_{12}(u) \bar{R}^{\rm{trig}}_{13}(uv) \bar{R}^{\rm{trig}}_{23}(v) =
      \bar{R}^{\rm{trig}}_{23}(v) \bar{R}^{\rm{trig}}_{13}(uv) \bar{R}^{\rm{trig}}_{12}(u)
      }\,,
\end{array}\eq
and  the unitarity property (\ref{q05}) is written as
\beq\label{q35}
    {\displaystyle {\bar R}^{\rm{trig}}_{12}(u) {\bar R}^{\rm{trig}}_{21}\left(\frac{1}{u}\right)= {\rm Id}\,.}
\eq
Notice one important property: expression ${\bar R}^{\rm{trig}}_{12}\left(\frac{1}{u}\right)u\frac{\p}{\p u}  {\bar R}^{\rm{trig}}_{21}(u)$
is proportional to matrix $C_{12}$ which does not depend on the variable $u$:
\beq \label{q36}
 {\bar R}^{\rm{trig}}_{12}\left(\frac{1}{u}\right)u\frac{\p}{\p u}  {\bar R}^{\rm{trig}}_{21}(u)
 =\frac{(1-t)u}{(t-u)(tu-1)} C_{12}\,.
\eq
Here
\beq\label{q360}
C_{12}=\left(
 \begin{array}{cccc}
 0& 0 & 0 & 0
 \\
 0 & 1& -\sqrt{t} & 0
  \\
 0 &  -\sqrt{t}& t & 0 \\
0 & 0 & 0 & 0
 \end{array}
 \right)\,.
\eq
We denote by $C_{ij}$ the corresponding operator acting  in the $i$-th  and $j$-th tensor components of $\mathcal H$.

Now we are going to compute the first Hamiltonian in the trigonometric case. We use the notations
$H^{\rm{trig}}_k$ for the $k$-th Hamiltonians and
$\tilde{H}^{\rm{trig}}_k$  for the corresponding  matrix-valued function from (\ref{s39}), which after restriction to
\beq\label{q36a}
{\rm eq}:\ y_j\to \exp\left(\frac{2\pi \imath }{N}j\right)
\eq
turns into the Hamiltonian:
\begin{equation}\label{q36b}
\displaystyle{
 H^{\rm{trig}}_k={\tilde H}^{\rm{trig}}_k\Big|_{\rm eq}\,.
  }
\end{equation}
In the trigonometric limit  ${\bf H}_1$ from (\ref{s461}) in multiplicative notations (\ref{q31}) takes the form:
\begin{equation}\label{q41}
\begin{array}{c}
{\displaystyle
{\bf H}^{\rm{trig}}_1=2\pi\imath\sum_{k<i}\bar{R}^{\rm{trig}}_{i-1,i}\left(\frac{y_{i-1}}{y_i}\right)\dots \bar{R}^{\rm{trig}}_{k+1,i}
\left(\frac{y_{k+1}}{y_i}\right)\times}
\\ \ \\
{\displaystyle
\times\bar{R}^{\rm{trig}}_{k,i}\left(\frac{y_{k}}{y_i}\right)\left(y_i\frac{\partial}{\partial y_i}\bar{R}^{\rm{trig}}_{i,k}\left(\frac{y_{i}}{y_k}\right)\right)\bar{R}^{\rm{trig}}_{i,k+1}\left(\frac{y_{i}}{y_{k+1}}\right)\dots \bar{R}^{\rm{trig}}_{i,i-1}\left(\frac{y_{i}}{y_{i-1}}\right)\Big|_{\rm eq}\,.
}
\end{array}
\end{equation}
Using (\ref{q36}) we obtain the expression for the first Hamiltonian of
the q-deformed Haldane-Shastry model or the Uglov-Lamers Hamiltonian\footnote{The expression (\ref{q40}) coincides (up to a constant factor)  with (1.20) in \cite{LPS} taking into account difference in notations of $R$-matrices.}:
\begin{equation}\label{q40}
\begin{array}{c}
{\displaystyle
{\bf H}^{\rm{trig}}_1=2\pi\imath(1-t)\sum_{k<i}
\frac{y_i y_k}{(t y_k-y_i)(t y_i-y_k)}}\times
\\ \ \\
{\displaystyle
\times\bar{R}^{\rm{trig}}_{i-1,i}\left(\frac{y_{i-1}}{y_i}\right)\dots \bar{R}^{\rm{trig}}_{k+1,i}
\left(\frac{y_{k+1}}{y_i}\right)C_{ik}\bar{R}^{\rm{trig}}_{i,k+1}\left(\frac{y_{i}}{y_{k+1}}\right)\dots \bar{R}^{\rm{trig}}_{i,i-1}\left(\frac{y_{i}}{y_{i-1}}\right)\Big|_{\rm eq}\,.
}
\end{array}
\end{equation}

In the limit $t\to 1$ we have  $\bar{R}^{\rm{trig}}_{ij}\to {\rm Id}$ and  $C_{ij}\to (1-P_{ij})$, thus
\beq\label{q400}
\lim_{t\to 1}  \frac{{\bf H}^{\rm{trig}}_1}{(1-t)}=2\pi\imath\sum_{k<i}
\frac{-y_i y_k}{(y_k-y_i)^2}(1-P_{ik})\Big|_{\rm eq}\,.
\eq
The right hand side  of (\ref{q400}) coincides with the Hamiltonian of Haldane-Shastry spin chain (\ref{s01}) up to a constant factor.

To summarize, a straightforward trigonometric limit of the elliptic model provides the models based on $R$-matrices
(\ref{y053}) for $M=2$ and (\ref{q203}) in the general case. In the non-relativistic limit these models turn into
anisotropic Haldane-Shastry models (see (\ref{s0911}) below). At the same time the Uglov's q-deformed Haldane-Shastry model provides the isotropic non-relativistic limit (\ref{s01}), i.e. the standard Haldane-Shastry model.
The underlying $R$-matrix is given by (\ref{q32}) and (\ref{q270}) for higher rank case.
In order to include these models into consideration we explain in the Appendix C that the spin operators
(\ref{q10}) with the ${\rm U}_q({\widehat {\rm  gl}_M})$ $R$-matrix (\ref{q270}) mutually commute. Then the freezing trick yields the Uglov's q-deformed long-range spin chains.

To avoid confusion notice that in this paper we use the usual notations $t$ and $q$ for Macdonald parameters. At the same time the standard term "q-deformed" means deformation with parameter $t$. In our case it would be more correctly to say "$t^{1/2}$-deformed".

\paragraph{Remark.} In the trigonometric case the analogue of Lemma \ref{lemma1} holds and the corresponding sums of the coefficients (\ref{q31b}) are equal to $q$-binomial coefficients with $q=t^{1/2}$ (see \cite{Uglov,LPS}):
 \beq \label{q39}
\sum\limits_{\substack{|I|=k\\ l\in I}} A_I\Big|_{\rm eq}=
\sum\limits_{\substack{|I'|=k\\ m\in I'}} A_{I'}\Big|_{\rm eq}=\left(\frac{2\pi\imath }{t^{-1/2}-t^{1/2}}\right)^{k(N-k)}\frac{k}{N}
\left[{\begin{array}{c}N\\k\\\end{array}}\right],
\eq
where we use the notations for $q$-binomial coefficients
\beq\label{qbinom}
[n]:=[n]_{\sqrt{t}}=\frac{t^{n/2}-t^{-n/2}}{t^{1/2}-t^{-1/2}}\,,\qquad
[n]!:=[n] [n-1]\dots[2]\,,\qquad
\left[{\begin{array}{c}n\\k\\\end{array}}\right] =\frac{[n]!}{[n-k]![k]!}\,.
\eq
The latter means that the sum in the left hand side  of (\ref{q39}) with elliptic
coefficients $A_I$ can be used for definition of the elliptic version for
q-binomial coefficients. It would be interesting to compare such definition with that one suggested in \cite{Schlos}.

In the end of the Section let us also make a remark on the rational limit, which we do not discuss in this paper.  In this case  $\phi^{\rm rat}(z,u)=1/z+1/u$. The rational limits of $R$-matrix are also known. It is the Yang's $R$-matrix $R_{12}^\hbar(z)=\hbar^{-1}{\rm Id}+z^{-1}P_{12}$ and its deformations (11-vertex one in $M=2$ case and higher analogues for $M>2$). In the Yang's case this provide isotropic spin Ruijsenaars-Macdonald Hamiltonians.
However, there is a problem with proceeding to long-range spin chain.
As was mentioned in the Introduction the Polychronakos-Frahm chain is related to the rational spin Calogero-Moser model in the harmonic potential. The latter is necessary to have a finite set of equilibrium positions \cite{Calogero3}. The relativistic generalization, which we study does not provide the harmonic potential in the non-relativistic limit. That is, in order to study the rational case we need to extend our spin Ruijsenaars-Macdonald operators
to the case of BC root system, which includes relativistic analogue of oscillator terms. This case will be studied elsewhere.

\section{Limit to non-relativistic models}\label{sect7}
\setcounter{equation}{0}
 The coupling constant in the set of XYZ spin Ruijsenaars-Macdonald operators (\ref{q10}) and in the set of the spin chain Hamiltonians $H_i$ (\ref{s431}) is the parameter $\hbar$. The non-relativistic limit $\hbar\rightarrow 0$ is similar to transition to the Calogero-Moser model from the Ruijsenaars-Schneider one.
 This parameter plays the role of the Planck constant in a quantum $R$-matrix entering the definition of Hamiltonians. Decomposition of quantum $R$-matrix in $\hbar\rightarrow 0$ is the classical limit (\ref{r052}). Its first nontrivial term is the classical $r$-matrix.

 \paragraph{Classical $r$-matrix.} Notice that we deal with the normalized $R$-matrix ${\bar R}^\hbar_{ij}$ as in (\ref{q04})-(\ref{q05}). In the trigonometric case one should use normalization by (\ref{a081}). While non-normalized $R$-matrix (e.g. elliptic (\ref{BB}) or any of trigonometric $R$-matrices from the subsection 6.1.) has
 the classical limit as given in (\ref{r052}), a normalized $R$-matrix
 has this limit in the form:
 \beq\label{q60}
 \displaystyle{
 {\bar R}^\hbar_{12}(z)={\rm Id}+\hbar\, {\bar r}_{12}(z)+
 \hbar^2 {\bar m}_{12}(z)+O(\hbar^3)\,.
 }
 \eq
 For the elliptic $R$-matrix we have the following classical $r$-matrix
 \beq\label{q61}
 \displaystyle{
 {\bar r}_{12}(z)= {\rm Id}\, \frac{1-M}{M}\,E_1(z)+\frac{1}{M}
    \sum_{\al \neq 0} \varphi_\al (z, \om_\al) T_\al \otimes T_{-\al}\,,
 }
 \eq
 satisfying the classical Yang-Baxter equation (\ref{r054}). For example, in $M=2$ case it is as follows:
 \beq\label{r7241}
 \begin{array}{c}
  \displaystyle{
 {\bar r}_{12}(z)
 =-\frac{{\rm Id} _{4\times 4}}{2}\,E_1(z)+
 \frac{1}{2}\left(
 \begin{array}{cccc}
 \ti\vf_{10} & 0 & 0 & \ti\vf_{01}-\ti\vf_{11}
 \\
 0 & -\ti\vf_{10} & \ti\vf_{01}+\ti\vf_{11} & 0
  \\
 0 & \ti\vf_{01}+\ti\vf_{11} & -\ti\vf_{10} & 0
 \\
 \ti\vf_{01}-\ti\vf_{11} & 0 & 0 & \ti\vf_{10}
 \end{array}
 \right)\,,
  }
 \end{array}
\eq
 where
 \beq\label{r8261}
 \begin{array}{c}
  \displaystyle{
 \ti\vf_{10}=\phi(z,\frac{1}{2})\,,\quad
 \ti\vf_{01}=e^{\pi\imath z}\phi(z,\frac{\tau}{2})\,,\quad
 \ti\vf_{11}=e^{\pi\imath z}\phi(z,\frac{1+\tau}{2})\,.
 }
  \end{array}
 \eq
 The $r$-matrix (\ref{q61}) differs from the one (\ref{r053}) by a scalar term only:
 \beq\label{q62}
 \displaystyle{
 r_{12}(z)={\rm Id}\, E_1(z) + {\bar r}_{12}(z)\,.
 }
 \eq
In what follows we also need the derivative of (\ref{q61})
with respect to the spectral parameter\footnote{In $M=2$ case it is also helpful
to use relations $\p_z\ti\vf_\al=-\ti\vf_\be\ti\vf_\ga$ for any distinct $\al,\be,\ga\in\{10,01,11\}$.}
$\p {\bar r}_{12}(z)$:
 \beq\label{q63}
 \displaystyle{
 \p_z {\bar r}_{12}(z)
 \stackrel{(\ref{a0965})}{=}{\rm Id}\,\frac{M-1}{M}E_2(z)+
 \frac{1}{M}
    \sum_{\al \neq 0}
    \varphi_\al (z, \om_\al)\Big(\frac{2\pi\imath \al_2}{M}+E_1(z+\om_\al)-E_1(z)\Big)
    T_\al \otimes T_{-\al}\,,
 }
 \eq
 where $E_2(z)=-\p_zE_1(z)$.
Similarly to (\ref{r056}) we have the following parity properties:
 \beq\label{q64}
 \begin{array}{c}
    \displaystyle{
    {\bar r}_{12}(z)=-{\bar r}_{21}(-z)\,,\qquad
    {\bar m}_{12}(z)={\bar m}_{21}(-z)\,,\qquad
    \p {\bar r}_{12}(z)=\p {\bar r}_{21}(-z)\,.
    }
\end{array}
\eq

 \paragraph{The limit from the first Hamiltonian.}  Consider the limit $\hbar\rightarrow 0$ of the first Hamiltonian (\ref{s461}). Plugging expansion (\ref{q60}) and its derivative
${\bar F}_{ij}^\hbar=\hbar\p {\bar r}_{ij}+\hbar^2\p {\bar m}_{ij}+O(\hbar^3)$ into
(\ref{s461}) one gets:
 \beq\label{q65}
 \begin{array}{c}
    \displaystyle{
    {\bf H}_1=\hbar\sum\limits^N_{i>j}\p {\bar r}_{ij}+
    \hbar^2\Big(
    \sum\limits^N_{i>j} ({\bar r}_{ji}\p{\bar r}_{ij} +\p {\bar m}_{ij}) + \sum\limits^N_{i<j<k} [{\bar r}_{jk}, \p{\bar r}_{ki}] \Big)+O(\hbar^3)\,,
    }
\end{array}
\eq
where $\p {\bar r}_{ab}=\p {\bar r}_{ab}(x_a-x_b)$ and similarly for
${\bar r}_{ab}$ and ${\bar m}_{ab}$.
Denote expansion (\ref{q65}) as
 \beq\label{q651}
 \begin{array}{c}
    \displaystyle{
    {\bf H}_1=\hbar {\bf H}_1^{(1)}+
    \hbar^2{\bf H}_1^{(2)}+O(\hbar^3)\,,
    }
    \\ \ \\
        \displaystyle{
    {\bf H}_1^{(1)}=\sum\limits^N_{i>j}\p {\bar r}_{ij}\,,\qquad
    {\bf H}_1^{(2)}=\sum\limits^N_{i>j} ({\bar r}_{ji}\p{\bar r}_{ij} +\p {\bar m}_{ij}) + \sum\limits^N_{i<j<k} [{\bar r}_{jk}, \p{\bar r}_{ki}]\,.
    }
\end{array}
\eq
The first nontrivial Hamiltonian of the non-relativistic spin chain is
as follows\footnote{The numeration is shifted by 1 likewise it happens in many-body systems, where the first nontrivial Hamiltonian of the Calogero-Moser model is the second one, while the first is the sum of momenta.}:
 \beq\label{q66}
 \begin{array}{c}
    \displaystyle{
    {\mathcal H}_2={\bf H}_1^{(1)}=\sum\limits^N_{i>j}\p {\bar r}_{ij}(x_i-x_j)\,.
    }
\end{array}
\eq

\paragraph{The limit from the second Hamiltonian.}
Next, consider the limit of the second Hamiltonian (\ref{s472}). More precisely, let us consider the limit of $\hbar^{-2}{\bf H}_2$. Notice that
(since $\wp(\hbar)=\hbar^{-2}+O(\hbar^2)$)
 \beq\label{q67}
 \begin{array}{c}
    \displaystyle{
    \frac{1}{\hbar^2}\frac{1}{\wp(\hbar)-\wp(x_i-x_j)}=
    \frac{1}{1-\hbar^2\wp(x_i-x_j)+O(\hbar^4)}\,.
    }
\end{array}
\eq
Therefore, the coefficients $1/(\wp(\hbar)-\wp(x_i-x_j))$ do not provide any input into the leading and the next to leading order in $\hbar$ of ${\bf H}_2$. Similarly to
(\ref{q651}) we have the expansion in the form:
 \beq\label{q68}
 \begin{array}{c}
    \displaystyle{
    \hbar^{-2}{\bf H}_2=\hbar {\bf H}_2^{(1)}+
    \hbar^2{\bf H}_2^{(2)}+O(\hbar^3)\,,
    }
\end{array}
\eq
where
 \beq\label{q69}
 \begin{array}{c}
    \displaystyle{
    {\bf H}_2^{(1)}=\sum\limits_{i<k<l}^N
    \Big(\p{\bar r}_{ki}+\p{\bar r}_{li}+\p{\bar r}_{lk}\Big)
    =(N-2)\sum\limits_{i>j}^N\p{\bar r}_{ij}
    }
\end{array}
\eq
and
 \beq\label{q70}
 \begin{array}{c}
    \displaystyle{
    {\bf H}_2^{(2)}=(N-2)\sum\limits^N_{i>j} ({\bar r}_{ji}\p{\bar r}_{ij} +\p {\bar m}_{ij})+(N-3)\sum\limits^N_{i<j<k} [{\bar r}_{jk}, \p{\bar r}_{ki}]+
    \sum\limits^N_{i<j<k} [{\bar r}_{ij}, \p{\bar r}_{ki}]\,,
    }
\end{array}
\eq
and the above argument with (\ref{q67}) means that the coefficients
$1/(\wp(\hbar)-\wp(x_i-x_j))$ do not effect (\ref{q69}) and (\ref{q70}).

Now we use commutativity of ${\bf H}_1$ and ${\bf H}_2$:
 \beq\label{q71}
 \begin{array}{c}
    \displaystyle{
    [{\bf H}_1,\hbar^{-2}{\bf H}_2]=0=\hbar^2[{\bf H}_1^{(1)},{\bf H}_2^{(1)}]+\hbar^3\Big( [{\bf H}_1^{(1)},{\bf H}_2^{(2)}]+[{\bf H}_1^{(2)},{\bf H}_2^{(1)}] \Big)+O(\hbar^4)\,.
    }
\end{array}
\eq
From (\ref{q66}) and (\ref{q69}) we conclude that ${\bf H}_2^{(1)}=(N-2){\bf H}_1^{(1)}$, so that $\hbar^2$ term in the right hand side  of (\ref{q71}) vanish. Then
 \beq\label{q72}
 \begin{array}{c}
    \displaystyle{
     [{\bf H}_1^{(1)},{\bf H}_2^{(2)}-(N-2){\bf H}_1^{(2)}]=
     [{\mathcal H}_2,{\bf H}_2^{(2)}-(N-2){\bf H}_1^{(2)}]=0\,.
    }
\end{array}
\eq
Thus, we define the next non-relativistic Hamiltonian as
 \beq\label{q73}
 \begin{array}{c}
    \displaystyle{
     {\mathcal H}_3={\bf H}_2^{(2)}-(N-2){\bf H}_1^{(2)}=
      \sum\limits^N_{i<j<k} [{\bar r}_{ij}(x_i-x_j)+{\bar r}_{kj}(x_k-x_j), \p{\bar r}_{ki}(x_k-x_i)]\,,
    }
\end{array}
\eq
and
 \beq\label{q731}
 \begin{array}{c}
    \displaystyle{
     [{\mathcal H}_2,{\mathcal H}_3]=0\,.
    }
\end{array}
\eq
Notice also that by differentiating the classical Yang-Baxter equation (\ref{r054}) one gets
 \beq\label{q74}
 \begin{array}{c}
    \displaystyle{
      [{\bar r}_{ki}+{\bar r}_{kj}, \p{\bar r}_{ij}]=
      [{\bar r}_{jk}+{\bar r}_{ji}, \p{\bar r}_{ki}]=
       [{\bar r}_{ij}+{\bar r}_{ik}, \p{\bar r}_{jk}]\,.
    }
\end{array}
\eq
The latter relations together with the properties (\ref{q64}) provide a certain freedom
in the definition of ${\mathcal H}_3$  (\ref{q73}). Also, in definitions of the Hamiltonians (\ref{q66}), (\ref{q73}) one can use the $r$-matrix $r_{ij}(z)$ (\ref{q62}) instead of ${\bar r}_{ij}(z)$ since they differ by a term proportional to identity operator, so that it does not effect the commutativity of ${\mathcal H}_2$ and ${\mathcal H}_3$. In such a form these
Hamiltonians were obtained in \cite{SeZ} (see also \cite{Z18}) using a different approach based on $R$-matrix valued Lax pairs. The model given by ${\mathcal H}_2$ can be viewed as a result of the freezing trick applied to the model of interacting tops \cite{LZ,LOSZ,GZ}.

To summarize, we proved the following
\begin{predl}
 The non-relativistic limit of the first (\ref{s46})-(\ref{s461}) and the
 second (\ref{s47}),(\ref{s472}) Hamiltonians of q-deformed spin chain
 provides a pair of commuting Hamiltonians ${\mathcal H}_2$ (\ref{q66}) and ${\mathcal H}_3$ (\ref{q73}).
\end{predl}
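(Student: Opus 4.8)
The plan is to expand both Hamiltonians in powers of the coupling $\hbar$, keep the two leading orders, and then feed the already-established relation $[{\bf H}_1,{\bf H}_2]=0$ (which holds by Proposition~3.1 together with the identities proved in Section~\ref{sect4}) through the same expansion to produce the commutativity of the limiting operators.

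First I would substitute the classical-limit expansion (\ref{q60}) of the normalized $R$-matrix, together with ${\bar F}^\hbar_{ij}=\hbar\,\p{\bar r}_{ij}+\hbar^2\,\p{\bar m}_{ij}+O(\hbar^3)$, into the explicit expression (\ref{s461}) for ${\bf H}_1$. Keeping the leading and next-to-leading terms produces (\ref{q65})--(\ref{q651}); the nested products of three or more $R$-matrices contribute only the commutator pieces $[{\bar r}_{jk},\p{\bar r}_{ki}]$ at order $\hbar^2$. This identifies ${\mathcal H}_2={\bf H}_1^{(1)}=\sum_{i>j}\p{\bar r}_{ij}$ of (\ref{q66}) as the first non-relativistic Hamiltonian. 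Next I would carry out the same expansion for $\hbar^{-2}{\bf H}_2$ using (\ref{s472}); the essential point is (\ref{q67}): since $\wp(\hbar)=\hbar^{-2}+O(\hbar^2)$, the scalar coefficients $1/(\wp(\hbar)-\wp(x_i-x_j))$ equal $1+O(\hbar^2)$ and so do not affect the first two orders. Organizing the three-index sums then gives (\ref{q68})--(\ref{q70}), with the crucial observation ${\bf H}_2^{(1)}=(N-2){\bf H}_1^{(1)}$.

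Then I would expand $[{\bf H}_1,\hbar^{-2}{\bf H}_2]=0$ in $\hbar$ as in (\ref{q71}): the $\hbar^2$-term $[{\bf H}_1^{(1)},{\bf H}_2^{(1)}]$ vanishes identically by the proportionality just noted, and the $\hbar^3$-term gives $[{\bf H}_1^{(1)},{\bf H}_2^{(2)}]+[{\bf H}_1^{(2)},{\bf H}_2^{(1)}]=0$, i.e. $[{\mathcal H}_2,\,{\bf H}_2^{(2)}-(N-2){\bf H}_1^{(2)}]=0$. Hence ${\mathcal H}_3:={\bf H}_2^{(2)}-(N-2){\bf H}_1^{(2)}$ commutes with ${\mathcal H}_2$, which is (\ref{q731}). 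Finally I would reduce ${\mathcal H}_3$ to the compact form (\ref{q73}), $\sum_{i<j<k}[{\bar r}_{ij}+{\bar r}_{kj},\p{\bar r}_{ki}]$, using the parity relations (\ref{q64}) and the differentiated classical Yang-Baxter equation (\ref{q74}) to combine the three distinct commutator contributions in ${\bf H}_2^{(2)}$ and ${\bf H}_1^{(2)}$.

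The main obstacle is the order-$\hbar^2$ bookkeeping: one has to match the nested $R$-matrix products appearing in (\ref{s461}) and (\ref{s472}) against the correct combinatorial multiplicities (the factors $N-2$ and $N-3$, and the lone $\sum_{i<j<k}[{\bar r}_{ij},\p{\bar r}_{ki}]$ in (\ref{q70})), and then verify that the specific linear combination ${\bf H}_2^{(2)}-(N-2){\bf H}_1^{(2)}$ collapses—after invoking (\ref{q64}) and (\ref{q74})—to the single three-body commutator (\ref{q73}). Everything else is the routine expansion described above, and the commutativity (\ref{q731}) is then immediate from the $\hbar^3$ order of (\ref{q71}).
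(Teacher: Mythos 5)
Your proposal follows essentially the same route as the paper: expand ${\bf H}_1$ and $\hbar^{-2}{\bf H}_2$ via (\ref{q60}), note that the coefficients $1/(\wp(\hbar)-\wp(x_i-x_j))$ contribute $1+O(\hbar^2)$, observe ${\bf H}_2^{(1)}=(N-2){\bf H}_1^{(1)}$, and extract $[{\mathcal H}_2,{\mathcal H}_3]=0$ from the $\hbar^3$ order of $[{\bf H}_1,\hbar^{-2}{\bf H}_2]=0$. The only minor inaccuracy is that the reduction of ${\bf H}_2^{(2)}-(N-2){\bf H}_1^{(2)}$ to the form (\ref{q73}) requires only the parity relation (\ref{q64}); the differentiated Yang--Baxter identity (\ref{q74}) is invoked in the paper merely to note a freedom in the definition of ${\mathcal H}_3$, not as a step in the derivation.
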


\paragraph{Trigonometric models.} In the trigonometric case it is possible to use all
$R$-matrices discussed in the previous Section. By substituting the corresponding classical $r$-matrix into (\ref{q66}) and (\ref{q73})
one obtains commuting Hamiltonians. For example when $M=2$ we deal with $R$-matrix (\ref{y053}). Its classical limit provides the following classical $r$-matrix:
 \beq\label{y0531}
 \begin{array}{c}
 {r}_{12}(z)=\left(
 \begin{array}{cccc}
 \pi\cot(\pi z) & 0 & 0 & 0
 \\
 0 & 0 &  \displaystyle{ \frac{\pi}{\sin(\pi z)} }  & 0
  \\
 0 &  \displaystyle{ \frac{\pi}{\sin(\pi z)} }  & 0 & 0
 \\
 c_7\sin(\pi z) & 0 & 0 & \pi\cot(\pi z)
 \end{array}
 \right)\,,
  \end{array}
 \eq
 and ${\bar r}_{12}(z)=r_{12}(z)-{\rm Id}\,\pi\cot(\pi z)$.
 Plugging it into (\ref{q66}) yields the Hamiltonian
 \beq\label{s0911}
 \begin{array}{c}
  \displaystyle{
 H^{\rm{7v}}=\frac{\rm Id}{2}\sum\limits_{i\neq j}^N\frac{\pi^2}{\sin^2(\pi(x_i-x_j))}-
 }
 \\ \ \\
   \displaystyle{
  -\frac{\pi^2}{2}\sum\limits_{i\neq j}^N \left(
  \frac{\cos(\pi(x_i-x_j))(\sigma_1^{(i)}\sigma_1^{(j)}+
  \sigma_2^{(i)}\sigma_2^{(j)})+\sigma_3^{(i)}\sigma_3^{(j)}
   }{\sin^2(\pi(x_i-x_j))} -\frac{c_7}{\pi}\,\cos(\pi(x_i-x_j))\, \sigma_-^{(i)}\sigma_-^{(j)}\right)\,,
  }
 \end{array}
\eq
 where ''7v'' stands for 7-vertex and $\sigma_-=e_{21}=\mats{0}{0}{1}{0}$.
 When $c_7=0$ it reproduces (up to common constant and the scalar first term in (\ref{s0911})) the XXZ Hamiltonian (\ref{s09}) derived in \cite{HW} as a potential of anisotropic spin Calogero-Moser model. The corresponding long-range spin chain was described in \cite{SeZ}.

 Notice also that by using the 6-vertex $R$-matrix in a slightly different gauge (\ref{q32}) the non-relativistic limit provides
 the isotropic Haldane-Shastry model (\ref{q400}), (\ref{s01}).

\section{Conclusion}
\setcounter{equation}{0}

We proposed a set of mutually commuting Hamiltonians for elliptic ${\rm GL}_M$ generalization of the q-deformed anisotropic Haldane-Shastry long-range spin chain.
Summarizing auxiliary statements from Sections \ref{sect3} and \ref{sect4} we proved the following
\begin{theor}\label{th1}
Consider expansions $D_k=D_k^{[0]}+\eta D_k^{[1]}+O(\eta^2)$ (\ref{s29}) and $\mD_k=\mD_k^{[0]}+\eta \mD_k^{[1]}+O(\eta^2)$ (\ref{s34}) of the spinless
(\ref{Dscalar}) and the spin XYZ (\ref{q10})  Ruijsenaars-Macdonald operators respectively. Then $\mathcal{D}_k^{[1]}={\rm Id}\, D_k^{[1]}-{\tilde H}_k$
(\ref{s39}), where ${\tilde H}_k$ are ${\rm Mat}_M^{\otimes N}$-valued functions free of differential operators. Being restricted to the points $z_j=x_j=j/N$ (\ref{s431}), (\ref{s23}) these matrix-valued functions
turn into the set of mutually commuting Hamiltonians (\ref{s45}). The set of points is an equilibrium position in the underlying classical spinless Ruijsenaars-Schneider model (\ref{s26})-(\ref{s222}).
\end{theor}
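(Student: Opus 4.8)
The plan is to assemble Theorem \ref{th1} from the constructions of Sections \ref{sect3} and \ref{sect4}: the statement packages three claims, namely the triangular decomposition $\mathcal{D}_k^{[1]} = {\rm Id}\, D_k^{[1]} - \tilde H_k$, the commutativity $[H_i,H_j]=0$ after restriction to $z_j=x_j$, and the identification of this point as a classical equilibrium. So the proof is a matter of threading together the expansion arguments with the lemmas on elliptic functions.

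First I would expand the spin operators (\ref{q10}) in $\eta$, using $p_i = 1 - \eta\partial_{z_i} + O(\eta^2)$. Setting $\eta = 0$ collapses every $p_i$ to the identity, so each factor ${\bf R}_I{\bf R}_I^{-1}$ reduces to ${\rm Id}$ by unitarity (\ref{q05}); this gives $\mathcal{D}_k^{[0]} = {\rm Id}\, D_k^{[0]}$ as in (\ref{s35}). At first order the only terms carrying a differential operator are those in which exactly one $p_i$ supplies its $-\eta\partial_{z_i}$ and the remaining ones stay at $1$; there the $R$-matrix products again cancel, leaving precisely ${\rm Id}\, D_k^{[1]}$. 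Every remaining first-order term arises from differentiating a single normalized $R$-matrix inside the ordered product (producing an $\bar F_{ij}$) and contains no differential operator; this is $-\tilde H_k$, exhibited explicitly in (\ref{s36})--(\ref{s38}). This establishes (\ref{s39}), with the analogous statement (\ref{s40}) for $\mathcal{D}_k^{[2]}$ obtained the same way.

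Next I would invoke the commutativity $[\mathcal{D}_k,\mathcal{D}_l]=0$ from \cite{MZ}, extract its $\eta^2$ coefficient (\ref{s42}), and substitute (\ref{s35}), (\ref{s39}), (\ref{s40}); the purely scalar contributions cancel by the spinless relation (\ref{s33}), leaving (\ref{s43}). Now restrict to $z_j = x_j = j/N$. By Lemma \ref{lemma1} (identity (\ref{s26})) the coefficients of $D_i^{[1]}$ become independent of the site index, so $D_i^{[1]}\big|_{\rm eq}$ is a scalar multiple of $\partial_{z_1}+\dots+\partial_{z_N}$; since $\tilde H_j$ depends only on the differences $z_a-z_b$, this operator annihilates it and the brackets $[D_i^{[1]},\tilde H_j]$, $[D_j^{[1]},\tilde H_i]$ vanish on the equilibrium. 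The $A_{i,l}$-terms vanish because $\partial_{z_l}D_j^{[0]}\big|_{\rm eq}$ coincides with $w_l^{\{j\}}=0$ from (\ref{s27}). What survives of (\ref{s43}) is $[H_i,H_j]=0$, i.e.\ (\ref{s45}), with $H_N=0$ trivial.

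The main obstacle is the pair of elliptic-function identities (\ref{s26}) and (\ref{s27}), which is exactly where the arithmetic of $x_j = j/N$ enters. For (\ref{s26}) I would use the bijection $\alpha$ of $\{1,\dots,N\}$ sending $i$ to the representative of $i+(m-l)$ modulo $N$: since $x_{\alpha(i)}-x_{\alpha(j)}$ differs from $x_i-x_j$ by an integer, periodicity of $\phi$ matches each summand on the left of (\ref{t604}) with one on the right. Identity (\ref{s27}) is more delicate: writing $\partial_z\phi(z)=\phi(z)g(z)$ with $g(x)=E_1(\hbar+x)-E_1(x)$, one expresses $w_m^{\{k\}}$ as a double sum, introduces $f(x)=g(x)-g(-x)$, and must first establish the auxiliary identities $\sum_{l\neq m}f_{lm}=0$ (\ref{t14}) and (\ref{t15}) via the involution $l\mapsto l':=2m-l$ reduced mod $N$ together with the skew-symmetry and periodicity of $f$; a regrouping of the two sums in $w_m^{\{k\}}$, adding and subtracting the $l\in I$, $l\neq m$ term and using $g_{lm}=f_{lm}+g_{ml}$, then collapses everything, using (\ref{s26}) once more, to $u^{\{k\}}\sum_{l\neq m}f_{lm}=0$. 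Finally, the equilibrium claim is immediate once (\ref{s26})--(\ref{s27}) hold: these say precisely that all velocities (\ref{s24}) in the $k$-th flow coincide and all momentum derivatives (\ref{s25}) vanish, and differentiating (\ref{s221}) then yields vanishing accelerations (\ref{s222}); hence the differences $z_i(t_k)-z_j(t_k)=x_i-x_j$ are preserved, an equilibrium in the uniformly moving frame as asserted in (\ref{s26})--(\ref{s222}).
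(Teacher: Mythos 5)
Your proposal is correct and follows essentially the same route as the paper: the $\eta$-expansion identifying $\tilde H_k$ as in (\ref{s35})--(\ref{s40}), the order-$\eta^2$ consequence (\ref{s43}) of $[\mD_i,\mD_j]=0$ restricted to the equilibrium point, and the proofs of (\ref{s26})--(\ref{s27}) via the shift bijection $\alpha$ and the involution $l\mapsto 2m-l \bmod N$ applied to $f(x)=g(x)-g(-x)$. Your explicit remark that $\sum_m\partial_{z_m}\tilde H_j=0$ because $\tilde H_j$ depends only on coordinate differences is a small clarification the paper leaves implicit, but otherwise the arguments coincide.
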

The proof of this statement uses a set of elliptic function identities from Section \ref{sect4}.

In Section \ref{sect5} we proved
\begin{theor}\label{th2}
The Theorem \ref{th1} also holds true for the spin operators in the Ruijsenaars form (\ref{q20}). The freezing trick provides long-range spin chain Hamiltonians, which coincide with those obtained from the Macdonald form in Theorem \ref{th1}.
\end{theor}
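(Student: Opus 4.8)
The plan is to transcribe the proof of Theorem~\ref{th1} line by line into the Ruijsenaars setting; the only genuinely new ingredient needed is the elementary identity (\ref{t78}) matching the scalar prefactors of the two forms at the equilibrium point. First I would recall from \cite{MZ} that the operators $\mathcal{D}'_k$ in (\ref{q20}) mutually commute, and expand them in $\eta$ as $\mathcal{D}'_k=(\mathcal{D}'_k)^{[0]}+\eta(\mathcal{D}'_k)^{[1]}+O(\eta^2)$. Using $p_i=1-\eta\partial_{z_i}+O(\eta^2)$ and the unitarity (\ref{q05}) one gets $(\mathcal{D}'_k)^{[0]}={\rm Id}\sum_{|I|=k}\sqrt{A_I A_I'}={\rm Id}\,(D_k')^{[0]}$. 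For the linear term I would argue that each $\partial_{z_i}$ produced by the shifts either acts on the scalar factor $\sqrt{A_I'}$ standing to the right of $\mathbf{R}_I^{-1}$ — contributing to the scalar differential operator $(D_k')^{[1]}$ — or acts on one of the $R$-matrices, contributing a derivative-free matrix term; in total $(\mathcal{D}'_k)^{[1]}={\rm Id}\,(D_k')^{[1]}-\tilde H_k'$ with $\tilde H_k'\in{\rm End}(\mathcal H)$ free of differential operators, exactly parallel to (\ref{s36})--(\ref{s39}).

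Next I would feed these expansions into the commutativity relations (\ref{s41})--(\ref{s42}), valid for the primed operators as well, and restrict to the point (\ref{s23}). The $\eta^2$-order relation reduces to an analogue of (\ref{s43}); the cross-commutators $[(D_i')^{[1]},\tilde H_j']$ and $[(D_j')^{[1]},\tilde H_i']$ will vanish at equilibrium once the velocities $u'^{\{k\}}_m$ of (\ref{t74}) are all equal (so $(D_i')^{[1]}|_{\rm eq}$ is proportional to the central operator $\partial_{z_1}+\dots+\partial_{z_N}$), and the surviving $A$-coefficient sums vanish once $w'^{\{k\}}_m=0$. The equality $u'^{\{k\}}_m=u'^{\{k\}}_l$ of (\ref{t76}) I would prove by the same shift bijection $\alpha:i\mapsto i+(m-l)\bmod N$ as in Lemma~\ref{lemma1}, applied now to the symmetric combination $\sqrt{\phi(x_j-x_i)\phi(x_i-x_j)}$ and using the periodicity (\ref{a096}) of $\phi$. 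The vanishing $w'^{\{k\}}_m=0$ of (\ref{t77}) follows from (\ref{t75}): there $w'^{\{k\}}_m$ is a prefactor — common to all subsets $I$ by (\ref{t76}) — times $\sum_{l\neq m}f(x_m-x_l)$, which is zero by (\ref{t14}). Hence $[H_i',H_j']=0$ for $H_k'=\tilde H_k'|_{\rm eq}$, so the freezing trick indeed works in the Ruijsenaars form.

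It remains to identify $H_k'$ with $H_k$. Here I would note that $\mathcal{D}_k$ in (\ref{q10}) and $\mathcal{D}'_k$ in (\ref{q20}) are built from the \emph{same} ordered products of $R$-matrices and shift operators, and differ only in the scalar coefficient carried by each summand and in whether that coefficient sits entirely on the left ($A_I=\prod_{i\in I,\,j\notin I}\phi(z_j-z_i)$) or is split as $\sqrt{A_I}\cdots\sqrt{A_I'}$. Upon extracting the derivative-free part and setting $z_j=x_j$, each summand becomes a numerical factor times an $R$-matrix string $\bar R\cdots\bar F\cdots\bar R$; the factor attached to a given string in $\tilde H_k'|_{\rm eq}$ differs from the one in $\tilde H_k|_{\rm eq}$ only by $\sqrt{\prod_{i\in I,\,j\notin I}\phi(x_i-x_j)/\phi(x_j-x_i)}$, which is $1$ by (\ref{t78}). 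That identity is proved by multiplying numerator and denominator by $\prod_{i,k\in I,\,i\neq k}\phi(x_k-x_i)$ to rewrite the product as $\prod_{i\in I}\prod_{j\neq i}\phi(x_j-x_i)/\phi(x_i-x_j)$ and then pairing $\phi(x_j-x_i)$ with $\phi(x_i-x_{j'})$, $j'=2i-j\bmod N$, so each inner product collapses to $1$. Thus $\tilde H_k'|_{\rm eq}=\tilde H_k|_{\rm eq}$ term by term and $H_k'=H_k$.

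The step I expect to be the main obstacle is the bookkeeping in the first paragraph: one must verify that every contribution of $\partial_{z_i}$ landing on the right-hand scalar factor $\sqrt{A_I'}$ is genuinely scalar — so that it only renormalizes $(D_k')^{[1]}$ and never enters $\tilde H_k'$ — and, more delicately, that after the split the matrix part $\tilde H_k'$ reduces at equilibrium to precisely the $R$-matrix strings appearing in $\tilde H_k$ rather than to conjugates of them by the $\sqrt{A_I'}$-type prefactors. Once the placement of the scalar factors through the square-root splitting is tracked carefully, the rest of the argument is a verbatim copy of Sections~\ref{sect3}--\ref{sect4}.
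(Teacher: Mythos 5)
Your proposal is correct and follows essentially the same route as the paper's Section~\ref{sect5}: expansion of $\mathcal{D}'_k$ in $\eta$ to extract $\tilde H'_k={\rm Id}\,(D_k')^{[1]}-(\mathcal{D}'_k)^{[1]}$, the classical equilibrium identities (\ref{t76})--(\ref{t77}) proved via the shift bijection of Lemma~\ref{lemma1} and the vanishing sum (\ref{t14}), and the identification $H'_k=H_k$ through the scalar identity (\ref{t78}) established by exactly the same pairing argument. The bookkeeping you flag as a potential obstacle is treated no more explicitly in the paper than in your sketch, so no gap remains.
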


In Section \ref{sect6} the limits to trigonometric models were studied.
\begin{theor}\label{th3}
The Theorem \ref{th1} holds true for the spin operators (\ref{q10}) constructed by means of trigonometric $R$-matrices (\ref{q200})-(\ref{q203}).
\end{theor}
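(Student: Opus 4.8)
The plan is to reduce the trigonometric statement to the elliptic Theorem \ref{th1} by exploiting that the trigonometric $R$-matrices (\ref{q200})--(\ref{q203}) arise from the Baxter--Belavin $R$-matrix through the \emph{spectral-parameter-independent} gauge transformation (\ref{y051}), as established in \cite{AHZ}. First I would invoke (\ref{y05})--(\ref{y052}): since $G$ in (\ref{y051}) does not depend on $z$, the spin operator built from a trigonometric $R$-matrix equals $G^{(1)}\cdots G^{(N)}\,\mathcal{D}_k\,(G^{(1)})^{-1}\cdots(G^{(N)})^{-1}$ with $\mathcal{D}_k$ the elliptic operator (\ref{q10}), and therefore the trigonometric spin operators inherit mutual commutativity. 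Equivalently, everything below may be read as the controlled limit ${\rm Im}(\tau)\to+\infty$ of the elliptic construction after the gauge transformation, all ingredients having finite limits.

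Next I would check that the $\eta$-expansion survives. Conjugation by the constant matrix $G^{(1)}\cdots G^{(N)}$ commutes with the shift operators $p_i$ and with the derivatives $\partial_{z_i}$, hence with the order-by-order expansion in $\eta$. Consequently the decomposition $\mathcal{D}_k^{[1]}={\rm Id}\,D_k^{[1]}-\widetilde{H}_k$ of (\ref{s39}) is preserved, with $\widetilde{H}_k$ still free of differential operators; here $D_k^{[0]}$ and $D_k^{[1]}$ are the scalar ($M=1$) operators, i.e. (\ref{s30})--(\ref{s31}) with $\phi$ replaced by its trigonometric degeneration $\phi^{\rm trig}$ (\ref{a081}), and they involve no $R$-matrices.

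The technically essential point is that the equilibrium-position identities (\ref{s26})--(\ref{s27}) continue to hold at $z_j=x_j=j/N$. I would argue this in one of two equivalent ways. Either take the limit ${\rm Im}(\tau)\to+\infty$ directly in Lemma \ref{lemma1} and in the lemmas of Section \ref{sect4}: for generic $\hbar$ the points $x_i-x_j=(i-j)/N$ avoid the poles of $\phi^{\rm trig}$, the convergence is uniform near the equilibrium configuration, so the elliptic identities pass to the limit. Or --- more self-contained --- repeat the proofs of Section \ref{sect4} verbatim with $\phi$ replaced by $\phi^{\rm trig}$ and $E_1$ replaced by its degeneration $\pi\cot(\pi z)$ (the additive constant in the latter cancels in $g$ and $f$ of (\ref{t10})--(\ref{t11})): the only properties used there are the unit-period behaviour $\phi^{\rm trig}(z+1)=\phi^{\rm trig}(z)$, which holds because $\cot(\pi(z+1))=\cot(\pi z)$, and the oddness of $E_1$, both of which survive. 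Thus (\ref{s26})--(\ref{s27}), hence (\ref{s222}), hold and the points $x_j=j/N$ are again an equilibrium position, now in the trigonometric Ruijsenaars--Schneider model.

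With (\ref{s39}) and (\ref{s26})--(\ref{s27}) in hand, the Proposition of Section \ref{sect3} (equivalently, Theorem \ref{th1}) applies without change: restricting $\widetilde{H}_k$ to $z_j=x_j$ produces a commuting family $H_k$, $k=1,\dots,N-1$, with $H_N=0$. The only obstacle I anticipate is checking that the extra terms in the Cremmer--Gervais (\ref{q202}) and non-standard (\ref{q203}) $R$-matrices do not disrupt any structural step of the elliptic argument. This, however, should be automatic: all three $R$-matrices are gauge-equivalent to the Baxter--Belavin one via the \emph{same} $z$-independent transformation (\ref{y051}), so commutativity of the spin operators is guaranteed, while the scalar Ruijsenaars--Macdonald coefficients $\prod\phi^{\rm trig}(z_j-z_i)$ entering (\ref{s30}), (\ref{s31}) and the equilibrium identities do not depend on the choice of $R$-matrix at all. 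Hence the result holds for each of (\ref{q200})--(\ref{q203}).
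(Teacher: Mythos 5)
Your proposal is correct and follows essentially the same route as the paper: the $z$-independent gauge transformation (\ref{y051})--(\ref{y052}) plus the finite trigonometric limit carries commutativity of the spin operators over to (\ref{q200})--(\ref{q203}), and the freezing-trick identities of Section \ref{sect4} survive with $\phi\to\phi^{\rm trig}$ (cf.\ the paper's remark (\ref{q39})), since the scalar coefficients are $R$-matrix independent. Only a small wording caveat: these trigonometric $R$-matrices are \emph{limits} of (differently) gauge-transformed elliptic $R$-matrices rather than literally gauge-equivalent to the Baxter--Belavin one via a single common $G$, but your ``controlled limit'' reading already makes the argument precise, exactly as in the paper.
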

In the particular case of ${\rm U}_q({\widehat {\rm  gl}_2})$ $R$-matrix (the standard 6-vertex XXZ case) the q-deformed Haldane-Shastry model is reproduced in the form presented in \cite{Lam,LPS}. Higher rank analogues based on ${\rm U}_q({\widehat {\rm  gl}_M})$ $R$-matrix are included into the family of integrable long-range spin chains as well
(see Section 6 and Appendix C).
Finally, in Section \ref{sect7} we considered the non-relativistic limits $\hbar\rightarrow 0$. In this way we obtained two first elliptic commuting Hamiltonians (\ref{q66}), (\ref{q73}) obtained previously in \cite{SeZ}.

\section{Appendix}\label{sectA}
\subsection{A: Elliptic functions and $R$-matrix}
\def\theequation{A.\arabic{equation}}
\setcounter{equation}{0}

We use the odd ($\vth(-z)=-\vth(z)$) theta-function
\beq\label{a0963}\begin{array}{c} \displaystyle{
     \vartheta (z)=\vartheta (z|\tau) = -\sum_{k\in \mathbb{Z}} \exp \left( \pi \imath \tau (k + \frac{1}{2})^2 + 2\pi \imath (z + \frac{1}{2}) (k + \frac{1}{2}) \right)\,,\quad {\rm Im}(\tau)>0\,,
}\end{array}\eq
which has simple zero at $z=0$.
The  Kronecker elliptic function \cite{Weil} is as follows:
\beq\displaystyle{
\label{a0962}
    \phi(z, u) =
            \frac{\vartheta'(0) \vartheta (z + u)}{\vartheta (z) \vartheta (u)}\,.
}\eq
 It has simple pole at $z=0$ and the following local expansion (near $z=0$):
\beq\label{serphi}
\begin{array}{c} \displaystyle{
    \phi(z, u) = \frac{1}{z} + E_1 (u) + z\,\frac{E^2_1(u) - \wp(u)}{2} + O(z^2)\,,
}\end{array}\eq
where
\beq\label{serE}
\begin{array}{c}
\displaystyle{
    E_1(z) = \partial_z \ln \vartheta(z) =\zeta(z)+ \frac{z}{3} \frac{\vartheta'''(0) }{\vartheta'(0)}=\frac{1}{z} + \frac{z}{3} \frac{\vartheta'''(0) }{\vartheta'(0)} + O(z^3)\,.
}\end{array}
\eq
Here we used $\wp(z)$ and $\zeta(z)$. These are the Weierstrass $\wp$- and $\zeta$-functions.
The quasi-periodic behaviour on the lattice of periods $\Gamma=\mZ\oplus\mZ\tau$ (of elliptic curve $\mC/\Gamma$) for
theta function
 \beq\label{a0961}
  \begin{array}{l}
  \displaystyle{
 \vth(z+1)=-\vth(z)\,,\qquad \vth(z+\tau)=-e^{-\pi\imath\tau-2\pi\imath z}\vth(z)\,,
 }
 \end{array}
 \eq
 yields
 \beq\label{a096}
  \begin{array}{l}
  \displaystyle{
 \phi(z+1,u)= \phi(z,u)\,,\qquad  \phi(z+\tau,u)= e^{-2\pi\imath u}\phi(z,u)\,.
 }
 \end{array}
 \eq
 The Kronecker function (\ref{a0962}) satisfies the addition formula
%
\beq \begin{array}{c} \label{Fay} \displaystyle{
    \phi(z_1, u_1) \phi(z_2, u_2) = \phi(z_1, u_1 + u_2) \phi(z_2 - z_1, u_2) + \phi(z_2, u_1 + u_2) \phi(z_1 - z_2, u_1)\,,
}\end{array}\eq
the identity
\beq\begin{array}{c} \displaystyle{
\label{a0964}
    \phi(z, u) \phi(z, -u) = \wp (z) - \wp (u)
}\end{array}\eq
and relation
\beq\begin{array}{c} \displaystyle{
\label{a0965}
    \p_u\phi(x, u)=\phi(x,u)(E_1(x+u)-E_1(u))\,.
}\end{array}\eq
The latter directly follows from the definition (\ref{a0962}).

For the elliptic ${\rm GL}_M$ Baxter-Belavin $R$-matrix \cite{Baxter} the following set of $M^2$ functions is used:
 \beq\label{a08}
 \begin{array}{c}
  \displaystyle{
 \vf_a(z,\om_a+\hbar)=\exp(2\pi\imath\frac{a_2z}{M})\,\phi(z,\om_a+\hbar)\,,\quad
 \om_a=\frac{a_1+a_2\tau}{M}\,,
 }
 \end{array}
 \eq
where  $a=(a_1, a_2)\in\mZ_M\times\mZ_M$.
We also need a special matrix basis in $\MatM$:
\beq\label{a971}\begin{array}{c} \displaystyle{
        T_\al = \exp \left( \al_1 \al_2 \frac{\pi \imath}{M} \right) Q^{\al_1} \Lambda^{\al_2}, \quad \al = (\al_1, \al_2)\in \mZ_M \times \mZ_M\,,
}\end{array}\eq
where $Q,\Lambda\in\MatM$ are as follows:
\beq\label{a041}
 \begin{array}{c}
  \displaystyle{
 Q_{kl}=\delta_{kl}\exp(\frac{2\pi
 \imath}{{ M}}k)\,,\ \ \
 \Lambda_{kl}=\delta_{k-l+1=0\,{\hbox{\tiny{mod}}}\,
 { M}}\,,\quad Q^{ M}=\Lambda^{ M}=1_M
 }
 \end{array}
 \eq
 %
 %
 Finally, the Baxter-Belavin elliptic $R$-matrix is defined as
\begin{equation}\label{BB}
\begin{array}{c}
    \displaystyle{
    R^{\hbar}_{12} (x) = \frac{1}{M}
    \sum_\al \varphi_\al (x, \frac{\hbar}{M} + \om_\al) T_\al \otimes T_{-\al}}\in\MatM^{\otimes 2}\,.
\end{array}
\end{equation}
 In the classical limit $\hbar\rightarrow 0$
 \beq\label{r052}
 \begin{array}{c}
  \displaystyle{
 R_{12}^\hbar(z)=\hbar^{-1}1_M\otimes 1_M+r_{12}(z)+\hbar\, m_{12}(z)+O(\hbar^2)\,,
}
 \end{array}
 \eq
 it provides the classical Belavin-Drinfeld elliptic $r$-matrix \cite{BD} %
 \beq\label{r053}
 \begin{array}{c}
    \displaystyle{
    r_{12} (z) = \frac{1}{M}\, E_1(z) 1_M \otimes 1_M + \frac{1}{M}
    \sum_{\al \neq 0} \varphi_\al (z, \om_\al) T_\al \otimes T_{-\al}\,,
    }
\end{array}
\eq
 which satisfies the classical Yang-Baxter equation:
 \beq\label{r054}
 \begin{array}{c}
    \displaystyle{
    [r_{12},r_{23}]+[r_{12},r_{13}]+[r_{13},r_{23}]=0\,,
    \qquad r_{ij}=r_{ij}(z_i-z_j)\,.
    }
\end{array}
\eq
The residue of $R_{12}^\hbar(z)$ (and $r_{12}(z)$) at $z=0$ is the permutation operator
\begin{equation}\label{P12}
\begin{array}{c}
    \displaystyle{
    \res\limits_{z=0}R_{12}^\hbar(z)=\res\limits_{z=0}r_{12}(z)= P_{12}=\sum\limits_{k,l=1}^M e_{kl}\otimes e_{lk}=\frac{1}{M}\sum\limits_{\al\in\,\mZ_M\times\mZ_M}T_\al\otimes T_{-\al}\,,
    }
\end{array}
\end{equation}
where $\{e_{kl}\}$ is the standard matrix basis in $\MatM$. For any $a,b\in\mC^M$
we have $P_{12}(a\otimes b)=(b\otimes a)$,
and for any $A,B\in\MatM$: $P_{12}(A\otimes B)=(B\otimes A)P_{12}$. Also, $P_{12}^2=1_{M^2}$.

From the obvious property $\phi(z,u)=-\phi(-z,-u)$ one easily gets
the skew-symmetry property of (\ref{BB}):
 \beq\label{r055}
 \begin{array}{c}
    \displaystyle{
    R_{12}^{-\hbar}(-z)=-R_{21}^\hbar(z)\,.
    }
\end{array}
\eq
Then plugging into (\ref{r055}) the classical limit expansion (\ref{r052}) we obtain the following parity properties:
 \beq\label{r056}
 \begin{array}{c}
    \displaystyle{
    r_{12}(z)=-r_{21}(-z)\,,\qquad
    m_{12}(z)=m_{21}(-z)\,,\qquad
    \p r_{12}(z)=\p r_{21}(-z)\,,
    }
\end{array}
\eq
where $\p r_{12}(z)=\p_z r_{12}(z)$.

Presented here is a shorten version of the Appendix from  \cite{MZ}. See also the Appendix from \cite{ZZ}, where
different form of the elliptic $R$-matrix are briefly reviewed.

\subsection{B: Hamiltonians for $N=4$}
\def\theequation{B.\arabic{equation}}
\setcounter{equation}{0}

Here we use notations (\ref{s4601}), (\ref{s4602}) and $x_{ij}=x_i-x_j$. The Hamiltonians (\ref{s431}) for $N=4$ sites obtained through
(\ref{s39}) are as follows:
  \beq\label{ex03}
 \begin{array}{l}
H_1  =
 \displaystyle{
 \phi(x_{12})\phi(x_{32})\phi(x_{42})\bar{R}_{12}\bar{F}_{21}+
 }
 \displaystyle{
 \phi(x_{13})\phi(x_{23})\phi(x_{43})\bar{R}_{23}\bar{F}_{32}+
}
\\ \ \\
 \displaystyle{
 +\phi(x_{13})\phi(x_{23})\phi(x_{43})\bar{R}_{23}\bar{R}_{13}\bar{F}_{31}\bar{R}_{32}+
}
\displaystyle{
\phi(x_{14})\phi(x_{24})\phi(x_{34})\bar{R}_{34}\bar{F}_{43}+
}
\\ \ \\
\displaystyle{
+\phi(x_{14})\phi(x_{24})\phi(x_{34})\bar{R}_{34}\bar{R}_{24}\bar{F}_{42}\bar{R}_{43}+
}
\displaystyle{
\phi(x_{14})\phi(x_{24})\phi(x_{34})\bar{R}_{34}\bar{R}_{24}\bar{R}_{14}\bar{F}_{41}\bar{R}_{42}\bar{R}_{43}\,,\quad\quad\
}
 \end{array}
 \eq
 \beq\label{ex04}
 \begin{array}{l}
H_2  =  \phi(x_{21})\phi(x_{23})\phi(x_{41})\phi(x_{43})\bar{R}_{23}\bar{F}_{32}+
 \phi(x_{21})\phi(x_{24})\phi(x_{31})\phi(x_{34})\bar{R}_{34}\bar{F}_{43}+
 \\ \ \\
 +\phi(x_{21})\phi(x_{24})\phi(x_{31})\phi(x_{34})\bar{R}_{34}\bar{R}_{24}\bar{F}_{42}\bar{R}_{43}+
\phi(x_{12})\phi(x_{13})\phi(x_{42})\phi(x_{43})\bar{R}_{12}\bar{F}_{21}+
 \\ \ \\
+\phi(x_{12})\phi(x_{13})\phi(x_{42})\phi(x_{43})\bar{R}_{12}\bar{R}_{13}\bar{F}_{31}\bar{R}_{21}+
\phi(x_{12})\phi(x_{14})\phi(x_{32})\phi(x_{34})\bar{R}_{12}\bar{F}_{21}+
 \\ \ \\
+\phi(x_{12})\phi(x_{14})\phi(x_{32})\phi(x_{34})\bar{R}_{34}\bar{F}_{43}+
\phi(x_{12})\phi(x_{14})\phi(x_{32})\phi(x_{34})\bar{R}_{12}\bar{R}_{34}\bar{R}_{14}\bar{F}_{41}\bar{R}_{43}\bar{R}_{21}+
  \\ \ \\
+\phi(x_{13})\phi(x_{14})\phi(x_{23})\phi(x_{24})\bar{R}_{23}\bar{F}_{32}+
\phi(x_{13})\phi(x_{14})\phi(x_{23})\phi(x_{24})\bar{R}_{23}\bar{R}_{13}\bar{F}_{31}\bar{R}_{32}+
 \\ \ \\
+\phi(x_{13})\phi(x_{14})\phi(x_{23})\phi(x_{24})\bar{R}_{23}\bar{R}_{24}\bar{F}_{42}\bar{R}_{32}+
\\ \ \\
+\phi(x_{13})\phi(x_{14})\phi(x_{23})\phi(x_{24})\bar{R}_{23}\bar{R}_{13}\bar{R}_{24}\bar{R}_{14}\bar{F}_{41}\bar{R}_{42}\bar{R}_{31}\bar{R}_{32}\,,
 \end{array}
 \eq
 \beq\label{ex05}
 \begin{array}{l}
 H_3=\phi(x_{31})\phi(x_{32})\phi(x_{34})\bar{R}_{34}\bar{F}_{43}
   +\phi(x_{21})\phi(x_{23})\phi(x_{24})\bar{R}_{23}\bar{F}_{32}+
  \\ \ \\
 +\phi(x_{21})\phi(x_{23})\phi(x_{24})\bar{R}_{23}\bar{R}_{24}\bar{F}_{42}\bar{R}_{32}+
 \phi(x_{12})\phi(x_{13})\phi(x_{14})\bar{R}_{12}\bar{F}_{21}+
  \\ \ \\
 +\phi(x_{12})\phi(x_{13})\phi(x_{14})\bar{R}_{12}\bar{R}_{13}\bar{F}_{31}\bar{R}_{21}+
 \phi(x_{12})\phi(x_{13})\phi(x_{14})\bar{R}_{12}\bar{R}_{13}\bar{R}_{14}\bar{F}_{41}\bar{R}_{31}\bar{R}_{21}\,.\qquad\qquad
 \end{array}
 \eq
 The modified version (\ref{s461}), (\ref{s472}) of these Hamiltonians:
  %
  $$
 \begin{array}{lll}
{\bf H}_1 & = &
 \displaystyle{
\bar{R}_{12}(x_1-x_2)\bar{F}_{21}(x_2-x_1)
+\bar{R}_{23}(x_2-x_3)\bar{F}_{32}(x_3-x_2)+\bar{R}_{34}(x_3-x_4)\bar{F}_{43}(x_4-x_3)+
 }
  \end{array}
 $$
  \beq\label{ex03a}
 \begin{array}{lll}
\phantom{{\bf H}_1} & \phantom{=} &
 \displaystyle{
 +\bar{R}_{23}(x_2-x_3)\bar{R}_{13}(x_1-x_3)\bar{F}_{31}(x_3-x_1)\bar{R}_{32}(x_3-x_2)+
 }
\\ \ \\
 & &
 \displaystyle{
 +\bar{R}_{34}(x_3-x_4)\bar{R}_{24}(x_2-x_4)\bar{F}_{42}(x_4-x_2)\bar{R}_{43}(x_4-x_3)+
}
\\ \ \\
& &
\displaystyle{
+\bar{R}_{34}(x_3-x_4)\bar{R}_{24}(x_2-x_4)\bar{R}_{14}(x_1-x_4)\bar{F}_{41}(x_4-x_1)\bar{R}_{42}(x_4-x_2)\bar{R}_{43}(x_4-x_3)\,,
}
 \end{array}
 \eq

 \beq\label{ex04a}
 \begin{array}{lll}
{\bf H}_2 & = &\left( \frac{1}{\wp(\hbar)-\wp(\frac{1}{2})}+\frac{1}{\wp(\hbar)-\wp(\frac{1}{4})}\right)
\Big(\bar{R}_{12}(x_1-x_2)\bar{F}_{21}(x_2-x_1)+
\\ \ \\
 & &\qquad\qquad\qquad\qquad\qquad+\bar{R}_{23}(x_2-x_3)\bar{F}_{32}(x_3-x_2)+\bar{R}_{34}(x_3-x_4)\bar{F}_{43}(x_4-x_3)\Big)+
 \\ \ \\
& &+\frac{1}{\wp(\hbar)-\wp(\frac{1}{4})}\bar{R}_{12}(x_1-x_2)\bar{R}_{13}(x_1-x_3)\bar{F}_{31}(x_3-x_1)\bar{R}_{21}(x_2-x_1)+
\\ \ \\
 & &+\frac{1}{\wp(\hbar)-\wp(\frac{1}{4})}\bar{R}_{34}(x_3-x_4)\bar{R}_{24}(x_2-x_4)\bar{F}_{42}(x_4-x_2)\bar{R}_{43}(x_4-x_3)+
\\ \ \\
 & & +\frac{1}{\wp(\hbar)-\wp(\frac{1}{4})}\bar{R}_{23}(x_2-x_3)\bar{R}_{13}(x_1-x_3)\bar{F}_{31}(x_3-x_1)\bar{R}_{32}(x_3-x_2)+\qquad\qquad\qquad\qquad\qquad\
 \\ \ \\
& &+\frac{1}{\wp(\hbar)-\wp(\frac{1}{4})}\bar{R}_{23}(x_2-x_3)\bar{R}_{24}(x_2-x_4)\bar{F}_{42}(x_4-x_2)\bar{R}_{32}(x_3-x_2)+
  \end{array}
 \eq
  $$
 \begin{array}{lll}
 \phantom{{\bf H}_2} & \phantom{=} &
  +\frac{1}{\wp(\hbar)-\wp(\frac{1}{2})}\bar{R}_{12}(x_1-x_2)\bar{R}_{34}(x_3-x_4)\bar{R}_{14}(x_1-x_4)\bar{F}_{41}(x_4-x_1)\times
  \qquad\qquad\qquad\qquad\qquad\
 \\ \ \\
 & & \qquad\qquad\qquad\qquad\qquad\qquad\qquad\quad\times\bar{R}_{43}(x_4-x_3)\bar{R}_{21}(x_2-x_1)+
  \\ \ \\
& &+\frac{1}{\wp(\hbar)-\wp(\frac{1}{4})}\bar{R}_{23}(x_2-x_3)\bar{R}_{13}(x_1-x_3)\bar{R}_{24}(x_2-x_4)\bar{R}_{14}(x_1-x_4)
\times
\\ \ \\
& & \qquad\qquad \times\bar{F}_{41}(x_4-x_1)\bar{R}_{42}(x_4-x_2)\bar{R}_{31}(x_3-x_1)\bar{R}_{32}(x_3-x_2)\,.
 \end{array}
 $$
Redefine the third Hamiltonian as $\displaystyle{\bf H}_3=\frac{1}{\phi(\frac{1}{4})\phi(\frac{2}{4})\phi(\frac{3}{4})}H_3$, the expression is the following:
 \beq\label{ex05a}
 \begin{array}{lll}
\displaystyle {\bf H}_3 & = &\bar{R}_{34}(x_3-x_4)\bar{F}_{43}(x_4-x_3)
+\bar{R}_{23}(x_2-x_3)\bar{F}_{32}(x_3-x_2)+
\\ \ \\
& &+\bar{R}_{12}(x_1-x_2)\bar{F}_{21}(x_2-x_1)+\bar{R}_{23}(x_2-x_3)\bar{R}_{24}(x_2-x_4)\bar{F}_{42}(x_4-x_2)\bar{R}_{32}(x_3-x_2)+
  \\ \ \\
& & +\bar{R}_{12}(x_1-x_2)\bar{R}_{13}(x_1-x_3)\bar{F}_{31}(x_3-x_1)\bar{R}_{21}(x_2-x_1)+
  \\ \ \\
& &+\bar{R}_{12}(x_1-x_2)\bar{R}_{13}(x_1-x_3)\bar{R}_{14}(x_1-x_4)\bar{F}_{41}(x_4-x_1)\bar{R}_{31}(x_3-x_1)\bar{R}_{21}(x_2-x_1)\,.
 \end{array}
 \eq

%
\subsection{C: Trigonometric ${\rm U}_q({\widehat {\rm  gl}_M})$ $R$-matrix}
\def\theequation{C.\arabic{equation}}
\setcounter{equation}{0}

Here we explain that the spin operators (\ref{q10}) with $R$-matrix (\ref{q270}) commute
and the freezing trick provides commutative set of operators based on this $R$-matrix.

As was shown in \cite{Uglov} (see also \cite{Lam,LPS})
the Polychronakos freezing trick works for spin operators with $R$-matrix (\ref{q270}) in the case $M=2$. In fact, since the relations (\ref{q39}) are independent of $M$, the freezing trick holds true for all $M>2$ as well. In this
way the higher rank q-deformed Haldane-Shastry model is included into the family of models under consideration.

Let us now focus on the proof of commutativity of spin operators (\ref{q10}). It was shown in \cite{MZ}
that the commutativity is equivalent to a set of identities (Theorem 1,\cite{MZ}), and the
elliptic $R$-matrix satisfy these identities (Theorem 2, \cite{MZ}). The proof of Theorem 1 does use explicit form of $R$-matrix. We argue below
that the proof of Theorem 2 from \cite{MZ} for $k>1$ can be naturally applied to $R$-matrix (\ref{q270}).
For identities with $k=1$ the associative Yang-Baxter equation was used in \cite{MZ}. The latter equation is valid
for (\ref{q270}) with $M=2$, but is not valid for $M>2$. For this reason we also give a proof of identities for $k=1$ without usage of the associative Yang-Baxter equation.

\paragraph{Proof of $R$-matrix identities.} Consider the XXZ ${\rm U}_q({\widehat {\rm  gl}_M})$ $R$-matrix (\ref{q270}) and rewrite it
using notation (\ref{q31}):
  \beq\label{q2701}
   \begin{array}{c}
   \displaystyle{
  R^{\rm trig}_{12}(u)=
   \pi\imath\Big(\frac{u+1}{u-1}+\frac{t+1}{t-1}\Big)\sum\limits_{i=1}^M e_{ii}\otimes e_{ii}+
  }
  \\ \ \\
   \displaystyle{
 +2\pi\imath \frac{t^{1/2}}{t-1}\sum\limits_{i\neq j}^M e_{ii}\otimes e_{jj}+
 2\pi\imath\sum\limits_{i< j}^M
 \Big( e_{ij}\otimes e_{ji}\,\frac{u}{u-1}+e_{ji}\otimes
 e_{ij}\,\frac{1}{u-1}\Big)\,.
  }
  \end{array}
  \eq
  This $R$-matrix is normalized as given in (\ref{q03}), where the function $\phi$ is replaced by its trigonometric version $\phi^{\rm trig}(z)=a(u)$
  (\ref{q31a}). Introduce also notation
  \beq\label{q2702}
   \begin{array}{c}
   \displaystyle{
  x=e^{2\pi\imath\eta}\,,
  }
  \end{array}
  \eq
  where $\eta$ is the parameter entering the shift operator (\ref{p_i}).

  The left hand side of an $R$-matrix
  identity (denote it as $\mF$) is a sum of products of $R$-matrices $R_{ij}^\hbar(u_i/u_j)$ and shifted $R$-matrices $R_{ij}^\hbar(u_i/(xu_j))$.
  It is easy to see from (\ref{q2701}) that any shifted $R$-matrix being considered as function of $x$ is represented in the following form:
  \beq\label{q2703}
   \begin{array}{c}
   \displaystyle{
  R^{\rm trig}_{12}(u/x)=\frac{A_{12}(u,t)}{x-u}+B_{12}(u,t)\,.
  }
  \end{array}
  \eq
  Main idea in the proof of Theorem 2 from \cite{MZ} was to show that  $\mF$ is independent of $\eta$ (or $x$). Then one can put $\eta=0$ (or $x=1$). In this case due to unitarity condition (\ref{q03}) $\mF$ boils down to the scalar case, for which $\mF|_{x=1}=0$ is known to be valid. Due to (\ref{q2703}) any product of shifted and unshifted $R$-matrices (with distinct arguments)
  has the form:
  \beq\label{q2704}
   \begin{array}{c}
   \displaystyle{
  \mF(x)=\sum\limits_{a=1}^{N^2-N}\frac{C^a}{x-u_a}+D\,,
  }
  \end{array}
  \eq
  where index $a$ enumerates all possible arguments $u_i/u_j$ ($i,j=1,...,N$, $i\neq j$), and $C^a$ - are some
  operator valued coefficients. Notice that the coefficients $C^a$ and $D$ are independent of $x$.   The absence of poles in $\mF$ at $\eta=z_i-z_j$ ($x=u_i/u_j$) was proved in
   Theorem 2 from \cite{MZ}. Although the proof was performed for elliptic $R$-matrix, it holds true (to be precise, the part of the proof which states that the poles at $\eta=z_i-z_j$ are absent holds true) for any
   $R$-matrix with the property $\res\limits_{u=1}R_{12}(u)=P_{12}$, and it is true for  (\ref{q2701}).
   In this way we conclude that $C^a=0$ for all $a=1,...,N^2-N$. Then $\mF(x)=D$, which is independent of $x$.
   Therefore, one can set $x=1$, and $\mF=\mF|_{x=1}=0$.

Let us now prove $R$-matrix identity for $k=1$ without usage of the associative Yang-Baxter equation.
In $k=1$ case the identity is as follows:
 \beq\label{q2705}
  \begin{array}{c}
 \mF= \displaystyle{
 \sum\limits_{k=1}^N\overrightarrow{\prod\limits_{i=k+1}^N} R_{ki}^{\rm trig}(u_k/u_i)
 \overleftarrow{\prod\limits_{j:j\neq k}^N} R_{jk}^{\rm trig}(u_j/(x u_k))
 \overrightarrow{\prod\limits_{l=1}^{k-1}} R_{kl}^{\rm trig}(u_k/u_l)
 -
 }
 \\ \ \\
  \displaystyle{
 -\sum\limits_{k=1}^N\overleftarrow{\prod\limits_{l=1}^{k-1}} R_{lk}^{\rm trig}(u_l/u_k)
 \overrightarrow{\prod\limits_{j:j\neq k}^N} R_{kj}^{\rm trig}(u_k/(x u_j))
 \overleftarrow{\prod\limits_{i=k+1}^N} R_{ik}^{\rm trig}(u_i/u_k)=0
 \,,
 }
 \end{array}
 \eq
or equivalently,
 \beq\label{aq2706}
  \begin{array}{c}
 \mF= \displaystyle{
 \sum\limits_{k=1}^N R_{k,k+1}\dots R_{k,N}
 \cdot R_{N,k}^{-}\dots R_{k+1,k}^{-}R_{k-1,k}^{-}\dots R_{1,k}^{-}
 \cdot R_{k,1}\dots R_{k,k-1}
 -
 }
 \\ \ \\
  \displaystyle{
 -\sum\limits_{k=1}^N R_{k-1,k}\dots R_{1,k}
 \cdot R_{k,1}^{-}\dots R_{k,k-1}^{-}R_{k,k+1}^{-}\dots R_{k,N}^{-}
 \cdot R_{N,k}\dots R_{k+1,k}=0\,,
 }
 \end{array}
 \eq
where notations $R_{ij}=R_{ij}(u_i/u_j)$ and $R^{-}_{ij}=R_{ij}(u_i/(x u_j))$ are used for shortness.

Calculate residue of $\mF$ at $x=u_i/u_j$ for $i<j$:
\beq\label{aq2707}
  \begin{array}{c}
    \displaystyle{
 \res\limits_{x=u_i/u_j}\mF=
  }
 \\ \ \\
  \displaystyle{
\res\limits_{x=u_i/u_j} R_{j,j+1}\dots R_{j,N}
 \cdot R_{N,j}^{-}\dots R_{j+1,j}^{-}R_{j-1,j}^{-}\dots R_{i+1,j}^{-}\underline{R_{i,j}^{-}}R_{i-1,j}^{-}\dots R_{1,j}^{-}
 \cdot R_{j,1}\dots R_{j,j-1}
 }
 \\ \ \\
  \displaystyle{
 -\res\limits_{x=u_i/u_j} R_{i-1,i}\dots R_{1,i}
  R_{i,1}^{-}\dots R_{i,i-1}^{-}R_{i,i+1}^{-}\dots R_{i,j-1}^{-}\underline{R_{i,j}^{-}}R_{i,j+1}^{-}\dots R_{i,N}^{-}
 \cdot R_{N,i}\dots R_{i+1,i}\,.
 }
 \end{array}
 \eq
 The underlined $R$-matrix is the one which has pole at $x=u_i/u_j$. Using the property $\res\limits_{u=1}R_{ij}(u)=P_{ij}$ and moving $P_{ij}$ to the right one obtains:
 \beq\label{aq2708}\notag
  \begin{array}{l}
= \displaystyle{
R_{j,j+1}\dots R_{j,N}
 \cdot R_{N,j}^{-}\dots R_{j+1,j}^{-}R_{j-1,j}^{-}\dots R_{i+1,j}^{-}\cdot\underbrace{ R_{i-1,i}\dots R_{1,i}}
 \cdot \underbrace{R_{i,1}^{-}\dots  R_{i,i-1}^-}\times
  }
 \\ \ \\
 \displaystyle{
 \ \times R_{ij}(u_j/u_i)R_{i,i+1}^- \dots R_{i,j-1}^- P_{ij}-
 }
 \\ \ \\
 \displaystyle{
 -R_{i-1,i}\dots R_{1,i}\cdot
  R_{i,1}^{-}\dots R_{i,i-1}^{-}R_{i,i+1}^{-}\dots R_{i,j-1}^{-}\cdot \underbrace{R_{j,j+1}\dots R_{j,N}}
 \cdot \underbrace{R_{N,j}^-\dots R_{j+1,j}^-}\times
   }
 \\ \ \\
 \displaystyle{
 \ \times R_{ij}(u_j/u_i)R_{j-1,j}^-\dots R_{i+1,j}^-P_{ij}\,
 }
 \end{array}
 \eq
Here we use the same short notations $R_{ij}$ and $R_{ij}^-$ assuming $x=u_i/u_j$, so that $R_{ik}^-:=R_{ik}(u_j/u_k)$.
The underbraced factors can be moved to the left. This provides the common factor:
\beq\label{aq2709}
  \begin{array}{c}
\res\limits_{x=u_i/u_j}\mF= \displaystyle{
R_{j,j+1}\dots R_{j,N}\cdot R_{i-1,i}\dots R_{1,i}\cdot R_{i,1}^{-}\dots  R_{i,i-1}^-\cdot R_{N,j}^{-}\dots R_{j+1,j}^{-}\times
}
\\ \ \\
\displaystyle{
\times\Big(
 R_{j-1,j}^{-}\dots R_{i+1,j}^{-}
 \cdot R_{ij}(u_j/u_i)\cdot R_{i,i+1}^- \dots R_{i,j-1}^-
 \qquad\qquad\qquad\qquad\qquad
}
\\ \ \\
\displaystyle{
\qquad\qquad\qquad\qquad\qquad
-R_{i,i+1}^{-}\dots R_{i,j-1}^{-}\cdot
 R_{ij}(u_j/u_i)\cdot R_{j-1,j}^-\dots R_{i+1,j}^-\Big)P_{ij}\,.
 }
 \end{array}
 \eq
 The last one step is to prove that the expression inside the brackets in (\ref{aq2709}) equals zero, or
 equivalently
 \beq\label{aq2710}
  \begin{array}{c}
\displaystyle{
 R_{j-1,j}^{-}\dots R_{i+1,j}^{-}
 \cdot R_{ij}(u_j/u_i)\cdot R_{i,i+1}^- \dots R_{i,j-1}^-
=
 \qquad\qquad\qquad\qquad\qquad
}
\\ \ \\
\displaystyle{
 \qquad\qquad\qquad\qquad\qquad
=
R_{i,i+1}^{-}\dots R_{i,j-1}^{-}\cdot
 R_{ij}(u_j/u_i)\cdot R_{j-1,j}^-\dots R_{i+1,j}^-\,.
 }
 \end{array}
 \eq
 By applying the Yang-Baxter equation $$R_{k,j}(u_k/u_i)
  \cdot R_{i,j}(u_j/u_i)\cdot R_{i,k}(u_j/u_k)= R_{i,k}(u_j/u_k)\cdot R_{i,j}(u_j/u_i)R_{k,j}(u_k/u_i)$$
   to the left hand side of (\ref{aq2710}) several times one obtains the right hand side of (\ref{aq2710}).


\subsection*{Acknowledgments}


We are grateful to J. Lamers and A.P. Polychronakos for useful comments and remarks.

This work was supported by the Russian Science Foundation under grant no. 19-11-00062,\\ https://rscf.ru/en/project/19-11-00062/ .


\begin{small}

\end{small}

\end{document}